%% file: cat-state-preparation.tex
\documentclass[a4paper,onecolumn,superscriptaddress,11pt%
              ,allowtoday%
              ]{quantumarticle}
\pdfoutput=1

\input{cnotsyn.sty}
\usepackage{graphicx}
\usepackage{multirow}
\usepackage{booktabs}
\usepackage{hyperref}
\usepackage{rotating}

\input{externalize-disable.tex}

\title{SpiderCat: Optimal Fault-Tolerant Cat State \newline Preparation}

\author{Andrey Boris Khesin}
\affiliation{University of Oxford, Oxford, United Kingdom}

\author{Sarah Meng Li}
\affiliation{University of Amsterdam, Amsterdam, Netherlands}
\affiliation{QuSoft, Amsterdam, Netherlands}

\author{Boldizsár Poór}
\affiliation{University of Oxford, Oxford, United Kingdom}

\author{Benjamin Rodatz}
\affiliation{University of Oxford, Oxford, United Kingdom}

\author{John van de Wetering}
\affiliation{University of Amsterdam, Amsterdam, Netherlands}
\affiliation{QuSoft, Amsterdam, Netherlands}

\author{Richie Yeung}
\affiliation{University of Oxford, Oxford, United Kingdom}
\date{}
\begin{document}

\maketitle


\begin{abstract}
	The ability to fault-tolerantly prepare $\textsf{CAT}\xspace$ states, also known as multi-qubit GHZ states, is an important primitive for quantum error correction. It is required for Shor-style syndrome extraction, and can also be used as a subroutine for doing fault-tolerant state preparation of CSS codewords. Existing approaches to fault-tolerant $\textsf{CAT}\xspace$ state preparations have been found using computationally expensive heuristics involving SAT solving, reinforcement learning, or exhaustive analysis.
	In this paper, we constructively find optimal circuits for $\textsf{CAT}\xspace$ states in a more scalable way. In particular, we derive formal lower bounds on the number of CNOT gates required for circuits implementing $n$-qubit $\textsf{CAT}\xspace$ states that do not spread errors of weight at most $t$ for $1\leq t \leq 5$. We do this by using fault-equivalent rewrites of ZX-diagrams to reduce it to a problem of characterising certain 3-regular simple graphs. We then provide families of such optimal graphs for infinitely many values of $n$ and $t\leq5$.
    By encoding the construction of optimal graphs as a constraint satisfaction problem we find explicit constructions for circuits that match this lower bound on CNOT count for all $n\leq50$ and $t \leq 5$ and for nearly all pairs $(n,t)$ with $n\leq 100$ and $t\leq 5$ or $n\leq 50$ and $t\leq 7$, significantly extending the regimes that were achievable by previous methods and improving the resource counts for existing constructions. We additionally show how to trade CNOT count against depth, allowing us to construct constant-depth fault-tolerant implementations using $O(n)$ ancilla and $O(n)$ CNOT gates.
\end{abstract}




\section{Introduction}\label{sec:intro}

As larger quantum computing devices become available, the first experimental demonstrations of fault-tolerant quantum computation have begun to emerge~\cite{google2023suppressing,google2025quantum,abobeih2022fault,sales2025experimental,zhang2025demonstrating}. These results show that, by carefully structuring unitaries and measurements, it is possible to suppress the propagation of physical errors, detect faults as they occur, and actively correct them. While in the classical setting error correction is typically confined to communication and storage, while logical operations are assumed to be essentially noise-free, in quantum computing, however, every gate, measurement, and even idle qubit is inherently noisy~\cite{errormitigation,Ferracin2024efficiently,gottesman2009introduction}. As a result, fault tolerance must be treated as a global design constraint: the entire circuit architecture must be engineered so that errors remain controlled throughout the computation~\cite{gottesman1997stabilizer,gottesmanbook,preskill1998fault,shor1996fault}.


While the foundational feasibility results for fault-tolerant quantum computing have been established for decades~\cite{biasednoise,gottesman2013fault,shorsyndrome}, it is only in recent years that fault-tolerant compilation has emerged as an active area of research, with protocols that significantly optimise individual components of fault-tolerant circuits~\cite{Chamberland2018flagfaulttolerant,Chamberland2019faulttolerantmagic,chamberland2020very,chao2018fault,forlivesi2025flag,rodatz2024floquetifying,rodatz2025fault}. A particular component that has received a lot of attention is the fault-tolerant preparation of stabiliser states for specific (small) codes. Existing approaches include reinforcement learning~\cite{Porotti2022deepreinforcement,rl}, constraint-satisfaction solvers~\cite{automatedsyn}, brute-force search~\cite{forlivesi2025flag}, and systematic analyses of error propagation and the corresponding verification circuits~\cite{gotoMinimizingResource2016,kanomata2025fault,golay}, all aimed at constructing fault-tolerant state-preparation circuits for specific, typically small, codes.

Although these methods are effective for circuits of up to roughly 30–70 qubits (depending on the approach) and under the assumption of low-weight errors, they do not readily scale, as they rely on analysing all possible errors across the entire circuit, either directly or through carefully tailored problem reductions. Here, we present a fundamentally different approach to stabiliser-state preparation, with a focus on the fault-tolerant preparation of \emph{\CAT states} (i.e.~generalised GHZ states) of the form $\left(\ket{0\cdots 0}+\ket{1\cdots 1}\right)/\sqrt{2}$. These states are a key primitive for constructing general stabiliser states~\cite{forlivesi2025flag}, implementing Shor-style syndrome extraction~\cite{bacon_shor_syndrome,shorsyndrome,Tansuwannont2023adaptivesyndrome} (including optimised variants~\cite{rodatz2025fault}), realising fault-tolerant logical operations~\cite{near_term_code_switching,ldpc_encoder}, enabling magic-state distillation~\cite{chamberland2020very,gotoMinimizingResource2016}, and serving as an important benchmark for quantum hardware~\cite{zimboras2025eu}.

\begin{table}[ht]
  \centering
  \begin{tabular}{llllll}
    \toprule
    \textbf{Method} & \textbf{CNOT count} & \textbf{Depth} & \textbf{Ancillas}& \textbf{max $t$}& \textbf{max $n$} \\
    \midrule
    FA~\cite{forlivesi2025flag} & $O(n)$ & $O(n)$ & $O(n)$ & $6$ & $24$\\
    PWW~\cite{peham2026optimizing}  & $\leq 3n$ & $O(\log n)$ & $n$ & $9$ & $19$\\
    CB~\cite{Chamberland2018flagfaulttolerant} & $O(n)$ & $O(n)$ & $O(n)$ & $5$ & $8$\\
    \midrule
    \textbf{\scriptsize Recursive}  & $n\log (t+1)+n$ & $2\log (t+1)+2$ & $\boldsymbol{\leq \frac12 n}$ & $\boldsymbol\infty$ & $\boldsymbol\infty$\\
    \textbf{\begin{tabular}[l]{@{}l@{}}\scriptsize Optimal\\ \scriptsize \quad\ \ (deep $||\,$ SAT)\end{tabular}}  & $\boldsymbol{(r_t+1)n+1}$ & $\leq (r_t+1)n+1$ & $\boldsymbol{\leq\frac{r_t}2n+1}$ & $5 \,||\, 7$ & $\boldsymbol\infty \,||\, 50$\\
    \textbf{\scriptsize Optimal (shallow)}  & $\leq \left(\frac{29r_t+26}{10}\right)n$ & $\boldsymbol{3}$ & $\left(\frac{12r_t+8}5\right)n$ & $5$ & $\boldsymbol\infty$\\
    \textbf{\scriptsize Ramanujan (some $n$)}  & $\leq 37.4n$ & $\boldsymbol{3}$ & $\leq30.4n$ & $\boldsymbol\infty$ & $\boldsymbol\infty$\\
    \textbf{\scriptsize Ramanujan (all $n$)}  & $\leq5\times10^6n$ & $\boldsymbol{3}$ & $\leq4\times10^6n$ & $\boldsymbol\infty$ & $\boldsymbol\infty$\\
    \hline
\end{tabular}
\caption{Comparison of different $n$-qubit \CAT-state preparation methods that are fault-tolerant up to faults of weight $t$. Depth is the CNOT depth, and Ancillas is the maximum number of ancilla in use at one time, with reuse allowed. The number $r_t$ is the optimal \emph{vertex ratio} given in \cref{def:vertex-ratio}. The max value of $n$ is reported for the largest $t$. In the row labelled `Optimal (deep $||$ SAT)'', we report results obtained from the graph-theoretic constructions of \cref{sec:ft-cat-graph}, as well as circuits found using constraint satisfaction as described in \cref{sec:cat-state-generation}.}
  \label{tab:asymptotics}
\end{table}


We use the framework of \emph{fault-tolerance by construction}~\cite{rodatz2025fault}, where we represent a circuit by a \emph{ZX-diagram}~\cite{coeckeInteractingQuantumObservables2008}, and then apply carefully chosen local rewrites to this diagram that preserve \emph{fault-equivalence}. In this way we get an optimised circuit that has the desired level of fault-tolerance by design. We prove a number of results using this framework.

First, we find a simple intuitive construction of $t$-fault-tolerant $n$-qubit \CAT states which requires no expensive computation to construct, while achieving gate count $O(n)$, qubit count $O(n)$, and circuit depth $O(\log t)$ (this method is labelled as `recursive' in \cref{tab:asymptotics}). This is more scalable than the \CAT-state preparation circuits of~\cite{peham2026optimizing} which have depth $O(\log n)$. Moreover, our method does not require the expensive solving of large SAT instances which restrict that method to $n\leq 47$ for $t= 5$ (or $n\leq 19$ for $t= 9$).

Then, we show that any fault-tolerant \CAT state construction can be viewed as a particular type of 3-regular graph which is `resistant' to being disconnected using a small number of edge cuts. This allows us to use graph-theoretic techniques to derive for the first time concrete non-trivial lower bounds on the required CNOT count. We obtain optimal bounds for fault weight $t\leq 5$, and a (non-optimal) bound for arbitrary $t$ using Ramanujan graph constructions.



Finally, we show how to construct graphs that meet these lower bounds in practice, producing CNOT circuits for $t$-fault-tolerant $n$-qubit \CAT states ($n \leq 100$, $t \leq 5$) with provably optimal CNOT count.
The scripts constructing these \CAT-state preparation circuits, together with the circuits, the benchmarking and visualisation scripts, and simulation data can be found in the project’s GitHub repository~\cite{cat2026}.

\cref{tab:asymptotics} summarises the improvements of our \CAT state constructions by comparing the proposed strategies with previous state-of-the-art methods~\cite{Chamberland2018flagfaulttolerant,forlivesi2025flag,peham2026optimizing} across the metrics of CNOT count, circuit depth, ancilla usage, and achievable level of fault-tolerance. It shows that for each parameter regime---minimising CNOT count, depth, or ancillae---there is a corresponding optimisation algorithm with provable guarantees. The recursive, optimal (deep/SAT), and optimal (shallow) constructions together span the trade-off landscape, allowing resources to be tuned to the target constraint while remaining asymptotically optimal. Overall, our work provides a practical toolbox that turns these trade-offs into explicit design choices.

The remainder of this paper is organised as follows. \cref{sec:preliminaries} introduces the ZX-calculus and the notion of fault equivalence used throughout the paper. \cref{sec:recursive-cat} presents a scalable and low-depth way of constructing fault-tolerant \CAT states recursively. \cref{sec:circ-extract} shows how \CAT state constructions can be characterised using 3-regular marked graphs. \cref{sec:lower-bounds} develops graph-theoretic lower bounds on CNOT count and proves optimality results. \cref{sec:cat-state-generation} gives explicit constructions and algorithms that achieve these bounds in practice, including SAT- and graph-based search methods. \cref{sec:simulation-results} then presents benchmarking results of our method against others, comparing resource overhead and simulated performance. \cref{sec:conclusion} concludes with a discussion on implications and future directions. For readability, most technical proofs are deferred to the appendix.

\section{Preliminaries}\label{sec:preliminaries}

The ZX-calculus~\cite{coeckeInteractingQuantumObservables2008,coeckePicturingQuantum2017,KissingerWetering2024Book}  is a diagrammatic language for reasoning about quantum computations. 
A ZX-diagram represents linear map as a graph of \emph{spiders}, corresponding to a particular type of tensor network, which can represent any linear map between qubits. 
In this work, however, we will restrict attention to the Pauli fragment of the ZX calculus, which is universal for circuits consisting of Pauli gates, CNOTs, state preparations and measurements~\cite{kissinger2022phase,KissingerWetering2024Book}. 
Beyond representing quantum circuits, the ZX-calculus also allows reasoning about them; it comes equipped with a set of rewrites that change the diagram while preserving the underlying linear map it represents, this allows the transformation of quantum circuits while preserving their underlying computation.

\begin{definition}
  We define a \emph{Pauli ZX-diagram} $D$ as a graph $G = (V, E)$ with phases $\alpha \colon V \to \{0, \pi\}$ and vertex types $t \colon V \to \{Z, X, In\text{(put)}, Out\text{(put)}\}$.
  \emph{Boundary} vertices, i.e.\@ input or output vertices, are restricted to be of degree one. 
  We call the vertices of type Z or X \emph{spiders} and the edges in a ZX-diagram are often referred to as \emph{legs}.
  For a ZX-diagram $D$, we denote the linear map it represents, its \emph{interpretation}, by $\intf{D}$.
  When an edge is shared by a boundary node, it is called a \emph{free edge}. 
	An \emph{internal spider} is a spider that has no free edge. 
	A \emph{boundary spider} is a spider that has at least one free edge.
\end{definition}

The interpretation function $\intf{\cdot}$ associates a particular linear map to each spider, which gets composed together by tensor products and linear map compositions so that each diagram represents some linear map between qubits. For the details see \autoref{appendix:zx-intro}.
As ZX-diagrams represent linear maps, they can express quantum circuits.
We present a translation table between common quantum gates and corresponding ZX-diagrams in \autoref{fig:mappings}. Note here that the $n$-qubit \CAT state $\ket{\CAT^n} := \ket{0}^{\otimes n} + \ket{1}^{\otimes n}$ (the scalar is ignored), a generalisation of GHZ state to $n$ qubits, corresponds to just a single $Z$-spider with $n$ legs.

\begin{figure}[!tbh]
	\begin{equation*}
		\scalebox{0.9}{\input{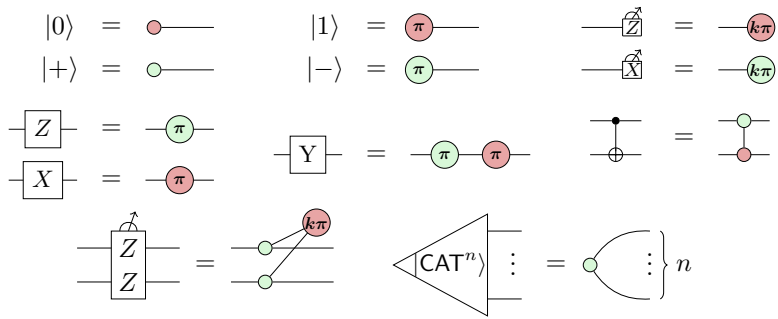}}
	\end{equation*}
  \vspace{-0.5cm}
  \caption{Mapping basic quantum states, Pauli operators, and Pauli measurements to Pauli ZX-diagrams, $n\in \N$ and $k \in \{0,1\}$. As we note in the next section, the direction of wires is irrelevant, so we can write CNOT gates using vertical wires between spiders without ambiguity. These equalities hold up to some known scalar factor, which is not relevant in this paper.}
  \label{fig:mappings}
  \vspace{-.3cm}
\end{figure}

\begin{remark}\label{rem:post-selection}
	In \cref{fig:mappings}, we write the $Z$-basis, $X$-basis, and $ZZ$-measurements using a variable $k\in\{0,1\}$ which denotes the outcome of the measurement. In this paper, we will only consider protocols for preparing \CAT states that are post-selected on the $0$ outcome, since our measurements will flag the presence of errors and we will abort the protocol if a different measurement outcome is observed. Hence, measurements in this paper will be represented by the $k=0$ branch and correspond to diagrams with all phases equal to $0$.
\end{remark}

ZX-diagrams are equipped with a set of rewrite rules known as the \emph{ZX-calculus}, which enables graphical reasoning about quantum circuits. Specifically, in a ZX-diagram, \emph{only connectivity matters} (\textsc{OCM}): their interpretation depends only on how spiders are connected, not on their placement on the page. Hence, diagrams that differ only by rearranging spiders represent the same linear map. The ZX-calculus provides rewrites that transform diagrams while preserving their interpretations. Some of these rules are shown in \autoref{fig:pauli-zx} of \cref{appendix:zx-intro}. In this work, however, we will require a refined set of rewrite rules.


In~\cite{rodatz2025fault}, Rodatz et al. introduce a finer-grained notion of equivalence for analysing how circuits behave under noise.
Their approach was motivated by the observation that faults may propagate in different ways across circuits that nevertheless implement the same operation in the fault-free case. 
When comparing and transforming circuits, it is therefore necessary to consider the effect of faults on circuits. 
We first define what we mean by faults on a ZX-diagram.
Let $\Pauli_n$ denote the group of $n$-qubit Pauli operators. 

\begin{definition}[Faults locations, faults]\label{def:fault-location}
 Let $D$ be a ZX-diagram and let $E$ be its set of edges. 
 Then we refer to $E$ as the \emph{possible fault locations} of $D$. 
 A \emph{fault} $F$ on $D$ is defined as a Pauli in $\mathcal{P}^{|E|}$, indicating the action of $F$ on each edge of $D$.
 We denote $D^F$ for the ZX-diagram where we place the corresponding Pauli faults on each edge of $D$. The \emph{weight} $\operatorname{wt}(F)$ is the number of non-identity Pauli operations in $F$.
\end{definition}

\autoref{fig:faultexamplesZX} gives an example of a diagram $D$ with a total number of $11$ fault locations (i.e.\ $11$ edges) and a weight-$5$ fault on $D$.
\begin{figure}[!tbh]
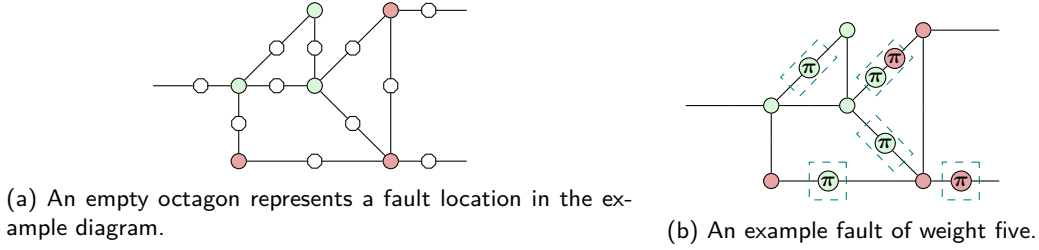

    \centering
    \begin{subfigure}{0.55\textwidth}
        \centering
        \input{figs/prelim/example-diagram-fault-locations.tikz}
        \caption{An empty octagon represents a fault location in the example diagram.}
        \label{fig:example-diagram-fault-locations}
    \end{subfigure}
    \begin{subfigure}{0.1\textwidth}
    \end{subfigure}
    \begin{subfigure}{0.35\textwidth}
        \centering
        \input{figs/prelim/example-diagram-example-fault.tikz}
        \caption{An example fault of weight five.\\}
        \label{fig:example-diagram-example-fault}
    \end{subfigure}
    \caption{An example of a diagram with a possible fault.}
    \label{fig:faultexamplesZX}
\end{figure}

One important property of a fault on a diagram is whether it is \emph{detectable}.
As noted in \cref{rem:post-selection}, we treat diagrams as post-selected on expected measurement outcomes. 
We say a fault is detectable if it makes this expected measurement outcome impossible, i.e.\@ when it makes the diagram go to zero. 
\begin{proposition}[Detectability of Faults,~\cite{rodatz2025fault}]
    \label{prop:detectability}
    Let $F \in \overline{\mathcal{P}^{|E|}}$ be a fault on a non-zero diagram $D$.
    Then $F$ is \emph{detectable} if $\intf{D^F} = 0$.
\end{proposition}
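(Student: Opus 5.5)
Since \cref{prop:detectability} essentially fixes the meaning of ``detectable'' for post-selected diagrams, I would prove it by checking that the condition $\intf{D^F}=0$ really does coincide with the operational notion: the expected measurement outcomes can no longer occur. The plan is to write the post-selected protocol as an explicit composition of linear maps and compute the probability of the accepted branch in the presence of the fault.

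\textbf{Step 1.} Using \cref{rem:post-selection} and the translation table of \cref{fig:mappings}, I would write $D$ as a circuit-like composite. It consists of state preparations, CNOTs, and projectors $\Pi_0$ onto the $k=0$ outcome of each $Z$-, $X$- or $ZZ$-measurement. Up to a known nonzero scalar, $\intf{D}$ is then the Kraus operator $K$ of the accepted branch of the protocol. Here $\intf{D}\neq 0$ by hypothesis, so that branch is possible in the fault-free case.

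\textbf{Step 2.} Next I would model the fault. Placing a Pauli on an edge of $D$ is the same as inserting that Pauli gate at the corresponding point of the circuit. This holds because Pauli spiders commute with the wire identity used by \textsc{OCM}. Consequently $\intf{D^F}$ equals, up to the same scalar, the Kraus operator $K_F$ of the accepted branch when the physical fault $F$ occurs.

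\textbf{Step 3.} By the Born rule, for any input state $\rho$ the probability of observing the all-zero outcomes under fault $F$ is $\operatorname{tr}(K_F\rho K_F^\dagger)$. If $\intf{D^F}=0$, then $K_F=0$ and this probability vanishes for every $\rho$. So whenever $F$ occurs, at least one measurement must return a non-expected outcome, the protocol aborts, and $F$ is flagged. This is exactly detection. For a state preparation the input space is trivial, and the argument specialises to the single amplitude vector.

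\textbf{Main obstacle.} The only delicate point is Step 2: identifying edge faults in an arbitrary ZX-diagram with physical faults in the underlying circuit. Edges inside spider fusions and on measurement legs are not obviously ``time slices'' of a circuit. I would handle this by unfusing every spider into a circuit form, using the Pauli rewrites of \cref{fig:pauli-zx}. I would then observe that a Pauli on any edge can be moved along the wire, through identity spiders, to a genuine circuit location without changing the interpretation. Since interpretation is all that \cref{prop:detectability} depends on, this suffices.
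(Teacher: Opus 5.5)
Your argument is correct, but there is no proof in the paper to compare it with. There, this statement is simply the \emph{definition} of detectability, imported from \cite{rodatz2025fault}. The only justification offered is the preceding sentence: $D$ denotes the branch post-selected on the expected outcomes (\cref{rem:post-selection}), so $\intf{D^F}=0$ means these outcomes have become impossible.

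Your Steps~1--3 make that sentence explicit, by identifying $\intf{D^F}$ with the accepted Kraus operator $K_F$ and applying the Born rule. What this buys is an operational justification for circuit-derived diagrams. The paper's later arguments, such as \cref{prop:undetectable-faults-boundary,prop:undetectable-fault-is-cut}, only ever use the formal condition. So the obstacle you raise in Step~2 concerns giving arbitrary diagrams an operational reading, not the statement itself. Your unfusing fix does handle it: every Pauli ZX-diagram is, up to scalar, a post-selected CNOT circuit, with bent wires becoming Bell preparations or post-selected Bell measurements, and moving a Pauli across such a bend only costs a sign.

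Your opening sentence promises that the formal and operational notions coincide. For that you need two additions. First, state the converse: if $K_F\neq 0$, some input is accepted with positive probability, by the same trace computation. Second, ``the protocol aborts'' relies on the convention that every non-expected outcome is rejected. Circuits as in \cref{def:gen-cnot} may instead accept and correct random outcomes. For those you also need a stabiliser fact: the outcomes possible under $F$ form a translate of the fault-free outcome space. Losing the all-zero outcome then means $F$ flips a deterministic parity of the outcomes (a detector), so every outcome under $F$ falls outside the fault-free support.
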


See~\cite{rodatz2025fault} for more details on detectability. Note that we can characterise undetectable faults in an alternative way that will be useful to us:
\begin{proposition}
    \label{prop:undetectable-faults-boundary}
    A fault $F$ on a non-zero diagram $D$ is undetectable if and only if there exists some fault $F'$ that only acts on the free edges of $D$ such that $D^F = D^{F'}$.
\end{proposition}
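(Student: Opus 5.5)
We prove the two directions separately. In both, $D^F = D^{F'}$ is read as equality of interpretations, $\intf{D^F} = \intf{D^{F'}}$, up to a global scalar or phase. The key structural fact is that $D$ is a Pauli ZX-diagram, so $\intf{D}$ is, up to a nonzero scalar, a (possibly post-selected) stabiliser map. Via the Choi isomorphism, or simply by bending all boundary wires into outputs using \textsc{OCM}, we may regard $\intf{D}$ as a stabiliser state $\ket{\psi}$ on its free edges.

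\emph{($\Leftarrow$).} Suppose $F'$ acts only on the free edges and $D^F = D^{F'}$. Then $\intf{D^{F'}} = P\intf{D}$ for some Pauli $P$ on the boundary qubits (composing Pauli operators on the outputs, and transposing those on the inputs). Since $P$ is invertible and $\intf{D}\neq 0$, we have $\intf{D^F} = \intf{D^{F'}} \neq 0$. Hence $F$ is undetectable by \cref{prop:detectability}.

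\emph{($\Rightarrow$).} Suppose $\intf{D^F}\neq 0$. The plan is to push every Pauli in $F$ to the boundary using the Pauli-pushing rules of the ZX-calculus, which pass a Pauli through a spider of the opposite colour by copying it onto all other legs, and through a spider of the same colour by phase absorption.
\begin{itemize}
\item For an internal edge, we could simply push the Pauli through spiders. This can produce loops or land on spiders with no further legs, so it does not obviously terminate at the boundary.
\item Instead we argue algebraically. Write $D$ as $\ket{\psi}$ (after bending) obtained by contracting a tensor product of spider states $\bigotimes_v \ket{s_v}$ with Bell-pair effects on the internal edges.
\item A Pauli on an internal edge can be moved onto one endpoint leg, i.e.\ onto some spider's leg. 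Hence $\intf{D^F} = \langle \text{caps}|\,(Q\otimes R)\,\big(\bigotimes_v\ket{s_v}\big)$, where $Q$ acts on internal legs and $R$ on free legs.
\item Since $\bigotimes_v \ket{s_v}$ is a stabiliser state, $\intf{D^F}$ is (up to scalar) again a stabiliser state of the form $R\,\ket{\psi_Q}$, where $\ket{\psi_Q}$ is $D$ with only the internal faults.
\end{itemize}

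The crux is to show that if $\ket{\psi_Q}\neq0$ then $\ket{\psi_Q}\propto P\ket{\psi}$ for some boundary Pauli $P$. This is where I expect the main obstacle. I would handle it with the stabiliser formalism. A Pauli $Q$ applied before a post-selected projection onto the cap stabiliser space $\Pi$ satisfies one of two alternatives:
\begin{itemize}
\item $Q$ anticommutes with some stabiliser of $\bigotimes_v\ket{s_v}$ that is absorbed into the projection, giving zero; or
\item $\Pi Q \ket{\phi}$ equals $\Pi S\ket{\phi}$ for some Pauli $S$ supported on the free legs, where $\ket{\phi} = \bigotimes_v\ket{s_v}$.
\end{itemize}

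To obtain the second alternative concretely, use the fact that a Pauli $Q$ maps a stabiliser state $\ket{\phi}$ with group $\mathcal{S}$ to a state with the same group up to signs. Similarly, $\Pi\ket{\phi}$ has stabiliser group $\mathcal{S}_\partial$, the restrictions to the boundary of the elements of $\mathcal{S}\cdot\langle\text{cap stabilisers}\rangle$ supported off the internal legs. The internal fault therefore only flips signs of generators of $\mathcal{S}_\partial$, consistently, or else makes the projection vanish. Any sign flip pattern on the generators of a stabiliser group of a nonzero state is realised by some Pauli $P$ acting on the boundary qubits, by independence of the generators. Hence $\ket{\psi_Q}\propto P\ket{\psi}$.

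Set $F' := R\cdot P$, translated back to free edges (transposing on input edges). This gives $\intf{D^{F'}} \propto \intf{D^F}$, completing the proof.
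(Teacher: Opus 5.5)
Your proof is correct. The easy direction is the same as the paper's: a boundary Pauli is invertible, so $\intf{D^{F'}}\neq 0$.

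For the substantive direction you reach the same conclusion by a more hands-on route. The paper stays in the Pauli-web language. It invokes the completeness result of \cite{rusch2025completeness}, which says a fault's equivalence class is determined by the Pauli webs it anticommutes with. It then observes that an undetectable fault commutes with every detecting region, picks boundary destabilisers that anticommute with exactly the same stabilising (or logical/co-stabilising) webs, and applies completeness once more to get $D^F=D^{F'}$.

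You instead unfold $D$ as the cap post-selection of the stabiliser state $\bigotimes_v\ket{s_v}$ and do the stabiliser bookkeeping directly. Under this dictionary:
\begin{itemize}
\item your elements $g\in\mathcal{S}$ whose internal part lies in the cap group are exactly the Pauli webs, and those with trivial boundary part are the detecting regions;
\item your vanishing criterion is the statement that undetectable faults commute with all detecting regions;
\item your ``same stabiliser group up to signs implies proportional'' step replaces the second appeal to completeness;
\item the destabiliser step is identical in both proofs.
\end{itemize}
Your version is self-contained, needs only standard stabiliser theory, and handles inputs uniformly by bending them into outputs. The paper's version is a few lines long because it outsources precisely these facts to the cited completeness theorem.

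If you write yours up, make explicit two facts you currently only assert. First, $\mathcal{S}_\partial$ must be the full stabiliser group of $\ket{\psi}$, of size $2^m$ for $m$ free legs; otherwise equality of groups would not force proportionality. This follows by measuring the cap generators on $\ket{\phi}$, using that the cap group is maximal on the internal legs. Second, the sign character induced by $Q$ must be well defined on $\mathcal{S}_\partial$. This holds because two preimages of the same boundary element differ by a detecting region, and $Q$ must commute with every detecting region when $\ket{\psi_Q}\neq 0$.
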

The proof of this proposition, as well as most other proofs in the main text are postponed to \cref{app:proofs}.

We can now define what it means for two diagrams or circuits to be fault-equivalent.
\begin{definition}[Fault equivalence, $w$-fault equivalence]
 Let $D_1, D_2$ be two diagrams with possible fault locations $\mathcal{F}_1$ and $\mathcal{F}_2$. 
 We say $D_1$ is \emph{$w$-fault-equivalent} to $D_2$ if and only if for all undetectable faults $F_1 \in \langle \mathcal{F}_1 \rangle$ with weight $\operatorname{wt}(F_1) < w$ there exists a fault $F_2 \in \langle \mathcal{F}_2 \rangle$ on $D_2$ with $\operatorname{wt}({F}_2) \leq \operatorname{wt}({F}_1)$ and $D_1^{F_1} = D_2^{F_2}$
  and vice versa with $(D_1,\mathcal{F}_1)$ and $(D_2,\mathcal{F}_2)$ interchanged.
 We write $D_1 \underset{w}{\FaultEq} D_2$ if $D_1$ is $w$-fault-equivalent to $D_2$.
 We write $D_1 \FaultEq D_2$ and say $D_1$ and $D_2$ are \emph{fault-equivalent} if they are $w$-fault-equivalent for all $w \in \mathbb{N}$.
\end{definition}

Intuitively, $w$-fault equivalence guarantees a correspondence between the undetectable faults of weight less than $w$ in the two diagrams. 
It ensures that for any undetectable fault on the one side, there exists an equivalent fault on the other side with at most the same weight. 
Therefore, neither diagram can achieve a specific undetectable fault with fewer fault events than the other. 
We can then define what it means when a diagrammatic construction \emph{$t$-fault-tolerantly implements} a \CAT state.

\begin{definition}[Fault-tolerant \CAT-state preparation]
    \label{def:ft-cat-state-t}
    A ZX-diagram $D$ $t$-fault-tolerantly ($t$-FT) prepares the \CAT state if it satisfies:
    \[
        \scalebox{0.8}{\input{figs/prelim/def-ft-cat-state.tikz}}
    \]
\end{definition}

Note that it is not immediately obvious that this definition recovers the usual notion of fault-tolerant \CAT-state implementation (see, e.g.,~\cite{forlivesi2025flag,peham2026optimizing}). Our model places Pauli faults on diagram edges rather than Pauli errors after circuit gates. In \cref{app:edge-flip-noise}, we discuss this in detail and prove the two definitions are equivalent for \CAT states.


Viewing fault-tolerant construction through the lens of fault equivalence leads to a new framework for fault-tolerant circuit synthesis: if a circuit is fault tolerant if and only if it is fault-equivalent to some idealised (i.e.\@ fault-free) specification, then one can start with the specification and rewrite it into a circuit. 
As long as fault equivalence is preserved at each step, we are guaranteed that the final circuit is fault-tolerant by construction. 
We present a collection of fault-equivalent rewrite rules from~\cite{rodatz2024floquetifying,rodatz2025fault} in \cref{fig:fe-rewrites} which will be useful for fault-tolerant \CAT-state preparation.

\begin{figure}[!tbh]
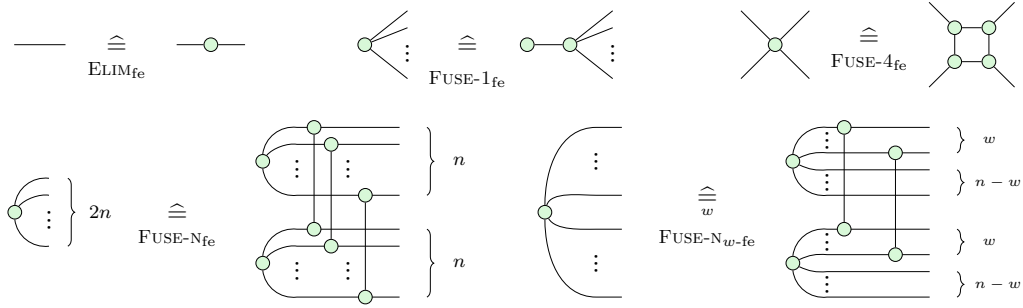

  \begin{equation*}
    \scalebox{.9}{\input{figs/prelim/fe-rewrites1.tikz}}
  \end{equation*}
  \begin{equation*}
    \scalebox{.9}{\input{figs/prelim/fe-rewrites2.tikz}}
  \end{equation*}
  \caption{A collection of fault-equivalent rewrite rules that are useful for fault-tolerant \CAT-state preparation. Here $w,n \in \N$ and $w \leq n$.}
  \label{fig:fe-rewrites}
  \vspace{-0.3cm}
\end{figure}

\section{Constructing low-depth fault-tolerant \CAT states}\label{sec:recursive-cat}

To demonstrate the utility of fault-equivalent rewrites, we will provide a simple scalable construction of $t$-FT \CAT states on an arbitrary number of qubits $n$ and fault-weight $t$ such that the gate count is $O(n)$, qubit count is $O(n)$ and circuit depth is $O(\log t)$.

First, note that the rewrite \FEwcat in \cref{fig:fe-rewrites} gives us a way to decompose a $2n$-qubit \CAT state into two $n$-qubit \CAT states connected together by $w$ transversal $ZZ$-measurements. So suppose we wish to construct $t$-FT \CAT states and that we are given some FT construction of an $n$-qubit $t$-FT \CAT state using $C$ CNOT gates. Then from \FEwcat we get a $2n$-qubit \CAT state that costs $2C+2(t+1)$ CNOT gates (each $ZZ$-measurement requires 2 CNOT gates). Doubling again gives us a $4n$-qubit \CAT state costing $2(2C+2(t+1))+2(t+1) = 4C + 6(t+1)$ CNOTs. In general we see that for a $2^kn$-qubit \CAT state we use $2^kC +2(2^k-1)(t+1)$ CNOT gates, which is linear in the number of qubits. Similarly, if the base $n$-qubit \CAT state requires $a$ ancillae, then the $2^kn$-qubit \CAT state requires $2^ka + (2^k-1)(t+1)$ ancillae. The CNOT depth of a $ZZ$-measurement is two, so after $k$ layers, the depth has increased by $2k$, which is logarithmic in the total number of qubits $2^kn$, which matches the asymptotic depth of~\cite{peham2026optimizing}. However, we can do better.

Observe that in \FEwcat only $2w$ qubits of the $2n$ total qubits are involved in a $ZZ$-measurement, leaving $2(n-w)$ qubits `free'. Then, if we use this circuit in the next doubling, we could choose to place the next layer of $ZZ$-measurements on these free qubits, since due to the symmetry between the qubits in \FEwcat it doesn't matter on which qubits we place the transversal $ZZ$-measurements. Hence, if the number of free qubits is large enough to fit this next layer of $w$ $ZZ$-measurements, we don't have to increase the depth of the circuit, but can fit it into the same layer of $ZZ$-measurements. In particular, suppose that $n\geq 2w$. Then there are $2n-2w\geq 2w$ free qubits after the first doubling. Then for the second doubling we double the number of qubits, and hence the number of free qubits, giving at least $4w$ free qubits, of which we have to use $2w$ in transversal $ZZ$-measurements, so that we are left with at least $2w$ free qubits. By induction we see that we will always have enough free qubits ($2w$) to never have to increase the depth. Hence, the CNOT depth of this construction for $t=w+1$ is a constant above the depth of the base-case $n$-qubit $t$-FT \CAT state of size $n=2(t+1)$, giving a depth of $O(\log t)$.

\begin{theorem}\label{thm:recursive-cat}
		We can efficiently construct $n$-qubit $t$-FT \CAT states using $n(1+\log (t+1)) - 2(t+1)$ CNOT gates, $n/2$ ancillae and in CNOT depth $2\log t + 2$.
\end{theorem}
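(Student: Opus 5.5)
The plan is to make the recursive doubling sketched above precise, starting from a base case of size $2(t+1)$ that is itself built by doubling. Assume first that $t+1 = 2^m$ and $n = 2^k(t+1)$; general $t$ and $n$ are handled at the end.

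\textbf{Base case.} Build a $t$-FT \CAT state on $t+1$ qubits. Any $(t+1)$-qubit \CAT state is trivially $t$-FT in the sense of \cref{def:ft-cat-state-t}, since every boundary Pauli fault is equivalent to one of weight at most $\lfloor (t+1)/2\rfloor$. Rather than treating this as a black box, I would grow it from a single qubit. Apply \FEwcat repeatedly with $w$ equal to the current size $s$, for $s = 1,2,4,\dots,2^{m-1}$. At each step, two $s$-qubit \CAT states are glued by $s$ transversal $ZZ$-measurements. Because $w = s \le t+1$, every step preserves $t$-fault equivalence.

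Each doubling costs $2s$ CNOTs, $s$ ancillae and depth $2$. However, all these measurements act on disjoint fresh qubits and can be parallelised as in the `free qubit' argument, so I would count the depth carefully. Summing over the $m$ steps, the costs are $2\sum_s s = 2(t+1) - 2$ CNOTs and depth $2m = 2\log(t+1)$.

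\textbf{Recursive step.} Now apply \FEwcat with $w = t+1$ repeatedly to reach $2(t+1), 4(t+1), \dots, n$ qubits. The $j$-th doubling uses $2^{j-1}$ copies of the gadget, each costing $2(t+1)$ CNOTs. The total over all doublings is therefore $2(t+1)(2^k - 1) = 2n - 2(t+1)$ CNOTs.

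The key point is the depth argument from the paragraph preceding the theorem. By the qubit symmetry of \FEwcat, each new layer of $ZZ$-measurements can be placed on qubits not yet touched by the previous layer. This needs the induction invariant that after each doubling at least $2w$ qubits are free; it holds once the state has $\ge 2w$ qubits. Under this invariant all these measurements fit into a constant number of extra CNOT layers, giving total depth $2\log t + 2$.

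For ancillae, each $ZZ$-measurement uses one ancilla. Ancillae can be reused once measured, and at most $n/2$ measurements are simultaneously active, since each touches two of the $n$ data qubits. This gives the bound of $n/2$.

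\textbf{Obstacle.} The main difficulty is reconciling the exact CNOT formula $n(1+\log(t+1)) - 2(t+1)$ with this accounting. The count above gives only about $2n$ CNOTs for the recursive part, so the $n\log(t+1)$ term must come from implementing the base layers across all $n/(t+1)$ blocks. Concretely, each of the $m$ base doubling levels costs $n$ CNOTs in total across blocks, contributing $n\log(t+1)$.

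I would redo the sum per level. There are $\log(t+1)$ base levels at $n$ CNOTs each, and the recursive levels contribute $n - 2(t+1)$ after accounting for the $ZZ$-measurements sharing qubits with the preparation CNOTs. I would fix the bookkeeping so that the totals match the stated formula, possibly by counting a $ZZ$-measurement on fresh qubits as preparation plus one CNOT.

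Finally, for $n$ or $t+1$ not of the stated form, I would use unequal splits in \FEwcat, which holds for any sizes $\ge w$, and round $\log$ upward. I would then check that the fault-equivalence condition $w \le$ size still holds at every step.
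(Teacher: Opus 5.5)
Your construction is essentially the paper's: full transversal doubling up to $t+1$ qubits, then gluings with $w=t+1$, the free-qubit placement, and ancilla reuse. The step where you propose to ``fix the bookkeeping'' is a genuine gap, though, and bookkeeping cannot close it.

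Your own tally is right. The base levels cost $n\log(t+1)$ CNOTs. The recursive stage consists of $n/(t+1)-1$ gluings of $t+1$ $ZZ$-measurements each. Every one of those measurements joins two qubits that already belong to \CAT states, so each needs two CNOTs, giving $2n-2(t+1)$ CNOTs. No ``fresh qubit'' discount applies there. The only such saving is at the bottom level, where a 2-qubit \CAT state takes one CNOT instead of two, and that saves $n/2$, not $n$.

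So the $ZZ$-measurement circuit uses $n(2+\log(t+1))-2(t+1)$ CNOTs. This agrees with the paper's introduction formula $2^kC+2(2^k-1)(t+1)$. The paper's proof obtains $n-2w$ for the recursive part only by writing $2(2^{k-1}-1)w$ for the sum $\sum_{j<k}2^j\cdot 2w=2w(2^k-1)$.

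The stated expression $n(1+\log(t+1))-2(t+1)$ is instead exactly the number of 3-ary $Z$-spiders in the $Z$-graph your construction produces. The bottom-level gluings leave bare wires, so the base contributes $n(\log(t+1)-1)$ spiders and the recursion contributes $2(n-(t+1))$. Extracting a circuit from that graph via \cref{prop:circuit-extract-cnot-count} gives $n(1+\log(t+1))-2(t+1)+1$ CNOTs. By \cref{lem:CNOTs-to-spider-count}, no circuit with this spider-skeleton can use fewer. That extraction's depth is governed by the spanning tree, not by your $ZZ$ layers. What you can actually prove is therefore the spider count, or $n(2+\log(t+1))-2(t+1)$ CNOTs for the low-depth circuit, but not the stated number.

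The depth is likewise asserted rather than counted. Your base already has depth $2\log(t+1)$. The gluing from $t+1$ to $2(t+1)$ qubits uses every qubit. Only after it does your free-qubit invariant let all remaining gluings share one further $ZZ$ layer. That gives depth $2\log(t+1)+4$ with $n/2$ ancillae. Alternatively it gives $\log(t+1)+3$ if consecutive layers are overlapped as in the paper's proof, but that needs $n$ ancillae. Neither achieves depth $2\log t+2$ together with $n/2$ ancillae.

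Finally, in the base case, the reason each doubling is fault-equivalent at every weight is that $w$ equals the block size, so every qubit is paired. It is not that $w\le t+1$: a partial gluing with $w<t+1$ would break $t$-fault tolerance.
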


\cref{fig:16q-ft} gives the smallest interesting examples of $t$-fault-tolerant \CAT-state preparation using the recursive approach. When $t$ increases, the measurement depth is increased accordingly.

\begin{figure}[!tbh]
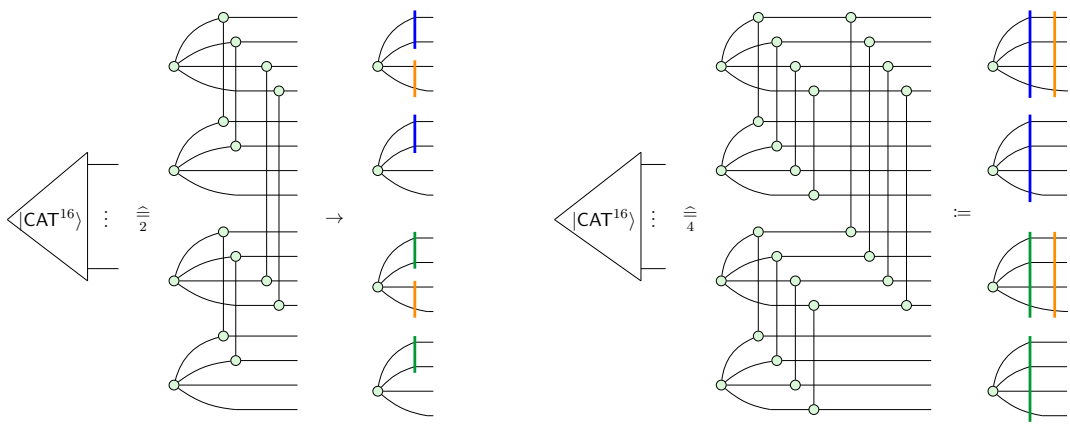

    \centering
    \begin{equation*}
         \scalebox{.65}{\input{figs/bendi/16q-ft2.tikz}} \qquad \quad
          \scalebox{.65}{\input{figs/bendi/16q-ft4.tikz}}
    \end{equation*}
    \vspace{-.4cm}
\caption{Example of a $t$-FT preparation of a $16$-qubit \CAT state for $t\in\{1,3\}$.
On the left there are enough free qubits to parallelise the $ZZ$-measurements, while on the right multiple layers are needed. Coloured lines represent the individual layers of $ZZ$-measurements coming from \FEwcat.
}
\label{fig:16q-ft}
\vspace{-.3cm}
\end{figure}

\section{Constructing \CAT states from 3-ary graphs}
\label{sec:circ-extract}

In the previous section we found a construction of fault-tolerant \CAT states from ZX-diagrams that only contain $Z$-spiders of arity 3, meaning the spider has three legs. It turns out that we can more generally consider diagrams of this type to find other construction for \CAT-state preparations.

\begin{definition}
	A \emph{$Z$-graph} is a phase-free ZX-diagram consisting solely of 3-ary $Z$-spiders with no inputs (so all boundary spiders are connected to outputs).
	The \emph{underlying graph} of the $Z$-graph is the graph consisting of a vertex for each spider, where we don't include the output wires in this graph.
\end{definition}

\begin{figure}
  \begin{equation*}
    \scalebox{0.8}{\input{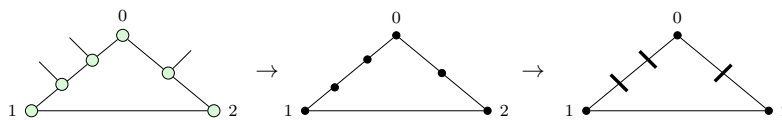}}
  \end{equation*}
  \vspace{-.4cm}
  \caption{An example of a Z-graph, its underlying graph, and its underlying marked graph.}
  \label{fig:graph-and-marked-graph}
  \vspace{-.3cm}
\end{figure}

It turns out we can analyse which $Z$-graphs can implement fault-tolerant \CAT states by considering properties of the underlying graph. As a very simple example, we can see that the underlying graph must be connected.
If the graph were disconnected, then there would be two groups of outputs across which the state would be a product state.
Any connected component which does not contain any outputs is equivalent to a scalar and can be removed.

If the $Z$-graph $t$-fault-tolerantly prepares a \CAT state for $t\geq 1$, then it turns out that each boundary vertex can only be connected to a single boundary.

\begin{lemma}\label{lem:unique-boundary}
	Let $D$ be a $Z$-graph $t$-FT preparing a \CAT state with at least 4 output wires for $t\geq 1$. Then each vertex in $D$ is connected to at most one boundary.
\end{lemma}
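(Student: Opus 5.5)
We argue by contradiction: suppose some vertex $v$ of the $Z$-graph $D$ is attached to two or more output wires. Since every spider in $D$ is 3-ary and the graph is connected with at least 4 outputs, $v$ cannot be attached to three outputs: it would then be a whole connected component with only 3 outputs. So $v$ has exactly two free edges, say to outputs $o_1,o_2$, and one internal edge $e$ connecting it to the rest of the diagram. The plan is to exhibit a weight-1 fault on $D$ that is undetectable but acts on the output like a weight-2 Pauli. Such a fault cannot be matched by any fault of weight at most $1$ on the ideal \CAT state. This contradicts \cref{def:ft-cat-state-t} for $t\geq 1$.

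The concrete fault is an $X$ on the internal edge $e$. First we check that it is undetectable, using \cref{prop:undetectable-faults-boundary}. A phase-free $Z$-spider absorbs an $X$ on one leg by turning it into $X$'s on all its other legs (the $\pi$-copy rule). Applying this at $v$, the fault $X_e$ equals $X_{o_1}X_{o_2}$ on the two free edges of $v$. Hence it is equivalent to a boundary-only fault, so it is undetectable. In particular $\intf{D^{X_e}}\neq 0$, because $D$ is non-zero.

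Next we show that this output Pauli cannot be produced by a weight-$\leq 1$ fault on the specification, which is a single $Z$-spider with $n\geq 4$ legs. On a \CAT state, $Z$-type Paulis of even weight are stabilisers, and a $Z$ on the internal structure is equivalent to a single $Z$ on an output. The only effects a single fault can have are therefore the trivial Pauli, a single $Z$, or a single $X$ (possibly with some $Z$'s attached), up to stabilisers. We compare $X_{o_1}X_{o_2}$ with these modulo the \CAT stabiliser group $\langle X^{\otimes n}, Z_iZ_j\rangle$. The $X$-part of $X_{o_1}X_{o_2}$ has support of size 2. Multiplying by $X^{\otimes n}$ changes that support to size $n-2\geq 2$. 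So it can never be made to equal an $X$-part of weight $\leq 1$. This is exactly where the hypothesis of at least 4 outputs is needed: for $n=3$, $X_{o_1}X_{o_2}$ would be equivalent to a single $X$.

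The main obstacle is being careful about what "matching" means in the definition of $t$-FT preparation. We have to confirm that $D^{X_e}$ being equal to the \CAT state with $X_{o_1}X_{o_2}$ applied is a genuinely distinct map from every weight-$\leq1$ faulted specification, not merely equal up to a scalar. This follows because distinct Pauli cosets of the stabiliser group act on the stabiliser state to give orthogonal states.
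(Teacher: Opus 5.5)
Your proposal is correct and follows essentially the same route as the paper's proof. You rule out a vertex with three outputs by connectivity, place an $X$ fault on the single internal edge so that it copies through the spider to $X_{o_1}X_{o_2}$, and note that the only $X$-type stabiliser $X^{\otimes n}$ can only turn this into an error of weight $n-2\geq 2$. You also check undetectability and the coset distinction explicitly, which the paper leaves implicit.
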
 

The underlying graph of a $Z$-graph contains both the internal and boundary vertices, where because of \cref{lem:unique-boundary} each boundary vertex is reduced to a 2-ary vertex, since its output wire is not included. An example of the relationship between $Z$-graph and underlying graph is presented in \cref{fig:graph-and-marked-graph}, where we also show the corresponding marked graph that will be introduced in \cref{sec:ft-cat-graph}.

To construct a quantum circuit implementing a \CAT state from a $Z$-graph or its underlying graph, we need some more information.
In particular, we need to know which spiders should end up on the same qubit wire and in which order they should appear on that wire.
For this it turns out to be sufficient to consider a \emph{rooted spanning tree} of the underlying graph, which is a tree containing all the vertices in the graph, where the root is a designated `starting' vertex. The direction from the root to the output wires gives the order in which spiders should appear on the same qubit.

\begin{definition}
  Let $G$ be an underlying graph of a $Z$-graph. A \emph{spider-ordering tree} for $G$ is a rooted spanning tree of $G$ where the root is a 3-ary vertex, and all the leaves are 2-ary vertices.
\end{definition}



We give an algorithm for going from a $Z$-graph with spider-ordering tree to a circuit implementing a \CAT state in \cref{proc:circuit-extraction}.

\begin{procedure}\label{proc:circuit-extraction}
To extract a CNOT circuit from a graph $G$ with spider-ordering tree $T$ we proceed as follows.
\begin{enumerate}
\item Convert the underlying graph $G$ back into a $Z$-graph $D$ by writing each 3-ary vertex as a $Z$-spider, and each 2-ary vertex as a boundary $Z$-spider, and connect the spiders in the natural way.
\item Starting from the root vertex of $T$, do the following. If there is a child which is a leaf and is a boundary spider, put this spider on the same qubit line as $T$ (i.e.~same vertical position) to the right of the parent. If there is no such a child, pick an arbitrary child and do the same. Put all the other children on different qubit lines and to the right of the parent. Label all children as `placed'. Now, for each child of the root, consider that child's unplaced children, and follow the same rule of picking one child to be put on the same qubit line, with other children being placed on different qubit lines. Since all leafs are boundary spiders, each qubit line ends in a free wire.
\item For the root spider of $T$, use $\text{Elim}_{\text{fe}}$ and $\text{Fuse-1}_{\text{fe}}$, so that now each spider has exactly one connection out of its own qubit line.
\item For each edge going out of a qubit line, introduce $X$-spiders using $\text{Elim}_{\text{fe}}$ and $\text{Fuse-1}_{\text{fe}}$ to write this connection as a CNOT either followed by a post-selection or preceded by an ancilla preparation.
\item Now interpret all the ZX-components as components of a quantum circuit following \cref{fig:mappings}.
\end{enumerate}
Each of these steps is demonstrated in \cref{fig:circ-extract-steps}.
\end{procedure}

\begin{figure}[!t]
		\begin{equation*}
			\scalebox{0.8}{\input{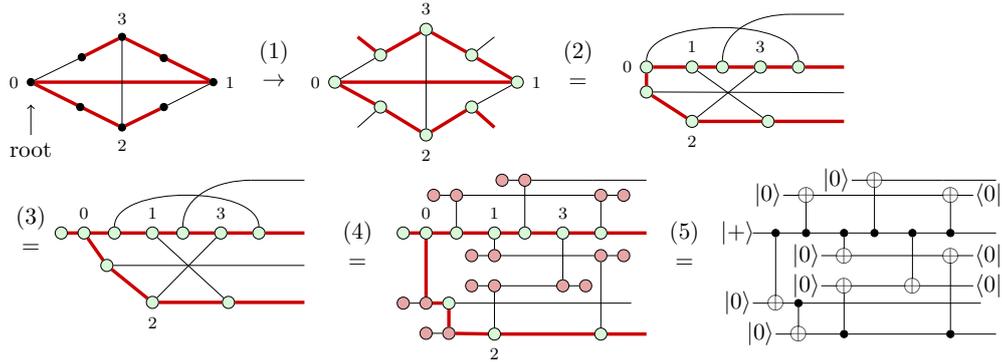}}
		\end{equation*}	
		\caption{A demonstration of all the rewrite steps in \cref{proc:circuit-extraction}. The red line denotes the tree.}
		\label{fig:circ-extract-steps}
    \vspace{-.4cm}
\end{figure}

\begin{proposition}\label{prop:circuit-extract-cnot-count}
		Let $G$ be a graph with $k$ vertices and a spider-ordering tree. Then \cref{proc:circuit-extraction} produces a CNOT circuit from it using $k+1$ CNOT gates.
\end{proposition}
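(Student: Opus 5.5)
The plan is to count CNOTs by counting the edges of $G$ that leave a qubit line, since each such edge becomes exactly one CNOT. Let $k_3$ be the number of 3-ary vertices and $k_2$ the number of 2-ary (boundary) vertices, so $k=k_2+k_3$. Counting degrees gives $|E(G)| = (3k_3+2k_2)/2$. Write $|E(T)| = k-1$ for the spanning tree. Edges of $G$ are either tree edges or non-tree edges. Tree edges are of two kinds: those joining a parent to the child placed on the same qubit line (horizontal wire segments, costing nothing), and those joining a parent to a child placed on a new line. Every non-tree edge, and every tree edge of the second kind, becomes a connection between different qubit lines.

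First I will count the qubit lines. By step 2 of \cref{proc:circuit-extraction}, each line begins at the root or at a child placed on a new line, and ends at a leaf. Hence the number of lines equals the number of leaves of $T$. It also equals $1$ plus the number of tree edges of the second kind. Therefore the number of horizontal tree edges is $(k-1) - (\#\text{lines} - 1) = k - \#\text{lines}$, and the number of cross-line edges is
\[
|E(G)| - k + \#\text{lines}.
\]

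Next I handle steps 3 and 4. At a non-root vertex $v$ on a line, two of its legs are on its own line: the edge to its parent on the line, and either its continuing child or its output wire. Its third leg leaves the line. At the root, all three legs are tree edges to children, and only one child continues the line, so two edges leave the line. Step 3 (Elim and Fuse-1) splits the root into two spiders on the same line. This adds one spider and one horizontal segment, so that every spider has exactly one external connection. Each cross-line edge then becomes, in step 4, a single CNOT: with an $X$-spider on one end plus ancilla or post-selection, it is one CNOT between the two lines. The root split does not change the number of cross-line edges, so the CNOT count equals the number of cross-line edges.

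Finally I evaluate the count. Every spider except the split root now has exactly one external leg, and each cross-line edge has exactly two endpoints among the spiders. Counting endpoints, there are $(k+1)$ spiders after the split, each with one external leg, so the cross-line edges number $(k+1)$ times $1$, divided by $2$. Writing this as $(k+1)/2$ CNOTs clashes with the statement. The resolution is that each cross-line edge becomes an \emph{ancilla}-mediated pair, not a direct edge: per \cref{fig:mappings} and step 4, a connection becomes a CNOT to an ancilla wire, which is prepared or post-selected. So every external leg of a spider corresponds to one CNOT, targeting the $X$-spider on the ancilla. That gives exactly one CNOT per spider, hence $k+1$ CNOTs.

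The main obstacle is pinning down this convention: whether step 4 turns each inter-line edge into one or two CNOTs, since direct CNOTs would give $(k+1)/2$. I would verify it against \cref{fig:circ-extract-steps} and the measurement-style gadgets, where each $Z$-spider's external leg meets a CNOT onto an ancilla line. The count is then immediate: one CNOT per $Z$-spider, $k$ original spiders, plus one from splitting the root.
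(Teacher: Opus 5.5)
Your closing count is exactly the paper's proof, and it is correct: after the root is split, each of the $k+1$ spiders has exactly one leg leaving its qubit line, and step 4 turns each such leg into one CNOT controlled by that spider. However, the edge-level bookkeeping you use to get there is wrong. The `resolution' you offer for the clash is also wrong, so as written the last step rests on a false claim.

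Cross-line edges do not all cost the same:
\begin{itemize}
\item A tree edge from a parent $u$ to a child $v$ that opens a new line costs \emph{one} CNOT. The new line is itself the ancilla: it is prepared in $\ket{0}$, targeted by $u$, and then continues into $v$. For $v$, this edge is its incoming line wire, not an off-line leg.
\item A non-tree edge between two spiders that already sit on lines costs \emph{two} CNOTs: an ancilla is prepared, targeted from both ends, and post-selected.
\item A 2-ary vertex of $G$ that is internal in $T$ has its output wire as its off-line leg. This costs one CNOT, even though that wire is not an edge of $G$ and is missing from your count $|E(G)|-k+L$.
\end{itemize}
So neither `one CNOT per cross-line edge' nor `two per cross-line edge' holds. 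Write $L$ for the number of lines; the latter gives $2(|E(G)|-k+L)=k_3+2L$. Whenever the boundary vertices are exactly the $L$ leaves of $T$, this exceeds $k+1=k_3+L+1$ by $L-1$, one for each tree edge that opens a new line. For the same reason your handshake step is invalid, since a new-line tree edge uses only one off-line leg. That step is where $(k+1)/2$ came from; the discrepancy is not a matter of CNOT convention.

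The fix is to drop the edge count entirely and argue per spider, as the paper does. A non-root spider has an incoming line wire, coming from its predecessor on the line or from its parent if it opens a new line. It also has one continuation: a same-line child, or its output wire if it is a leaf. So exactly one of its legs leaves the line. The root has no incoming wire, so two of its legs leave the line. These need not both be tree edges, contrary to your claim that all three root legs go to children. The split in step 3 produces two spiders with one off-line leg each. One CNOT per off-line leg then gives $k+1$.
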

\begin{proof}
	The ZX-diagram of $G$ has $k$ 3-ary $Z$-spiders. Each $Z$-spider introduces one CNOT gate, except for the spider corresponding to the root of the tree, which gets unfused into a pair of 3-ary spiders in step 2 of \cref{proc:circuit-extraction}. Each 3-ary spider is then involved in precisely one CNOT gate.
\end{proof}
Note that the associated $Z$-graph of $G$ has $k$ spiders, and hence that the CNOT count is exactly 1 more than the spider count. This turns out to be optimal, as we will see in the next section. Note that a connected graph always has a spanning tree. In practice we find that we can always find a spanning tree with the appropriate additional constraints.

It is further worth emphasising that any circuit or gadget used to optimally implement  $t$-FT $n$-qubit \CAT state can also optimally fault-tolerantly measure a weight $n$ stabiliser in a distance $t$ code using the noise model with faults on edges.
This is done by changing step 4 to connect the qubit line to a corresponding data qubit instead of an ancilla output.
The main line can also now start or end with such a CNOT (or CZ, for $Z$ stabilisers) connecting it to the data qubits.
This will also be fault-tolerant for CSS codes, but under more realistic circuit-level noise, more care is needed to make such a measurement fault-tolerant.

\section{Lower bounds for circuits of \CAT states}
\label{sec:lower-bounds}
We saw in the previous section that we can construct \CAT state circuits from diagrams consisting solely of $Z$-spiders. This raises the question of whether this is the best way to start building these circuits. We will show in this section that if our goal is to minimise the CNOT count that we indeed cannot do better. We do this by showing in \cref{sec:reduction-to-z-spiders} that \emph{any} CNOT circuit implementing a \CAT state gives rise to a `skeleton' of 3-ary $Z$-spiders which would allow us to construct a \CAT state using at least as many CNOT gates as using \cref{proc:circuit-extraction}. In \cref{subsubsec:opt-cnot-h}, we derive the same lower bound on CNOT count of any CNOT+Hadamard circuit.

A next question to ask is then what the minimal size is of a 3-regular graph that implements an $n$-qubit \CAT state that is fault-tolerant up to some fault weight $t$. This we will do by showing that suitable graphs must not be easily `cuttable' by removing edges in \cref{sec:ft-cat-graph}.

\subsection{Optimality of diagrams of Z-spiders}\label{sec:reduction-to-z-spiders}
Here, we show that any CNOT circuit implementing a \CAT state has gate cost at least that of a circuit produced from \cref{proc:circuit-extraction}. To do this, let's first define a general type of circuit suitable for implementing \CAT states. 

\begin{definition}
	A \emph{generalised CNOT circuit} consists of CNOT gates, ancillae preparations in the $\ket{0}$ and $\ket{+}$ states, measurements in the $Z$- or $X$-basis and Pauli $Z$ or $X$ corrections based on these measurement outcomes satisfying the following constraints:
	\begin{itemize}
		\item It may specify post-selection criteria, a list of `allowed' measurement outcomes, where we throw away the outcome of the circuit and try again if the measurement outcomes are not in the allowed set. The all 0 measurement outcome passes the post-selection criteria.
		\item It must be \emph{deterministic}, meaning that if the parity corrections are executed according to the observed measurement outcomes, that the overall linear map implemented will always be the same. The all 0 measurement outcome requires no Pauli correction.
	\end{itemize}
    \label{def:gen-cnot}
\end{definition}


\begin{procedure}
    A generalised CNOT circuit $C$ can be represented by a phase-free ZX-diagram $D$ by translating the circuit post-selected onto the all 0 measurement outcome into a diagram following the table of \cref{fig:mappings}. We refer to this diagram as the \emph{associated diagram} of the circuit. 
    \label{def:associated-diagram}
\end{procedure}

\begin{remark}
    The associated diagram $D$ contains 3-ary $Z$- and $X$-spiders (corresponding to CNOT gates) and 1-ary spiders (corresponding to ancilla preparations and postselected measurement outcomes). It is equal to the linear map the circuit deterministically implements.
\end{remark}    

\begin{procedure}
    Given a generalised CNOT circuit $C$, the \emph{spider-skeleton} $D'$ of $C$ is a phase-free ZX-diagram obtained from the associated ZX-diagram by repeatedly fusing and copying all 1-ary spiders and removing the resulting identity 2-ary spiders.
    \label{def:spider-skeleton}
\end{procedure}

By \cref{def:spider-skeleton}, the spider-skeleton only contains 3-ary spiders and hence can be seen as a 3-ary open graph with vertices labelled by either $Z$ or $X$. \cref{fig:spider-skeleton} provides examples where rewrite rules are applied locally to get rid of 1- and 2-ary spiders without reducing the fault distance. All proofs of the statements below can be found in \cref{subsubsec:opt-cnot}.

\begin{lemma}\label{lem:CNOTs-to-spider-count}
	Let $C$ be a generalised CNOT circuit that $t$-fault-tolerantly prepares an $n$-qubit \CAT state $\ket{\CAT^n}$ with $n>1$. Let $D'$ be its spider-skeleton. Then $D' \wFaultEq{t+1} \ket{\CAT^n}$. If $D'$ contains $k$ $Z$-spiders, then $C$ contains at least $k+1$ CNOT gates. 
\end{lemma}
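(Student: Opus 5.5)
The proof has two parts: fault equivalence of the skeleton, and the CNOT count.

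\emph{Fault equivalence.} Let $D$ be the associated diagram of $C$ from \cref{def:associated-diagram}. Since $C$ $t$-fault-tolerantly prepares the \CAT state, we have $D \wFaultEq{t+1} \ket{\CAT^n}$ in the sense of \cref{def:ft-cat-state-t}; here I use the equivalence with the circuit-level notion proved in \cref{app:edge-flip-noise}. I would then go through the rewrites used in \cref{def:spider-skeleton} one at a time and check that each is fault-equivalent, so that $w$-fault equivalence is preserved and by transitivity $D' \wFaultEq{t+1} \ket{\CAT^n}$. There are three kinds of rewrite.
\begin{itemize}
\item \emph{Fusing a 1-ary spider of the same colour into its neighbour.} A fault on the leg between them can be pushed onto the neighbour's other legs or absorbed, so no new low-weight undetectable fault appears.
\item \emph{Copying a 1-ary spider of the opposite colour through a 3-ary spider.} This yields two 1-ary spiders. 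A single fault on the original leg corresponds to a weight-$\le 1$ fault, or a stabiliser, on the new legs. Conversely, a fault on one of the two new legs corresponds to a fault on one of the original edges.
\item \emph{Removing an identity 2-ary spider.} This is $\text{Elim}_{\text{fe}}$ from \cref{fig:fe-rewrites}, read backwards. Merging two edges into one can only lower fault weights, since any fault on the two old edges maps to at most one fault on the new edge.
\end{itemize}
The main obstacle is the copy rule, where the number of edges changes. To handle it I would argue directly with \cref{prop:undetectable-faults-boundary}: undetectable faults on both sides are equivalent to boundary faults, and the weight bound transfers edge by edge.

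\emph{CNOT count.} Every CNOT contributes exactly one 3-ary $Z$-spider to $D$, and preparations and measurements contribute only 1-ary spiders. Fusing 1-ary spiders and removing 2-ary spiders never creates 3-ary $Z$-spiders. Copying turns a 3-ary spider into 1-ary spiders, so again none are created. Hence $k \le \#\mathrm{CNOT}$.

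To get the extra $+1$ I would count edges, or equivalently use parity. The skeleton is a connected graph of 3-ary spiders with $n$ outputs; it must be connected, since otherwise the state would be a product. The key step is to show that $D$ contains at least one CNOT whose $Z$-spider is eliminated or does not survive as a $Z$-spider in $D'$. For this, consider the qubit wire that ends in the first output. The first CNOT acting on it creates, in $D$, a spider adjacent to the initial state preparation $\ket{0}$ or $\ket{+}$. The copy or fusion rules then reduce that spider to arity $\le 2$, so it is removed.
\begin{itemize}
\item If the initial state is $\ket{+}$ (a $Z$-spider) and the qubit is the control, it fuses into that CNOT's $Z$-spider, leaving arity 2, which is then removed.
\item If the initial state is $\ket{0}$ on the control, it copies through and kills the CNOT.
\end{itemize}
Therefore at least one CNOT's $Z$-spider is lost. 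I would also rule out the degenerate case in which every qubit begins in a state that does not interact with any CNOT; this is excluded because $n>1$ forces entanglement. Combining the two parts gives $\#\mathrm{CNOT} \ge k+1$.

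I am least sure about the case analysis in this last step. A cleaner alternative would be counting: in $D'$ the $Z$-spiders and $X$-spiders form a 3-regular graph with $n$ boundary legs, and a parity count of legs against the number of CNOTs may force one $Z$-spider to be absent.
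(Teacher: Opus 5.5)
Your outline follows the paper's: associated diagram, spider-skeleton, one $Z$-spider per CNOT, then one $Z$-spider lost at the start of some wire. Two steps fail as written.

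\emph{The copy rule is not fault-equivalent}, contrary to your second bullet. Take a $\ket{0}$ preparation (a 1-ary $X$-spider) on the control of a CNOT. A single $X$ flip on the edge between the preparation and the CNOT's $Z$-spider equals $X$ on both other legs of that spider. After copying, which deletes that $Z$-spider and this edge, the same effect in general needs a weight-2 fault. Copying only deletes fault locations, so what you get is that the post-copy diagram is fault-bounded by the pre-copy one (\cref{def:fault-boundedness}, \cref{lem:copy}). Chaining then yields only the one-sided statement $D' \FaultBnd D \wFaultEq{t+1} \ket{\CAT^n}$. Your fallback of transferring weight bounds edge by edge via \cref{prop:undetectable-faults-boundary} cannot work, as this example shows. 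The missing observation is that the reverse direction is automatic for the whole diagram. $\ket{\CAT^n}$ is a single spider whose only fault locations are its $n$ free edges. The same Pauli placed on the free edges of $D'$ has the same weight and gives the same diagram, since $D'$ and $\ket{\CAT^n}$ are equal as maps.

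\emph{The $+1$ argument picks the wrong wire.} You start from an arbitrary output wire and treat only the two control cases. Suppose that wire begins with $\ket{0}$ and its first CNOT uses it as a target. Then the preparation fuses into the CNOT's $X$-spider, which drops to arity 2 and is eliminated, while the CNOT's $Z$-spider survives with arity 3. No $Z$-spider is lost. This is the generic case: in the fan-out circuit ($\ket{+}$ on one qubit, CNOTs onto the others prepared in $\ket{0}$), all but one output wire look like this.

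The wire must be chosen by a global argument. A CNOT circuit fixes $\ket{0\cdots 0}$, so if no $\ket{+}$ preparation met a CNOT, the output would be a product state. Hence some wire starts with $\ket{+}$ (a 1-ary $Z$-spider) adjacent to a CNOT. From that wire both cases remove a $Z$-spider:
\begin{itemize}
\item On a control, the $\ket{+}$ fuses into the CNOT's $Z$-spider, leaving it 2-ary, and it is eliminated.
\item On a target, the $\ket{+}$ copies through the $X$-spider and sends a 1-ary $Z$-spider along the CNOT's internal edge into the partner $Z$-spider, which then drops to arity 2. You omit this case.
\end{itemize}
Your parity alternative cannot replace this. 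The leg count in $D'$ only fixes the parity of the total number of spiders, so it cannot exclude that every $Z$-spider survives.
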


Note that the spider-skeleton of the circuit produced by \cref{proc:circuit-extraction} is exactly the original $Z$-graph. By \cref{prop:circuit-extract-cnot-count}, the CNOT count of the circuit constructed from such a $Z$-graph with a suitable spanning tree is precisely 1 more than the spider count of the circuit. As a consequence of \cref{lem:CNOTs-to-spider-count}, this is optimal.

The $Z$-graphs from the previous section only contain $Z$-spiders, while spider-skeletons can also contain $X$-spiders. It turns out that we can construct a `reduced skeleton' from a spider-skeleton only consisting of $Z$-spiders, which will generate the same \CAT state to the same level of fault-tolerance. Following \cref{def:reduced-spider-skeleton}, \cref{fig:reduced-spider-skeleton} illustrates the process of obtaining a reduced skeleton $S$ from a spider-skeleton $D'$.

\begin{procedure}\label{def:reduced-spider-skeleton}
	Given the spider-skeleton $D'$ of a \CAT state circuit $C$, we produce the \emph{reduced skeleton} $S$ in the following way:
	First find the smallest stabilising Pauli web $W$ in $D'$ of the all $X$ Pauli string on the boundary (which must exist since this Pauli string is a stabiliser of the \CAT state~\cite{rodatz2024floquetifying}). 
	Then, construct $S$ by starting with $D'$ and dropping all edges that are not highlighted by $W$ and then removing all free-floating spiders (those not connected to an output via some connected path) and removing all identity spiders of degree two using $\text{Elim}_{\text{fe}}$.
\end{procedure}

\begin{proposition}\label{prop:Zonly}
Let $S$ be the reduced skeleton of a generalised CNOT circuit $C$ implementing an $n$-qubit \CAT state with fault-tolerance up to weight $t$.
Then $S$ is a $Z$-graph which has at most as many $Z$-spiders as the spider-skeleton of $C$, and $S$ is $t$-FE to the $n$-qubit \CAT state.
\end{proposition}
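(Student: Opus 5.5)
We prove three things in turn: that $S$ is a $Z$-graph, that it has no more $Z$-spiders than $D'$, and that it is $t$-FE to the \CAT state. The first two come from Pauli-web colouring rules; the third is the real work.

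\emph{Structure of $W$ and of $S$.} By \cref{lem:CNOTs-to-spider-count}, the spider-skeleton $D'$ is a $3$-regular open diagram of $Z$- and $X$-spiders with $D' \wFaultEq{t+1} \ket{\CAT^n}$. Let $W$ be a minimal stabilising Pauli web for the all-$X$ boundary string. We use the standard web rules: at a $Z$-spider the $X$-highlighted legs have even parity, and at an $X$-spider either all legs or none carry $X$ (up to $Z$-components). Minimality should let us assume $W$ carries only $X$-type highlighting, since any $Z$-part could be removed without changing the boundary action. At a $3$-ary $Z$-spider, even parity means $W$ highlights exactly $0$ or $2$ of its legs. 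At a $3$-ary $X$-spider, $W$ highlights either none or all three legs; in the latter case it acts like a green copy of $X$ on every leg.

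Next we show that a minimal $W$ never fully highlights an $X$-spider. The plan is to argue that if it did, we could reroute the web through that spider to get a strictly smaller stabilising web. Failing that, we treat the spider as contributing degree $3$ to the highlighted subgraph and handle it in the construction below.

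Now form $S$. Dropping unhighlighted edges leaves each $Z$-spider with degree $0$ or $2$; degree-$0$ spiders float free and are removed, and degree-$2$ spiders are removed by $\text{Elim}_{\text{fe}}$. So $S$ is built from the remaining $X$-spiders of degree $3$ and boundary wires. Hence the original $Z$-spiders do not survive as $Z$-spiders of $S$; instead, the highlighted structure forms paths through $X$-nodes.

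There is a subtlety here. By colour duality, the all-$X$ stabiliser on a \CAT state corresponds to a web touching $Z$-spiders, and the resulting $3$-ary skeleton is the one whose spiders are $Z$-coloured. So I would set the bookkeeping up so that the surviving $3$-ary spiders of $S$ are exactly the spiders at which $W$ branches, and then relabel via the rule that, in the restricted highlighted subdiagram, each branching spider can be rewritten as a $Z$-spider with the same connectivity. Since every spider of $S$ comes from a distinct spider of $D'$ that is branching and fully highlighted, the count of $Z$-spiders in $S$ is at most the number of such spiders in $D'$. This gives the counting claim.

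\emph{Semantics.} We show $\intf{S}=\ket{\CAT^n}$: $S$ is a connected phase-free $Z$-graph reaching all outputs, so by spider fusion it equals a single $Z$-spider with $n$ legs. Connectivity comes from $W$ connecting every output, since each output edge is highlighted and we discard only components without outputs.

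\emph{Fault-equivalence (main obstacle).} We must show $S \wFaultEq{t+1} \ket{\CAT^n}$, which by \cref{def:ft-cat-state-t} means every undetectable fault on $S$ of weight at most $t$ equals a boundary fault of no greater weight. The plan is to lift faults from $S$ to $D'$ and use the fault-equivalence of $D'$:

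\begin{enumerate}
\item Take an undetectable fault $F$ on $S$. Map each faulted edge of $S$ to one edge on the corresponding highlighted path in $D'$, giving a fault $\tilde F$ on $D'$ with $\operatorname{wt}(\tilde F)\le \operatorname{wt}(F)$.
\item Check that $\tilde F$ is undetectable on $D'$, using the characterisation of \cref{prop:undetectable-faults-boundary}. Since $X$-faults on $D'$'s unhighlighted legs act trivially relative to the web, undetectability and the action on the boundary should transfer.
\item Apply $D' \wFaultEq{t+1}\ket{\CAT^n}$ to obtain a boundary fault of weight at most $\operatorname{wt}(\tilde F)\le\operatorname{wt}(F)$.
\end{enumerate}

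The delicate part is step 2, in particular $Z$-type faults on $S$. Removing edges of $D'$ destroys stabilising webs and so can make detectable faults undetectable. I would handle this by first reducing to $X$-faults, since $Z$-faults on a connected $Z$-graph can be pushed to the boundary at no cost in weight. For $X$-faults, a fault undetectable in $S$ is exactly one that anticommutes with no web of $S$. Each web of $S$ lifts to a web of $D'$ supported within the highlighted region (completed on the dropped edges as needed), so a fault anticommuting with no web of $S$ still has to be checked against the webs of $D'$. That transfer argument is where most of the effort will go.
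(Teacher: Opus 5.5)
\textbf{Gap 1: the Pauli-web colour rules are swapped.} For phase-free spiders, the only $X$-type stabilisers of a $3$-ary $Z$-spider are $I$ and $X^{\otimes 3}$, so $X$-highlighting there is all-or-none. The $X$-type stabilisers of an $X$-spider are the even-weight $X$-strings, so $X$-highlighting there has even parity. With the correct rules the structural claim is immediate, as in the paper. Every $Z$-spider of $D'$ keeps all three legs or none. Every $X$-spider keeps $0$ or $2$ legs and disappears, either as a floating spider or via $\text{Elim}_{\text{fe}}$. So $S$ consists of $3$-ary $Z$-spiders of $D'$, which gives the count.

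With your rules you instead conclude that $S$ is built from $3$-ary $X$-spiders, and then ``relabel'' them as $Z$-spiders. That is not a valid rewrite: it changes the linear map, since a connected $X$-graph gives the $X$-basis \CAT state. So this part does not prove the statement as written. The detour about a minimal web never fully highlighting an $X$-spider is also moot, because parity forbids it outright.

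Once $S$ is known to be a $Z$-graph, your semantic argument (a connected phase-free $Z$-graph fuses to a single $n$-legged spider) is valid and simpler than the paper's Pauli-web argument. Connectivity, however, needs more than ``we discard only components without outputs''. You must also rule out two components that both carry outputs. Each connected component of $W$ is itself a stabilising web, so one touching a proper nonempty subset $A$ of the outputs would make $X_A$ a stabiliser of the \CAT state, which it is not.

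\textbf{Gap 2: undetectability of the lifted fault is never established.} Your fault-equivalence part leaves open the one essential step: that $\tilde F$ is undetectable in $D'$. The mechanism you suggest points the wrong way. Lifting webs of $S$ to $D'$ says nothing about detecting regions of $D'$ that run through dropped edges, and those are precisely what could detect $\tilde F$.

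The paper's idea is to lift the \emph{certificate} of undetectability rather than compare detectors:
\begin{itemize}
\item Since $F$ is undetectable on $S$, there is a product $E$ of spider stabilisers ($X$ on all three legs of some $Z$-spiders of $S$) with $FE$ supported on the boundary.
\item Every $Z$-spider of $S$ is a $Z$-spider of $D'$ with all three legs present, so $E$ lifts verbatim to some $E'$ on $D'$.
\item Across an eliminated $X$-spider, an $X$ may sit on either retained leg. In $D'$ an $X$ moves freely between the legs of an $X$-spider, and two such $X$'s cancel.
\item Hence $\tilde F E'$ is supported on the same output edges, and $\tilde F$ is undetectable in $D'$ with $\operatorname{wt}(\tilde F)=\operatorname{wt}(F)\le t$.
\item The $t$-fault-tolerance of $D'$ then bounds the weight of that boundary fault (up to the all-$X$ stabiliser), and the bound transfers back to $S$.
\end{itemize}
Alternatively, you could \emph{restrict}, rather than lift, the $Z$-parts of the webs of $D'$ to $S$. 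The parity constraints survive restriction, exactly as in the paper's construction of $W_Z$ from $W'_Z$, and this gives the commutation you need. Either way, this is the step you would still have to supply.
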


\begin{theorem}\label{thm:CNOT-optimality}
	Let $C$ be generalised CNOT circuit containing $k$ CNOT gates, which $t$-fault-tolerantly prepares the $n$-qubit \CAT state.
	Then we can efficiently find a $Z$-graph containing at most $k-1$ spiders, which also $t$-fault-tolerantly prepares the $n$-qubit \CAT state.
\end{theorem}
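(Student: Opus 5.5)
The plan is to chain the three preceding results: \cref{lem:CNOTs-to-spider-count}, \cref{def:reduced-spider-skeleton} and \cref{prop:Zonly}. Start from the given circuit $C$ with $k$ CNOT gates that $t$-fault-tolerantly prepares $\ket{\CAT^n}$.

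\textbf{Step 1: build the spider-skeleton.} Form the associated diagram of $C$ as in \cref{def:associated-diagram}, and from it the spider-skeleton $D'$ of \cref{def:spider-skeleton}. This step is purely mechanical and linear in the circuit size: we fuse or copy 1-ary spiders and delete 2-ary identities. By \cref{lem:CNOTs-to-spider-count}, $D' \wFaultEq{t+1} \ket{\CAT^n}$. If $D'$ has $k'$ $Z$-spiders, the lemma also gives $k \geq k'+1$, so $k' \leq k-1$. We will treat the degenerate case $n=1$ separately: the trivial $Z$-graph (a single output wire with no spiders) works, and $k\ge 0$ suffices up to the convention on $k-1$. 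Apart from this case, we assume $n>1$ so that the lemma applies.

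\textbf{Step 2: reduce to $Z$-spiders only.} Apply \cref{def:reduced-spider-skeleton} to $D'$ to obtain the reduced skeleton $S$. By \cref{prop:Zonly}, $S$ is a $Z$-graph with at most as many $Z$-spiders as $D'$, hence at most $k' \leq k-1$ spiders. It is also $t$-fault-equivalent to the $n$-qubit \CAT state, which by \cref{def:ft-cat-state-t} means exactly that $S$ $t$-fault-tolerantly prepares the \CAT state. Combining the two inequalities gives the spider bound in the statement.

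\textbf{Step 3: efficiency.} The only step whose efficiency is not obvious is finding the smallest stabilising Pauli web $W$ of the all-$X$ boundary string. I expect this to be the main obstacle. My first approach is to note that \cref{prop:Zonly} needs only some stabilising web with the properties used in its proof, not a literally minimum one. Stabilising Pauli webs for the all-$X$ boundary Pauli form an affine space over $\mathbb{F}_2$ given by local spider constraints, so a valid web can be found by Gaussian elimination in time polynomial in $k$.

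If \cref{prop:Zonly} genuinely needs minimality, the fallback is to exploit the structure of the skeleton. An $X$-web restricted to $Z$-spiders must highlight an even number of legs, while at $X$-spiders it must highlight all or none of the legs. Minimising the weight of this web is then a parity-constrained subgraph problem, similar to a $T$-join or minimum-weight cycle-space problem. We would attempt to solve it efficiently via matching.

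Finally, dropping the unhighlighted edges, removing floating components and eliminating degree-two spiders all take linear time. Together these steps give an efficient construction of the required $Z$-graph.
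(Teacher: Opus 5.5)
Your proposal is correct and follows the paper's proof, which likewise just chains \cref{lem:CNOTs-to-spider-count} (the spider-skeleton has at most $k-1$ $Z$-spiders) with \cref{prop:Zonly} (the reduced skeleton is a $Z$-graph with no more spiders and is still $t$-FT). Your efficiency discussion goes beyond the paper, which simply asserts efficiency: your first route is sound, because the proof of \cref{prop:Zonly} uses minimality of the web only to exclude highlighted components not attached to outputs, and the procedure deletes those anyway; your fallback, however, has the red-web constraints swapped (an $X$-highlighting is all-or-none at $Z$-spiders and even at $X$-spiders).
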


By this theorem, if we have a lower bound on the size of a $Z$-graph that $t$-fault tolerantly prepares an $n$-qubit \CAT state, this will also give a lower bound on the size of any generalised CNOT circuit that $t$-fault-tolerantly prepares the $n$-qubit \CAT state. In the next section we will see how we can derive such lower bounds.

\subsection{Graphs of Z-spiders as fault-tolerant CAT states}
\label{sec:ft-cat-graph}

To find an implementation of a \CAT state that uses as few CNOT gates as possible, we should try to find a $Z$-graph with as few spiders as possible, while still retaining the desired level of fault-tolerance. Since $Z$-graphs correspond to graphs we can give a different perspective on this problem: to prepare a $t$-FT $n$-qubit \CAT state, we wish to minimise the total number of vertices. In order to see how to do this, let's consider how we can view undetectable faults on a $Z$-graph in graph-theoretic terms. Proofs of the statements below can be found in \cref{subsec:ft-construction}.

\begin{proposition}\label{prop:undetectable-fault-is-cut}
Let $D$ be a $Z$-graph with underlying graph $G = (V,E)$.
A fault $F$ on $D$ is undetectable if and only if it corresponds to a \emph{cut} of $G$: a set of edges that when removed from $G$ disconnects it into two components.
\end{proposition}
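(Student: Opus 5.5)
We prove the statement by computing $\intf{D^F}$ directly, and we use \cref{prop:detectability}: $F$ is undetectable exactly when this diagram is non-zero. We first reduce to $X$-type faults. Write each edge Pauli as a product of an $X$-part and a $Z$-part. A Pauli $Z$ on any leg of a phase-free $Z$-spider can be moved to any other leg of the same spider. The underlying graph $G$ is connected, as noted before \cref{lem:unique-boundary}, and every boundary spider carries an output wire. So the $Z$-part of $F$ can be moved along paths in $G$ onto the output wires. It therefore never makes the diagram vanish, and by \cref{prop:undetectable-faults-boundary} it is harmless; it corresponds to the empty cut. Detectability is thus governed by the set $A\subseteq E$ of internal edges carrying an $X$ or $Y$. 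An $X$ on an output wire is a boundary fault and is likewise harmless, so it can be ignored. The claim to prove becomes: $D^F\neq 0$ if and only if $A=\delta(S):=\{uv\in E : u\in S,\ v\notin S\}$ for some $S\subseteq V$. This is the edge set whose removal separates $S$ from $V\setminus S$, i.e.\ a cut in the sense of the statement, with $A=\emptyset$ as the trivial cut.

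For the computation we expand every phase-free $Z$-spider in the computational basis: it is $\sum_{b\in\{0,1\}}\ket{b\cdots b}\bra{b\cdots b}$ across its legs. Contracting the diagram then gives a sum over assignments $x\colon V\to\{0,1\}$. An internal edge $uv$ without a fault contributes $\delta_{x_u,x_v}$. An internal edge carrying $X$ contributes $\langle x_u|X|x_v\rangle=\delta_{x_u,x_v\oplus 1}$. Each output wire carries the value $x_v$ of its spider. Hence
\[
\intf{D^F}\;\propto\;\sum_{x\,:\,x_u\oplus x_v=[uv\in A]\ \forall uv\in E}\ \ket{x|_{\partial}},
\]
where $x|_\partial$ denotes the values on the boundary spiders. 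This sum is non-zero exactly when the linear system $x_u\oplus x_v=[uv\in A]$ over $\mathbb{F}_2$ has a solution.

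Such a solution $x$ exists if and only if $A=\delta(S)$ with $S=x^{-1}(1)$, so $A$ must be a cut. Conversely, given $A=\delta(S)$, both $x=\mathbf 1_S$ and its complement are solutions. This shows that $D^F$ is non-zero exactly for cuts, i.e.\ undetectability is equivalent to the $X$-support being a cut. For consistency with \cref{prop:undetectable-faults-boundary} we can also exhibit $F'$ explicitly. Place $X$ on all three legs of every spider in $S$; this is the identity, because $X^{\otimes 3}$ is a stabiliser of the phase-free $Z$-spider. Internal edges with both ends in $S$ receive $X^2=I$. The edges of $\delta(S)$ receive a single $X$, which cancels $A$. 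What remains is $X$ on the output wires of boundary spiders in $S$, so $D^F=D^{F'}$ with $F'$ supported on free edges.

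The main point needing care is not the computation but the phrasing of \emph{cut}. If $A=\emptyset$, or if removing $\delta(S)$ leaves more than two components, the statement should be read as ``$A$ is an edge boundary $\delta(S)$''. I would state this convention explicitly, and note that by connectivity of $G$ every $\delta(S)$ with $\emptyset\neq S\neq V$ does disconnect $G$.
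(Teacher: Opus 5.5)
Your proof is correct, but it takes a different route from the paper's.

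The paper stays inside the fault-equivalence framework. From an undetectable $F$ it invokes \cref{prop:undetectable-faults-boundary} and the completeness result of~\cite{rusch2025completeness} to get spider stabilisers $\mathcal S$ with $F\prod_{s\in\mathcal S}s=F'$ supported on the boundary. It takes these to be $Z$-spiders fired in the $X$-basis, and the cut is read off as the edges incident to exactly one fired spider. The converse is the same firing argument, which coincides with your explicit construction of $F'$.

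You instead contract the diagram in the computational basis. This reduces detectability to solvability of $x_u\oplus x_v=[uv\in A]$ over $\mathbb{F}_2$, i.e.\ to $A$ being an edge boundary $\delta(S)$. No cancellation can occur because every surviving term carries the same coefficient.

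What your route buys:
\begin{itemize}
\item It is self-contained: it needs only the definition of detectability and the semantics of phase-free $Z$-spiders, not the external completeness theorem.
\item It proves both directions in one stroke.
\item It actually justifies the reduction to $X$-type faults, by connectivity and sliding $Z$'s onto the outputs. The paper simply assumes this when it says $F$ consists only of $X$ faults.
\end{itemize}

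What the paper's route buys is that the bipartition arrives directly as the set of spiders to fire. That immediately says where the pushed-out boundary fault lands, which the later good-cut and mark-counting arguments rely on; you recover this too via your explicit $F'$.

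Your caveat about the meaning of ``cut'' is well taken. The paper's proof also silently uses ``edges crossing a bipartition $V=V_1\sqcup V_2$''. The literal wording fails in the other direction as well: a set $A$ whose removal leaves exactly two components, but which also contains a non-bridge edge inside one side, is not of the form $\delta(S)$. Your linear system shows such a fault is detectable, so the edge-boundary reading is the right one.
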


Hence, to show a $Z$-graph $t$-FT prepares a \CAT state, it suffices to look at faults that correspond to cuts of the graph cutting along at most $t$ edges.
As it can never hurt the fault tolerance of our circuit if a fault occurs on an output edge, we can just consider cuts on the underyling graph.


It turns out to be nice to consider a type of graph where boundary spiders are treated as a modification on an edge instead of a separate vertex.

\begin{definition}\label{def:marked-graph-diagram}
    A \emph{marked graph} $(G,M)$ is a graph $G$ with a list of marks $M$, where each edge is allowed to be \emph{marked} one or more times. A mark is represented by an orthogonal dash through that edge. Let $D$ be a $Z$-graph implementing a fault-tolerant \CAT state on at least 4 qubits. We define its \emph{underlying marked graph} as the underlying graph where we replace each boundary vertex and its incident edges by a marked edge between the two neighbours of the boundary vertex.
\end{definition}

See \cref{fig:graph-and-marked-graph} for an example of getting a marked graph from a Z-graph.
Note that by \cref{lem:unique-boundary}, each boundary vertex is connected to a unique boundary, and hence in the underlying graph it has arity 2, so that \cref{def:marked-graph-diagram} is well-defined. If boundary vertices are connected to each other in the $Z$-graph, that will result in multiple marks on the same edge regardless of the order in which we replace the boundary vertices. Since the underlying marked graph removes all vertices coming from the boundaries, it removes all the 2-ary vertices from the graph, so that the underlying marked graph is completely 3-regular.

We can easily get back a $Z$-graph from a marked 3-regular graph: replace each vertex by a 3-ary $Z$-spider, keeping the edges as connections between spiders. Second, replace each mark by a boundary $Z$-spider and replace the edge it was on with two edges coming from the new spider. Hence, instead of working with $Z$-graphs, we can just work with marked 3-regular graphs instead.

In order to $t$-FT prepare the \CAT state, after cutting the graph along $f\leq t$ edges, at least one of the two subgraphs must contain at most $f$ output wires. Otherwise, if both contain $>f$ outputs, then after pushing the faults out to the boundary, we would have a higher-weight fault than we started with. This then gives the definition for what types of graphs we are looking for.

\begin{definition}\label{def:good-cut}
	Let $D$ be a $Z$-graph and $(G,M)$ its underlying marked graph. Let $A,B\subseteq G$ be the components of $G$ induced by cutting along the edges going between $A$ and $B$ (where $A$ and $B$ can contain open edges or fragments of edges if an edge in $G$ was cut multiple times). We say this is a \emph{good cut} if at least one of $A$ and $B$ contains at most $t$ marks (as we never allow a cut at a mark or a vertex, the marks $M$ are distributed fully among $A$ and $B$). We say $(G,M)$ is \emph{$t$-robust} if every cut on $t$ or fewer edges is a good cut.
\end{definition}

This definition allows us to sum up the fault-tolerance condition as follows:

\begin{proposition}
	Let $D$ be a $Z$-graph and $(G,M)$ its underlying marked graph. Then $D$ is $t$-FE to a \CAT state if and only if $(G,M)$ is $t$-robust.
\end{proposition}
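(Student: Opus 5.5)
The plan is to unfold the definition of $t$-fault equivalence ($(t+1)$-FE in the sense of the earlier definition, i.e.\ all undetectable faults of weight at most $t$) and translate each side using \cref{prop:undetectable-fault-is-cut} and \cref{prop:undetectable-faults-boundary}. On the \CAT-state side, the single $Z$-spider has only output edges as fault locations, so its undetectable faults are exactly Paulis on the $n$ outputs. Modulo the stabilisers $X^{\otimes n}$ and $Z_iZ_j$, these have a well-defined minimal weight. The direction ``\CAT $\to D$'' is trivial, because every output edge of $D$ is also a fault location of $D$, so the same fault can be placed there with equal weight. The content is therefore the direction ``$D\to$ \CAT'': every undetectable fault $F$ on $D$ of weight $f\le t$ must be equal to some boundary fault of weight at most $f$.

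First I would reduce to $X$-type faults. A $Z$ fault on any edge of a phase-free $Z$-graph can be pushed through $Z$-spiders onto one output, since $Z$ copies through $Z$-spiders up to the connected component. By repeatedly using the fact that two $Z$'s on legs of the same $Z$-spider cancel, any $Z$ part of weight $w$ becomes a boundary $Z$ fault of weight at most $w$, and in fact of weight at most $1$ modulo stabilisers. So only the $X$ part matters. By \cref{prop:undetectable-fault-is-cut}, an undetectable $X$-fault corresponds to a cut, i.e.\ a bipartition $A\sqcup B$ with the faulty edges being those crossing it. Pushing $X$ through spiders (the $X$-copy rule on each spider of $A$) shows that $D^F = D^{F'}$, where $F'$ is $X$ on all outputs in $A$, or equivalently, using the $X^{\otimes n}$ stabiliser, on all outputs in $B$. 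Outputs correspond bijectively to marks, so the minimal boundary weight is $\min(|M_A|,|M_B|)$. Edges cut more than once (multiply-marked edges) need a short check that the count of cut edges equals the fault weight; I would handle this by working in the $Z$-graph directly, where each edge fragment is a genuine edge.

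Next come both directions of the equivalence. If $(G,M)$ is $t$-robust, any such cut with $f\le t$ edges has a side with at most $f$ marks, which gives a boundary fault of weight at most $f$. This needs the sharper form: for a cut of $f$ edges, one side has $\le f$ marks. The definition only states $\le t$, so I would argue that the intended definition (as the preceding paragraph indicates) is at most $f$. Alternatively I would derive it by applying robustness at level $f$, noting that the robustness condition is quantified over all cuts of size $\le t$, with the ``good'' threshold matching the cut size. Conversely, if some cut of $f\le t$ edges has more than $f$ marks on both sides, then the corresponding $X$ fault is undetectable by \cref{prop:undetectable-fault-is-cut}. It equals an output fault whose minimum weight over the coset $\{X_A, X_B\}$ (times $Z$-stabilisers, which do not reduce $X$-weight) exceeds $f$, so fault equivalence fails.

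The main obstacle I expect is the bookkeeping of the ``$\le f$ versus $\le t$'' threshold and the multiply-cut and multiply-marked edges. Making this precise requires fixing the precise form of $t$-robustness and checking that $X$-weight on the boundary cannot be lowered by multiplying with \CAT stabilisers other than $X^{\otimes n}$.
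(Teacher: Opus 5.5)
Your proposal is correct and matches the argument the paper relies on. The paper gives no separate proof; it combines \cref{prop:undetectable-fault-is-cut}, the observation that $Z$ components and faults on output edges are harmless, and the remark before \cref{def:good-cut} that one side of an $f$-edge cut must carry at most $f$ marks. You are also right that the good-cut threshold has to be the cut size $f$, but only that reading works: your fallback of `applying robustness at level $f$' to a literally $t$-robust graph fails, since a bridge cutting off exactly two marks counts as a good cut for $t\geq 2$ yet lets a weight-$1$ fault spread to weight $2$.
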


We are hence interested in finding graphs for which each small cut (in at most $t$ locations) is a good cut, while also trying to minimise the number of vertices. If we group marks too closely, they will be too easy to cut off, so this forces us to include a minimum number of vertices in our $Z$-graph. We will consider the ratio between the number of qubits in the \CAT state and the number of vertices in the graph. 

\begin{definition}\label{def:vertex-ratio}
		Let $D$ be a $Z$-graph and $(G,M)$ its underlying marked graph. If $G$ has $v$ vertices and $M$ contains $n$ marks, we say the \emph{vertex ratio} of $G$ is $r(G) = v/n$. Let $r_t^n$ be the minimum vertex ratio over all $t$-robust graphs with $n$ marks, and let $r_t = \liminf\limits_{n\to\infty} r_t^n$ be the \emph{optimal vertex ratio at fault-weight $t$}.
\end{definition}

\begin{proposition}
    Any generalised CNOT circuit that $t$-fault-tolerantly prepares the $n$-qubit \CAT state contains at least $n(r_t^n+1)+1$ CNOT gates. 
    \label{prop:cnot-optimal}
\end{proposition}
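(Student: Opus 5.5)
Given a generalised CNOT circuit $C$ with $k$ CNOT gates that $t$-fault-tolerantly prepares $\ket{\CAT^n}$, we chain together the results above. By \cref{thm:CNOT-optimality} we obtain a $Z$-graph $S$ with at most $k-1$ spiders that also $t$-fault-tolerantly prepares the $n$-qubit \CAT state. The claimed bound then follows from a lower bound on the number of spiders in any such $Z$-graph.

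We pass from $S$ to its underlying marked graph $(G,M)$. First assume $n\geq 4$. Then \cref{lem:unique-boundary} applies, so every boundary spider of $S$ has exactly one output. Hence the boundary spiders are in bijection with the $n$ outputs, and each becomes exactly one mark, giving $|M|=n$. The remaining internal spiders are the vertices of the 3-regular graph $G$; call their number $v$. The total spider count of $S$ is therefore $v+n$. By the proposition just above, since $S$ is $t$-FE to the \CAT state, $(G,M)$ is $t$-robust. \cref{def:vertex-ratio} then gives $v\geq r_t^n n$.

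Combining these, $k-1\geq v+n\geq n r_t^n+n$, which is $k\geq n(r_t^n+1)+1$, as claimed. Equivalently, we may invoke \cref{lem:CNOTs-to-spider-count} directly: the spider-skeleton has at least as many $Z$-spiders as the reduced skeleton $S$ by \cref{prop:Zonly}, and the circuit has at least one more CNOT than that count.

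The main obstacle is bookkeeping at the boundary, not any deep step. We must make sure that every output of $S$ sits on its own 3-ary boundary spider, so that each output contributes both a spider and a mark. The worry is a bare wire or an output attached to a spider carrying several outputs, which would break the identity $|M|=n$. For $n\geq4$, \cref{lem:unique-boundary} and the definition of a $Z$-graph (every spider is 3-ary with no inputs) settle this. We also check that $S$ is connected and has no output-free components, since these would only be removed and cannot increase the count. The small cases $n\leq 3$ are handled separately. There either $r_t^n$ is taken by convention over marked graphs that may have few or no vertices, or we verify the bound directly: for $n=2$, $r^2_t=0$ and one CNOT plus one extra CNOT suffices to give $3$. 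If needed we restrict the statement to $n\geq4$, consistent with \cref{def:marked-graph-diagram}.
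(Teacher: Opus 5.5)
Your argument is correct for $n\geq 4$ (with $t\geq 1$) and is essentially the paper's. The paper runs the same chain, phrased as a contradiction: apply \cref{thm:CNOT-optimality}, subtract the $n$ boundary spiders, and compare with the minimum ratio $r_t^n$. It also tacitly uses the count ``spiders $=v+n$'', which you justify via \cref{lem:unique-boundary}. That lemma needs $t\geq 1$ as well as $n\geq 4$, so state both hypotheses.

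The one thing to correct is your small-case remark. The bound cannot be ``verified directly'' for $n=2$, because it is false there. A single CNOT prepares a Bell pair, and this is $t$-FT for every $t$, since every two-qubit Pauli is stabiliser-equivalent to one of weight at most $1$. Its skeleton is a bare wire with no spiders, so the $v+n$ count fails. Likewise $n=3$ needs only two CNOTs against a claimed bound of at least $4$. For $t=0$, the usual $n-1$ CNOTs also beat the bound. So restricting to $n\geq 4$ and $t\geq 1$, as the paper implicitly does through \cref{def:marked-graph-diagram}, is necessary rather than an optional convenience.
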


As we will see, for many values of $n$ and $t$, the vertex ratio $r_t^n$ is often \emph{equal} to $r_t$, showing that this definition gives a useful estimate for computing resource requirements for \CAT states for a given fault-distance.

We can also use these bounds to say something about the number of measurements needed in the circuit.
In a connected 3-regular graph with $V$ vertices and $E$ edges, we have $2E=3V$.
We can then also compute the \emph{cyclomatic number} of the graph, the number of linearly independent loops, as $E-V+1=\frac{V}2+1$.
This number precisely measures the number of independent detecting regions we have and thus is a lower bound on the number of flags or measurements that we need to perform in our extracted circuit.
Thus, if we have a graph with $n$ marks and a vertex ratio of $r_t=V/n$, we will require $\frac{r_t}{2}n+1$ flags at a minimum. It is however not yet clear whether this is also a lower bound across all circuits, as \cref{thm:CNOT-optimality} only shows that Z-graphs are just as good as generalised CNOT circuits when it comes to low CNOT count, but doesn't say anything about the number of measurements in the original CNOT circuit.

We can calculate $r_t$ exactly for several small values of $t$.
For $t=1$ the calculation of the optimal vertex ratio $r_1$ is simple. We just consider the graph with no vertices and a single edge that makes a loop.
Note that this loop is not a real edge, since it is not between two internal vertices and would violate $2E=3V$, but we can treat it as part of edges between boundary vertices and add marks to it anyway.
This loop cannot be cut into two components by a single fault, meaning that we can put any number of marks on this edge and it is still $1$-robust. On the level of the $Z$-graph this corresponds to:
\[
\scalebox{.8}{\input{figs/ft-rewrite/cycle.tikz}}
\]
We conclude that $r_1 = 0$, since the underlying graph contains no vertices.

For higher $t$ the discussion becomes more complicated.
It is clear that certain properties are desirable for optimizing the vertex ratio. If the graph has a small `nonlocal' cut, which cuts the graph into two roughly equally-sized parts, then each can only contain a limited number of marks, limiting the vertex ratio. Hence, the graph should be well-connected, preventing such cuts, only allowing local parts of the graph to be cut off by a small cut.

\begin{definition}\label{def:small-big-cut}
A \emph{local cut} cuts a diagram into connected components, with at least one of them being a tree. A \emph{nonlocal cut} cuts a diagram into connected components, with none of them being a tree.
\end{definition}  

The best graphs for constructing fault-tolerant \CAT states are those that do not contain small nonlocal cuts.

\begin{remark}\label{rem:no-cycles}
If a graph admits no nonlocal cuts of size up to $t$, it cannot have any cycles of length $t$ or fewer, as otherwise the at most $t$ edges leaving the cycle could be cut, which would cut off the vertices in the cycle, making a big cut. Hence, the \emph{girth} of the graph, the length of the shortest cycle, must be at least $t+1$.
\end{remark}

\begin{lemma}
For $t=2$, the optimal vertex ratio is $r_2 = \frac13$.
\label{lem:vertex-ratio-r2}
\end{lemma}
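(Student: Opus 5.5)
The plan is to prove the two inequalities $r_2 \geq \frac13$ and $r_2 \leq \frac13$ separately. The lower bound comes from counting how many marks a single edge can carry. The upper bound comes from an explicit family of $2$-robust marked graphs in which every edge carries exactly two marks. Throughout I work with underlying marked graphs and the notion of $t$-robustness from \cref{def:good-cut}.

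\textbf{Lower bound.} Let $(G,M)$ be $2$-robust with $n \geq 6$ marks, and suppose some edge $e$ carries at least three marks. Cut $e$ at two points so that the segment between the cuts contains exactly three marks. This is a cut on two fault locations (the same edge cut twice is permitted in \cref{def:good-cut}). It separates that segment from the rest of the graph. The segment contains $3 > 2$ marks and the remainder contains $n-3 \geq 3 > 2$ marks, so the cut is not good, contradicting $2$-robustness. Hence every edge carries at most two marks, and so $n \leq 2|E|$.

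The same argument handles the degenerate case $v=0$, where the graph is a single loop. Two cuts on the loop split it into two arcs, so for $n\geq 6$ there is an arc with $3$ marks whose complement also has at least $3$. Thus such graphs only exist for bounded $n$ and do not affect the $\liminf$.

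For $v>0$, $G$ is a connected $3$-regular (multi)graph, so $2|E| = 3v$. This gives $n \leq 3v$, i.e.\ $r(G) = v/n \geq \frac13$ for all $n\geq 6$. Hence $r_2^n \geq \frac13$ for large $n$, and therefore $r_2 \geq \frac13$.

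\textbf{Upper bound.} Take any $3$-edge-connected cubic graph $H$ on $v$ vertices. For concreteness, take $K_4$ for $v=4$ and prism or Möbius-ladder graphs $C_{v/2}\times K_2$ for even $v\geq 6$; these are all $3$-edge-connected. Place exactly two marks on every edge, giving $n = 2\cdot\frac{3v}{2} = 3v$ marks.

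To check $2$-robustness, consider any cut using at most two fault locations. If the locations are on two distinct edges, or a single location on one edge, then $3$-edge-connectivity implies the graph stays connected, so there is nothing to check. The only remaining case is two locations on the same edge. That cut isolates the segment between them, which contains at most the two marks of that edge, so the cut is good. Hence $(H,M)$ is $2$-robust with ratio exactly $\frac13$, and since such graphs exist for infinitely many $n=3v$, $r_2 \leq \frac13$.

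\textbf{Main obstacle.} The delicate point is making sure the cut model matches \cref{def:good-cut}: cuts may hit the same edge twice and may fall between marks on an edge, and both sides must be counted correctly. The lower bound relies on double cuts of one edge being legal. The upper bound relies on $3$-edge-connectivity ruling out every genuinely disconnecting cut of size at most two other than a double cut of a single edge. I would double-check that bridges and $2$-edge cuts are indeed absent in the chosen family; this is standard for $K_4$ and prisms, and I would verify it rather than just assert it.
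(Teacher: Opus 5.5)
Your proof is correct and follows the paper's argument. The lower bound uses the same observation: an edge carrying three marks can be cut twice to isolate three marks while the rest keeps at least three; the paper phrases this as a pigeonhole contradiction, you as the direct bound $n\le 2|E|=3v$. The upper bound is the same idea of putting two marks on every edge of a cubic graph with no $2$-edge cut; your explicit handling of the vertex-less loop and your use of $3$-edge-connectivity, with $K_4$ and prisms as concrete instances, are small tidy-ups of the paper's version rather than a different route.
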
   
\begin{proof}
First we show that $r_2 \geq \frac13$. Suppose towards contradiction that $r_2 < \frac13$ and consider a graph containing at least 6 marks. There must then on average be more than three marks per vertex, and since in a 3-regular graph, there are three edges for every two vertices, there must on average be more than two marks per edge. By the pigeonhole principle, there exists an edge that has three marks.
We then cut this edge `twice', one on the left of the marks, and once on the right on the marks. We are then left with a subgraph consisting of a single open edge and 3 marks, which does not satisfy the good cut property, i.e.~there is a weight 2 fault that propagates to a weight 3 fault.


Next, we show that $r_2 \leq \frac13$ by giving an explicit construction of a family of graphs. 
\begin{equation}
    \scalebox{0.8}{\input{figs/flag-by-construction/c2-construction.tikz}} \qquad \quad 
    \scalebox{0.7}{\input{figs/flag-by-construction/c2-construction-graph-only.tikz}}
\label{eq:c2-construction}    
\end{equation}
We see that there are no 2 different edges that can be chosen so that the graph is cut. 
As there are two marks on each edge, we prevent the problem of the bad cut above.
There is one vertex for every three marks, so $r_2 \leq \frac13$.
Together with the above, we conclude $r_2=\frac13$ and that this bound can be achieved exactly for infinitely many numbers of marks.
\end{proof}  

We can prove similar results on $r_t$ for larger $t$, though the proofs become increasingly complicated as larger cuts have to be considered.

\begin{theorem}\label{thm:vertex-ratio}
	The following are the optimal vertex ratios for various $t$: $r_1 = 0$, $r_2 = \frac13$, $r_3 = \frac23$, $r_4 = \frac56$ and $r_5 = 1$. Furthermore we have $r_\infty\coloneqq\liminf\limits_{n\to\infty}\sup_tr_t^n \leq 12$.
	Moreover, for each of these cases there exists an infinite family of graphs that achieves these optimal ratios.
\end{theorem}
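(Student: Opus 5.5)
The plan is to treat each $t$ separately. For each, I would prove a lower bound $r_t\geq c_t$ by a local counting or discharging argument on marks, and then match it with an explicit infinite family of $t$-robust marked 3-regular graphs. The cases $r_1=0$ and $r_2=\frac13$ are already handled above by the loop construction and \cref{lem:vertex-ratio-r2}.

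\emph{Lower bounds.} For the lower bounds I would generalise the argument of \cref{lem:vertex-ratio-r2}, looking at small connected subgraphs $H$ of $G$. Cutting the $\partial(H)$ edges leaving $H$ (an edge with both ends in $H$ may also be cut twice around its marks) gives a cut of size $f\le t$. Since $H$ is small and the rest of the graph carries many marks, $t$-robustness forces $H$ to carry at most $f$ marks. The relevant shapes are the following.
\begin{itemize}
\item A single edge segment, cut with $f=2$: it carries at most 2 marks, and an edge cut with $f=1$ is impossible once $t\ge 2$.
\item A single vertex with its three half-edges, with $f=3$: the marks on the half-edges around that vertex number at most 3.
\item For larger $t$, a path of two adjacent vertices with $f=4$, a path of three vertices with $f=5$, and so on. A tree on $k$ vertices has $k+2$ boundary edges, so it may carry at most $k+2$ marks.
\end{itemize}
I would then set up a weighting in which each mark is charged fractionally to the vertices at the ends of its edge, half to each end. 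The local constraints then bound the charge each vertex receives by $1/c_t$. For example, for $t=3$ a vertex should receive at most $3/2$, giving $r_3\ge \frac23$.

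A second issue is that for larger $t$ one must also account for marks on edges whose segments can be cut individually. For $t=4$ the constraint is mixed, and I expect the bound $\frac56$ to come from combining the edge constraint (at most 2 marks between consecutive cut points) with the two-vertex constraint via LP duality over the local configurations. In practice I would find the dual weights by solving that small LP and then verify them case by case.

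\emph{Constructions.} For the upper bounds I would exhibit families of high-girth, highly edge-connected 3-regular graphs, with marks distributed evenly so that every edge or vertex neighbourhood meets the local caps tightly. By \cref{rem:no-cycles} the girth must be at least $t+1$. Candidates are the following.
\begin{itemize}
\item For $t=3$: prisms or Möbius ladders with girth $\ge4$ and cyclic edge-connectivity 3, carrying 3 marks per 2 vertices.
\item For $t=4$ and $t=5$: generalised Petersen graphs or similar circulant cubic graphs of girth $\ge 5$ or $\ge 6$ and cyclic edge-connectivity at least $t$, with a periodic mark pattern achieving $6/5$ and $1$ marks per vertex respectively.
\end{itemize}
Robustness then reduces to two claims. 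First, every cut of size $\le t$ is local, because it isolates a tree. Second, any tree isolated by such a cut carries at most as many marks as the cut size, which follows from the periodic pattern by a direct check.

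For $r_\infty\le 12$ I would use cubic Ramanujan, or more generally expander, families with girth growing logarithmically. I would put a bounded number of marks per edge and use the edge expansion $|\partial S|\ge h|S|$ to show that any cut of size $f$ separates a side with at most roughly $f/h$ vertices. That side carries at most $f$ marks once the mark density is small enough relative to $h$; this ratio is what yields the constant 12.

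\emph{Main obstacle.} I expect the hardest step to be the matching lower bounds for $t=4,5$. There, cuts can isolate many different small trees and doubly-cut edges, and the discharging must be tight against exactly these configurations. It is also delicate that nonlocal cuts do not help the lower bound, so only local configurations can be used. I would first try to prove $r_5\ge1$ by showing that every vertex together with its incident half-edges receives at most one mark's worth of charge, since a vertex with its three neighbours forms a star tree with 6 boundary edges.
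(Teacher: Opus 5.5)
Your lower-bound argument for $t\le 3$ matches the paper. The lower bound $r_5\ge 1$ is where the plan breaks.

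The star formed by a vertex and its three neighbours is a tree on four vertices. It therefore has $6$ boundary edges and can only be cut off by a $6$-cut, about which $5$-robustness says nothing. Averaging `at most $6$ marks per star' (each mark lies in $6$ stars) would prove $r_6\ge 1$, not $r_5\ge 1$.

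At $t=5$ the admissible cuts only isolate edge segments and trees on at most three vertices, and these constraints do allow local excess. For example, mark $AB$ together with both outer edges at each of $B$, $C$ and $D$, and nothing else nearby. This passes every such cut, yet $B$ has three marks around it (so it receives $\frac32$ under half-charging) and the star at $A$ carries $7$ marks.

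Nor can any fixed-weight LP over these shapes reach $1$. In a triangle-free cubic graph each mark lies in the $7$-edge neighbourhood of exactly $14$ of the $3v$ three-vertex paths, so averaging the path constraint gives only $14n\le 15v$. The vertex and two-vertex constraints give the weaker $2n\le 3v$ and $5n\le 6v$.

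What is missing is a configuration-dependent discharging step, which is what the paper provides:
\begin{itemize}
\item If $A$ has three marks around it, then along each edge $AB$, one of $B$ or the two further neighbours $C,D$ of $B$ has at most one mark around it. This is forced by the $4$-cut around $\{A,B\}$ and the $5$-cuts around the paths $A$--$B$--$C$, $A$--$B$--$D$ and $C$--$B$--$D$.
\item Two three-mark vertices are never at distance $2$.
\item A low vertex is never claimed twice through the same edge.
\end{itemize}
Hence there are at most as many three-mark vertices as vertices with at most one mark. The average is then at most $2$ marks per vertex, and $v\ge n$. For $t=4$, by contrast, no LP is needed: averaging the two-vertex tree constraint uniformly over all edges gives $5n\le 4\cdot\frac32 v$ directly.

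The $r_\infty\le 12$ step also has a gap. Knowing that the smaller side $S$ of an $f$-cut has at most $f/h$ vertices bounds its marks by $f$ only if the mark load is at most $h$ per vertex on \emph{every} such $S$, not merely on average, because $S$ is chosen adversarially. With marks on the edges of a cubic graph the load is necessarily uneven: a vertex on a marked edge already carries $\frac12$, whereas the Ramanujan Cheeger bound for $d=3$ is only $\frac{3-2\sqrt2}{2}\approx 0.09$. The constant $12$ is also asserted rather than derived.

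The paper's device is to use a high-degree expander so that outputs can be attached uniformly, and then make the graph cubic by rewriting. In a $10$-regular Ramanujan graph ($q=9$) one has $h\ge 5-3=2$. Attaching exactly two outputs to every vertex then puts at most $2|S|\le h|S|\le f$ outputs on the small side of any $f$-cut. Each resulting $12$-legged spider is fault-equivalently decomposed into $24$ internal three-legged spiders, giving the ratio $24/2=12$.

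Finally, two points about your constructions. The requirement is cyclic edge-connectivity at least $t+1$ (no nonlocal cut of size $\le t$), not `$3$' or `at least $t$'. For $t=4$, the ratio $\frac56$ is attained only if every two-vertex tree carries exactly $4$ marks; with single marks, the unmarked edges must form an efficient edge dominating set. A generic generalised Petersen graph need not admit one, so the marking must be exhibited explicitly, as the paper does with its $10$-vertex fragment glued into the Petersen/dodecahedron family.
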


A bound on $r_\infty$ is achieved by constructing a Ramanujan graph of degree 10, adding two output edges to each vertex, and then fault-equivalently rewriting the degree 12 vertices to \CAT states with 24 degree-3 vertices. From this, $r_\infty\leq\frac{24}2=12$. Since there is a family of such Ramanujan graphs of increasing size $V_k$ which have at most a constant size difference $V_{k}-V_{k-1} = O(1)$ between them, this also gives us a way to construct \CAT states up to arbitrary levels of fault-tolerance by `rounding up' to the nearest graph in this family.

\begin{corollary}\label{cor:constant-CNOT-overhead}
    There exists a constant $C$ such that any $n$-qubit \CAT state can be constructed fault-equivalently using at most $C\cdot n$ three-legged spiders. In particular, we can construct $t$-FT $n$-qubit \CAT states with $t$ arbitrarily high using $O(n)$ CNOT gates.
\end{corollary}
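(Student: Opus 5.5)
The plan is to derive the corollary from the Ramanujan-graph discussion following \cref{thm:vertex-ratio}. We then pass to circuits using \cref{prop:circuit-extract-cnot-count}.

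\textbf{Step 1: a good family.} Fix the degree-10 Ramanujan family $\{H_k\}$ referred to above, with vertex counts $V_k$ satisfying $V_k - V_{k-1} \le c_0$ for a constant $c_0$. For any $n$, choose the smallest $k$ with $2V_k \ge n$, so that $2V_k \le n + 2c_0$. Attach two output legs to each vertex. This gives a diagram of $V_k$ degree-12 $Z$-spiders. Each of these is rewritten, as in the text, into $24$ three-legged spiders. The result is a $Z$-graph with $24V_k \le 12n + 24c_0$ three-ary spiders and $2V_k$ outputs. Its fault-tolerance comes from the expansion (Ramanujan spectral gap) argument that gives $r_\infty \le 12$ in \cref{thm:vertex-ratio}, which I assume. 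The key property used is that every cut has at most as many marks on its smaller side as edges cut, for cuts of any size. This gives fault-equivalence to $\ket{\CAT^{2V_k}}$ for every $t$, independent of $t$.

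\textbf{Step 2: rounding down to exactly $n$ outputs.} There are $2V_k - n \le 2c_0$ surplus outputs. I would remove them by post-selecting each surplus output on $\ket{+}$ (a 1-ary $Z$-spider). This effect fuses into its neighbouring $Z$-spider, which is removed as an identity. The state produced is then $\ket{\CAT^n}$.

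I need this to preserve fault-equivalence. In marked-graph terms, it deletes marks. Deleting marks can only shrink the mark count on either side of any cut, so every good cut remains good, and $t$-robustness is preserved for all $t$. This is exactly the main obstacle: checking that mark deletion is legitimate in the robustness characterisation. That characterisation is stated as an if-and-only-if for $Z$-graphs, so I would simply re-verify the marked-graph picture after fusion. As a fallback, I would argue directly that an undetectable fault, pushed to the boundary, has weight at most the original weight.

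\textbf{Step 3: counting and conclusion.} The total spider count is at most $12n + 24c_0 \le C n$ for $C = 12 + 24c_0$ (using $n \ge 1$). For the CNOT claim, take a spanning tree with a 3-ary root and 2-ary leaves. Small constant-size patching at the leaves can be absorbed into the constant, and the construction section indicates such trees are available. Then \cref{prop:circuit-extract-cnot-count} yields a circuit with at most $Cn + 1 = O(n)$ CNOTs, for any $t$.
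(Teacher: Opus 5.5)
Your route is the same as the paper's: take the Ramanujan family with two outputs per vertex, round $n$ up to an available size, replace each degree-$12$ spider by a constant-size gadget, and count spiders. The gap is in Step~1. You assume the vertex counts satisfy $V_k-V_{k-1}\le c_0$. The main-text remark suggests this, but it is false for the family actually cited. Morgenstern's graphs have $V_k=q^{6k}-q^{2k}$, so consecutive gaps grow geometrically in $k$ and no such $c_0$ exists. Your estimate $2V_k\le n+2c_0$ therefore fails, and the number $2V_k-n$ of surplus outputs can be a large constant multiple of $n$ rather than $O(1)$. What does hold, and what the paper's proof of the corollary uses, is a bounded \emph{ratio}: $V_{k+1}/V_k=q^2(q^{4k+4}-1)/(q^{4k}-1)\le q^6+q^2$, which equals $531522$ for $q=9$. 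The paper rounds $n$ up by a constant \emph{factor}, not a constant amount.

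The repair is easy, and the rest of your argument survives it. Take the smallest $k$ with $2V_k\ge n$. For $k\ge 2$ you have $2V_{k-1}<n$, hence $2V_k<(q^6+q^2)\,n$. The paper's $12$-qubit gadget uses $26$ three-legged spiders per Ramanujan vertex ($24$ internal), not $24$ in total, though this does not matter for $O(n)$. The spider count is then below $13(q^6+q^2)\,n$, and the finitely many $n\le 2V_1$ are absorbed into $C$.

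Your Step~2 never needed the number of deleted marks to be bounded. Removing any set of marks keeps every good cut good, so $(G,M')$ stays $t$-robust for every $t$, and the resulting $Z$-graph is fault-equivalent to $\ket{\CAT^n}$. This makes explicit the ``round up, then remove marks'' step that the paper leaves implicit.

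Likewise, your patching remark in Step~3 is the right way to avoid relying on the paper's empirical claim that spider-ordering trees can always be found. A tree leaf that is an internal $3$-ary spider can be unfused exactly as the root is in the extraction procedure, at one extra CNOT per such leaf. Hence any spanning tree rooted at a $3$-ary vertex gives $O(n)$ CNOTs.
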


The way we prove the optimal vertex ratios for $t=3,4,5$ is to view the 3-regular graphs as being built out of trees that are glued together where marks can only occur at the glueing points. As $t$ increases, a larger number of trees has to be considered. We conjecture that the optimal graphs continue to be built in this `glued trees' manner, which gives us a conjecture on the value of $r_t$ as $t\rightarrow \infty$. For the details see \cref{app:vertex-ratio-conjecture}.

\begin{conjecture}
    The optimal vertex ratio $r_t$ as $t$ increases tends to $r_t = 2$.
\end{conjecture}

In \cref{sec:circ-extract} we saw that we can turn a 3-regular graph into a circuit by identifying a Hamiltonian path.
This puts all the vertices on a single line, creating a deep circuit.
We can create a much shallower circuit by parallelising the vertices onto distinct qubit lines.
This increases CNOT count, as each start of a path results in 1 additional CNOT.
In the extreme case we can put all the vertices and marks on their own triplets of qubits, creating many 3-qubit \CAT states in depth 2.
Each connection between vertices then corresponds to a Bell basis measurement, raising our depth to 3.
We can make this more efficient by performing a CNOT between two qubits directly, as long as it is the first operation applied to either of them, which it can be for at least $\frac{(3r_t+2)n}{5}$ of the vertices.

\begin{theorem}\label{thm:depth-4}
  Let $(G,M)$ be a $t$-robust marked graph with $n$ marks achieving the optimal vertex ratio $r_t$.
  Then we can construct a circuit $t$-FT preparing the $n$-qubit \CAT state using $\left(\frac{29r_t+26}{10}\right)n$ CNOT gates with CNOT depth 3 and using at most $\left(\frac{12r_t+8}5\right)n$ ancillae.
\end{theorem}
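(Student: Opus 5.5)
The plan is to first produce a depth-3 circuit with slightly too many CNOTs and ancillae, and then remove one CNOT and one qubit for each edge in a suitably chosen set.

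\textbf{Setup.} Convert $(G,M)$ back into its $Z$-graph $D$, as described after \cref{def:marked-graph-diagram}. Since $r_t$ is attained, $G$ has $v = r_t n$ internal spiders. Each mark becomes a boundary spider, so $D$ has $s=(r_t+1)n$ three-legged $Z$-spiders in total. Counting internal legs, internal spiders contribute $3r_t n$ and boundary spiders contribute $2n$, so the number of spider–spider edges is
\[
  E=\frac{(3r_t+2)n}{2}.
\]

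\textbf{Baseline circuit.} Unfuse every spider of $D$ from its neighbours using $\text{Elim}_{\text{fe}}$ and $\text{Fuse-1}_{\text{fe}}$. Each spider becomes its own 3-qubit \CAT state, costing 2 CNOTs in depth 2 from one $\ket{+}$ and two $\ket{0}$ ancillae. Each edge of $D$ is then realised as a post-selected two-qubit fusion: one CNOT followed by measurements, placed in a third layer. These rewrites are fault-equivalent and the fusions act on disjoint qubits, so this gives a $t$-FT preparation in CNOT depth 3. The cost is
\[
  2s+E=\tfrac{35r_t+30}{10}\,n \text{ CNOTs}, \qquad 3s-n=\tfrac{15r_t+10}{5}\,n \text{ ancillae}.
\]
The target counts are exactly $\frac{(3r_t+2)n}{5}=\frac{2E}{5}$ lower in both CNOTs and ancillae. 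So the rest of the proof must save one CNOT and one qubit on at least $2E/5$ of the edges.

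\textbf{Saving on an edge.} For an edge $uv$, the saving comes from not preparing a separate qubit for $v$'s \CAT state and then fusing it with $u$'s. Instead, one of $u$'s qubits is used directly as the control (or the fresh $\ket{0}$ is used directly as the target) of $v$'s first CNOT. In ZX terms this is fusing the two $Z$-spiders along the edge via $\text{Fuse-1}_{\text{fe}}$ before unfusing into CNOTs. The catch is that this is only allowed, without breaking depth 3 or fault-equivalence, when that CNOT is the first gate acting on both qubits involved. This is the content of the remark preceding the theorem. I will verify, with a small ZX computation, the local rule: if an edge is \emph{selected}, the corresponding qubits of $u$ and $v$ can be identified, removing one ancilla and the fusion CNOT.

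\textbf{Selection and the main obstacle.} Compatibility requires that each spider has at most a bounded number of selected incident edges, and that selected edges at a common spider can be scheduled in different layers. I will aim to choose the selected set $F$ as a union of two colour classes of a proper 5-colouring-like decomposition, so that $F$ induces a linear forest: paths with each spider of degree at most 2 in $F$. In a path, each interior spider spends its first and second CNOT layers on its two selected edges, which is exactly consistent with a depth-2 \CAT state preparation followed by one fusion layer. The hard step is the combinatorial lemma that every graph of maximum degree 3 with $E$ edges has a linear forest with at least $2E/5$ edges, whose path orientations can be scheduled in the three layers. My first attempt is a Vizing 4-edge-colouring, pairing colour classes to form unions of two matchings (paths and even cycles) and deleting one edge per cycle. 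If the resulting bound falls short of $2/5$, I will instead use a greedy/discharging argument exploiting that boundary spiders have only two internal edges. Given this lemma, subtracting $2E/5$ from both baseline counts yields exactly the stated $\frac{29r_t+26}{10}n$ CNOTs and $\frac{12r_t+8}{5}n$ ancillae.
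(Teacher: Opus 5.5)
\textbf{The saving mechanism fails.} Your baseline and target are correct. One 3-qubit \CAT state per spider plus one fusion CNOT per edge costs $2s+E$ CNOTs and $3s-n$ ancillae. These are exactly the paper's $\frac{7V}{2}+3n$ and $3V+2n$, and you must save $\frac{2E}{5}=\frac{3V+2n}{5}$ of each. The gap is in how you realise that saving.

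Work within your own scheme, where all spiders are prepared in layers 1 and 2 and layer 3 is a pure fusion layer. In a depth-2 CNOT circuit on fresh $\ket{0}$/$\ket{+}$ ancillae, layer 1 can only create Bell pairs. In layer 2, a Bell half becomes a genuine 3-ary $Z$-spider only by acting as control onto a fresh $\ket{0}$. So every connected component contains at most two $Z$-spiders, joined by the layer-1 Bell wire.

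Any path in your linear forest with two or more selected edges would need three $Z$-spiders in one component before the fusion layer, which is impossible. Concretely, your alternating schedule $1,2,1$ on a three-edge path makes the middle edge a layer-2 CNOT between two qubits that are both already Bell halves. That produces a $Z$-spider wired to a 3-ary $X$-spider, not two adjacent $Z$-spiders, so you no longer have your $Z$-graph. Hence the edges absorbed before the fusion layer must form a matching, and no linear-forest lemma ``whose path orientations can be scheduled in the three layers'' can hold.

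Your per-edge accounting is also off. Absorbing an edge $uv$ means preparing $u$ and $v$ together as a 4-qubit \CAT state: a Bell pair in layer 1, then each half expanded once in layer 2. This costs 3 CNOTs and 4 qubits instead of 5 and 6, so it saves 2 CNOTs and 2 ancillae, not 1 and 1.

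\textbf{The fix is the paper's argument.} With the corrected accounting the combinatorics is easy: you need a matching with at least $E/5$ edges. Any maximal matching in a graph of maximum degree 3 has this, since every edge is matched or adjacent to a matched edge, and each matched edge is adjacent to at most 4 others. The paper phrases this as at least $\frac{3V+2n}{5}$ matched vertices. It observes that the internal edge-ends of unmatched vertices must all land on the $2A$ spare ends of the $A$ matched vertices.

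You could salvage your route by taking alternate edges of each path in the linear forest. A path with $m$ edges contains a matching of size $\lceil m/2\rceil$, and $2\lceil m/2\rceil\geq m$. But that just reduces to the matching argument via an unnecessary detour. Also, the Vizing-based construction you sketch for the linear forest only guarantees $\frac{3}{8}E$ edges when 4-cycles are present, short of the $\frac{2}{5}E$ you need.
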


\section{Constructing optimal \CAT states}\label{sec:cat-state-generation}

Having established the minimal size of the graphs that are $t$-robust, we now address the construction of graphs that saturate these bounds.
Constructions for values of $n$ that are not multiples of the vertex ratio can be made by increasing $n$, using the optimal graph, then removing some marks, but this might not be optimal, so we can use numerical calculations for larger $t$ and for `in-between' values of $n$.

Directly solving the $t$-robustness condition of \cref{def:good-cut} is generally hard, as it couples the problem of finding suitable graphs with finding good markings.
In practice, we want to decouple these tasks into sufficient distinct problems that can be solved separately. 

\begin{proposition}
    Let \(G\) be a 3-regular graph, $M$ a set of markings, and $T$ a rooted spanning tree on $G$.
    Then, this yields a CNOT-optimal circuit implementing a $t$-fault-tolerant \CAT state if the following conditions hold:
    \begin{itemize}
        \item \(G\) admits no nonlocal cut of size at most $t$.
        \item \(T\) is a spider-ordering tree of \(G\).
        \item $M$ satisfies the necessary local conditions so that each cut of size at most $t$ is a good cut.
        \item If \(t > 2\), then each edge is marked at most once.
        \item $(G,M)$ has the optimal vertex ratio implied by \cref{thm:vertex-ratio}.
    \end{itemize}
    \label{prop:sufficient-construction}
\end{proposition}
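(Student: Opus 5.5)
The plan has two parts: first show that the conditions make $(G,M)$ $t$-robust, then run the extraction procedure and compare the resulting CNOT count with the lower bound.

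\emph{Robustness.} Take any cut of $G$ along at most $t$ edges; edges may be cut more than once. Because $G$ has no nonlocal cut of size at most $t$, every such cut is local, so one side is a tree. That tree is either a bare fragment of an edge or a subtree of vertices hanging off the rest of the graph. The third condition is exactly the requirement that each such local piece contains at most $t$ marks (equivalently, no more marks than the number of cut edges), so every cut of size at most $t$ is good. For $t>2$ the fourth condition matters here: an edge marked twice could be cut twice around its marks, giving a 2-edge cut whose small side is an edge fragment carrying two marks. Allowing at most one mark per edge removes this case. By the proposition identifying $t$-robustness with $t$-fault-equivalence to a \CAT state, the $Z$-graph $D$ reconstructed from $(G,M)$ then $t$-FT prepares the $n$-qubit \CAT state.

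\emph{Circuit extraction and count.} I would subdivide each marked edge by a boundary vertex to recover the underlying graph $G'$. The rooted spanning tree $T$ on $G$ must be extended to $G'$ so that it is still a spider-ordering tree, meaning the root is 3-ary and every leaf is a 2-ary boundary vertex. This is where the second condition is used. Applying \cref{prop:circuit-extract-cnot-count} gives a circuit with $v+n+1$ CNOTs, where $v$ is the number of vertices of $G$. Each step of \cref{proc:circuit-extraction} is a fault-equivalent rewrite, so the circuit inherits the fault tolerance of $D$. With vertex ratio $r_t^n=v/n$, this count equals $n(r_t^n+1)+1$, which matches the lower bound of \cref{prop:cnot-optimal}. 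The fifth condition ensures the ratio is the optimal one from \cref{thm:vertex-ratio}, so the circuit is CNOT-optimal.

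\emph{Main obstacle.} The delicate step is the first one: checking that the local conditions really cover every local cut once nonlocal cuts are excluded, including cuts that hit the same edge several times. I would enumerate the tree-shaped small sides: a single edge fragment, where at most one mark is guaranteed for $t>2$; a single vertex with its three half-edges; and larger subtrees. For each, I would use that a tree side with $k$ internal vertices is separated by exactly $k+2$ cut edge-ends, so the local marking condition bounds its marks by the cut size.
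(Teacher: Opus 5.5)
Your plan is the assembly the paper intends. The paper gives no separate proof and states this proposition as a summary of the robustness/fault-equivalence correspondence, \cref{prop:circuit-extract-cnot-count}, \cref{prop:cnot-optimal} and \cref{thm:vertex-ratio}. However, the optimality step has a genuine gap. You write ``with vertex ratio $r_t^n=v/n$'', but $r_t^n$ is the \emph{minimum} ratio over all $t$-robust marked graphs with $n$ marks. The fifth condition only gives $v/n=r_t$, and \cref{thm:vertex-ratio} pins down $r_t=\liminf_{n\to\infty}r_t^n$, which is an asymptotic quantity. Because $(G,M)$ is itself $t$-robust you get $r_t^n\le v/n$ for free. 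What you need is $r_t^n\ge r_t$ at this particular $n$, which no $\liminf$ statement gives, and which can in fact fail.

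Take $t=3$, $n=6$ and $G=K_4$ with one mark per edge. Its only cuts of size at most $3$ isolate a vertex or an edge fragment, so there is no nonlocal $3$-cut. Every vertex has exactly three marks around it, and $v/n=4/6=r_3$. The subdivided $K_4$ has a spider-ordering tree, so all five conditions hold, and the extracted circuit has $11$ CNOTs. Now take two vertices joined by three parallel edges carrying two marks each. This is already $3$-robust: its only $2$-cuts cut off one or two marks, and every $3$-cut is harmless since $\min(k,6-k)\le 3$. \cref{prop:circuit-extract-cnot-count} turns it into a $9$-CNOT circuit, so the $K_4$ circuit is not CNOT-optimal.

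To close the step you must use the lower-bound halves of \cref{lem:vertex-ratio-r2,lem:vertex-ratio-r3,lem:vertex-ratio-r4,lem:vertex-ratio-r5} as per-graph counting statements: an edge with three marks, a vertex with four marks around it, an edge whose five-edge neighbourhood holds five marks, and so on. These only produce a genuinely bad cut when the complementary side also carries more than $f$ marks. So the conclusion needs $n$ beyond that threshold, roughly $n\ge 2t+2$.

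There are three smaller points in the robustness part. First, your reason for the fourth condition is wrong. A doubly-marked edge cut on both sides of its marks is a weight-$2$ fault pushed to $2$ outputs, which is a good cut. Indeed the $t=2$ family uses two marks on every edge, and the paper notes that two marks per edge are admissible in principle for larger $t$. The condition's actual role is to make the local conditions of the third bullet bound the marks on a tree side by its number $f$ of cut edges. Those local conditions are $0/1$ conditions on edges, as in \cref{subsec:markings-SAT}. For example, for $t=5$ the condition ``every vertex has an unmarked incident edge'' would allow a vertex with two doubly-marked edges, which puts four marks behind a $3$-cut. Second, ``at most $t$ marks'' and ``at most $f$ marks'' are not equivalent; fault-equivalence needs the latter. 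Third, when a cut leaves several components, ``one side is a tree'' should read ``one side is a forest''. To see this, suppose $S$ and its complement contained cyclic components $K_1$ and $K_2$. Let $L$ be the component of $G-K_1$ containing $K_2$. Then $\partial L\subseteq\partial S$, and $L$ and $G-L$ are both connected and cyclic, giving a nonlocal cut of size at most $t$. Once you have a forest, sum the per-tree bounds.
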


With these sufficient conditions, we can logically separate the search for good graphs, markings, and spanning trees.
In general, these problems appears to be computationally hard in the worst case: finding markings for $t=7$ is equivalent to finding a total dominating set on the line graph of $G$, which is NP-complete, and finding graphs without nonlocal cuts is likely equally hard.
In practice, however, we can solve each problem for any practically interesting size and beyond.

To construct 3-regular graphs without nonlocal $t$-cuts, we employ a randomized hill-climbing algorithm.
Since verifying the absence of this property is computationally expensive, we first optimise two efficient proxy metrics, girth and spectral gap, to generate high-quality candidates.
The algorithm proceeds in three phases: \begin{enumerate}
    \item \textbf{Girth Optimisation:} We iteratively apply local perturbations (double edge swaps) to prioritize increasing the girth.
    As noted in \cref{rem:no-cycles}, a girth of at least $t+1$ is a necessary condition to prevent nonlocal $t$-cuts induced by small cycles.
    \item \textbf{Spectral Optimisation:} Once the girth is large enough, we continue applying perturbations to maximise the spectral gap, the difference between the first and second eigenvalues of the graph's  Laplacian.
    A large spectral gap implies a large Cheeger constant, which is a good proxy for high connectivity.
    \item \textbf{Verification:} Once the specified threshold for the spectral gap is reached, we do the expensive verification for nonlocal $t$-cuts.
    If none exist, the graph is accepted; otherwise, we resume local perturbations while maintaining the girth and spectral properties.
\end{enumerate}

\noindent This procedure is detailed in \cref{alg:construction} in \cref{app:cat-state-gen}.

To verify the absence of nonlocal $t$-cuts, we encode the search for a counter-example as a SAT instance.
Recall that a cut is nonlocal if neither component is a tree, meaning both must contain at least one cycle.
To detect this, we rely on the property that a graph contains a cycle if and only if it contains a non-empty subset of vertices inducing a subgraph with minimum degree at least 2.
We introduce boolean variables representing the partition assignment, alongside auxiliary variables identifying such a \emph{witness subset} within each partition.
The constraints are as follows:
(1) the cut size is at most $t$;
(2) every vertex in a witness subset has at least two neighbors within that same subset;
and (3) both partitions contain a non-empty witness subset.
If this instance is satisfiable, the graph contains a nonlocal cut and is rejected.
The exact formulas for the SAT instance are detailed in \cref{subsec:nonlocal-cut-SAT}.

Once a valid graph is found, we find a valid marking set \(M : E \to \{0,1\}\).
We aim to maximise the number of marked edges while respecting the local conditions for $t$-fault-tolerance derived in \cref{subsec:ft-construction}.
Since this problem involves maximising an objective function subject to complex logical dependencies, we formulate it as a Maximum Satisfiability (MaxSAT) problem.
We provide the specific encoding details in \cref{subsec:markings-SAT}. The codes for constructing \CAT-state preparation circuits can be found in the project’s GitHub repository~\cite{cat2026}.

Finally, we look for a spanning tree.
Our algorithm constructs this in two distinct phases.
First, we establish a basic spanning forest by using exclusively unmarked edges.
Then, we iteratively merge these disjoint components until only a single spanning tree remains.
To ensure the final tree is well-balanced, we employ a greedy heuristic that selects merging edges to minimise the overall diameter of the newly merged components.

At each step, the algorithm evaluates all valid edges bridging two distinct trees and calculates the potential new diameter based on the eccentricities of the connecting nodes.
Specifically, when merging a tree $A$ and a tree $B$ via an edge connecting node $u$ (in $A$) to node $v$ (in $B$), the resulting diameter is exactly:
\[
  \operatorname{diam}(A \cup B) = \max(\operatorname{diam}(A),\ \operatorname{diam}(B),\ \operatorname{ecc}_A(u) + 1 + \operatorname{ecc}_B(v))
\]
The candidate edge that results in the smallest overall diameter is chosen.
The procedure returns a well-balanced tree with a minimal number of marked edges, both of which promote better error rates in the extracted circuits.

\section{Simulation results}
\label{sec:simulation-results} 
We construct marked graphs that match our theoretical lower bounds for every target distance $t \in \{2, \dots, 7\}$ and for \CAT-state sizes up to $n=50$. For larger \CAT-states (i.e.\ $50<n\leq100$ when $t\leq 5$), we additionally provide all constructed circuits in our GitHub repository~\cite{cat2026}. In every instance found, the resulting circuits achieve the theoretical minimum CNOT count, providing strong evidence that SpiderCat scales effectively while saturating the theoretical bounds. To our knowledge, SpiderCat enables the largest systematic generation of CNOT-optimal fault-tolerant \CAT-state preparation circuits to date.

We then benchmark SpiderCat against two existing fault-tolerant \CAT-state preparation strategies: Flag-at-Origin (FAO)~\cite{forlivesi2025flag} and MQT (the Munich Quantum Toolkit)~\cite{peham2026optimizing}. In \cref{subsec:resource-overhead}, we compare the resource overheads of the circuits produced by each method (flag and CNOT counts), and in \cref{subsec:performance}, we compare their simulated performance.

\subsection{Comparing resource overheads}
\label{subsec:resource-overhead}

\cref{fig:benchmark2,fig:benchmark3} summarise the flag and CNOT overheads of the \CAT-state preparation circuits produced by SpiderCat, MQT, and FAO. Since the CNOT count scales linearly with the flag count, we focus our discussion on the flag-count heatmap in \cref{fig:benchmark2}.

\cref{fig:benchmark2} presents the flag overhead across a range of $(n,t)$ values, with the \CAT-state size $n$ shown on the horizontal axis and the fault distance $t$ on the vertical axis. Each cell is annotated and color-coded by the number of flags required, with a few special cases. White cells indicate parameter pairs $(n,t)$ for which the given method provides no construction. Gray cells mark entries that we do not need to consider, since a construction for $(n,t)$ is implied by an existing construction for $(n,t-1)$. A blue cell labeled * indicates that, although the formula predicts the optimal flag count, we proved that this value is unattainable via exhaustive search.
Brown cells labeled $\dagger$ denote cases where a solution based on \cref{prop:sufficient-construction} cannot exist.
This is because no 3-regular graph exists with the required girth and number vertices, as the $(t+1)$-cage~\cite{Tutte_1947} has more vertices than demanded by the method.
However, an optimal solution might still exist using a graph of lower girth and an alternative construction.
Finally, cyan cells labeled \# indicate that the search timed out, leaving the existence of a solution undetermined.

For SpiderCat, the map is densely populated over the explored range (roughly $8\leq n \leq 50$ and $3\leq t \leq 8$), with flag counts increasing smoothly as either $n$ or $t$ grows. This indicates that the overhead grows in a controlled and scalable way even for larger \CAT states and higher target distances. MQT, in contrast, shows noticeably larger flag counts in the high-$n$ and high-$t$ regime (see the darkest colors and largest annotations near the upper-right of its heatmap), showing a steeper growth in resource overhead as the target distance increases. Moreover, MQT does not provide constructions that cover the full explored $(n,t)$ range, leaving gaps in the heatmap at larger $n$ and/or $t$. Finally, FAO covers a much smaller feasible region (most entries in its heatmap are absent for larger $t$ and $n$), indicating the limited scalability of this approach compared to SpiderCat and MQT.

\begin{sidewaysfigure}
    \includegraphics[width=\textwidth]{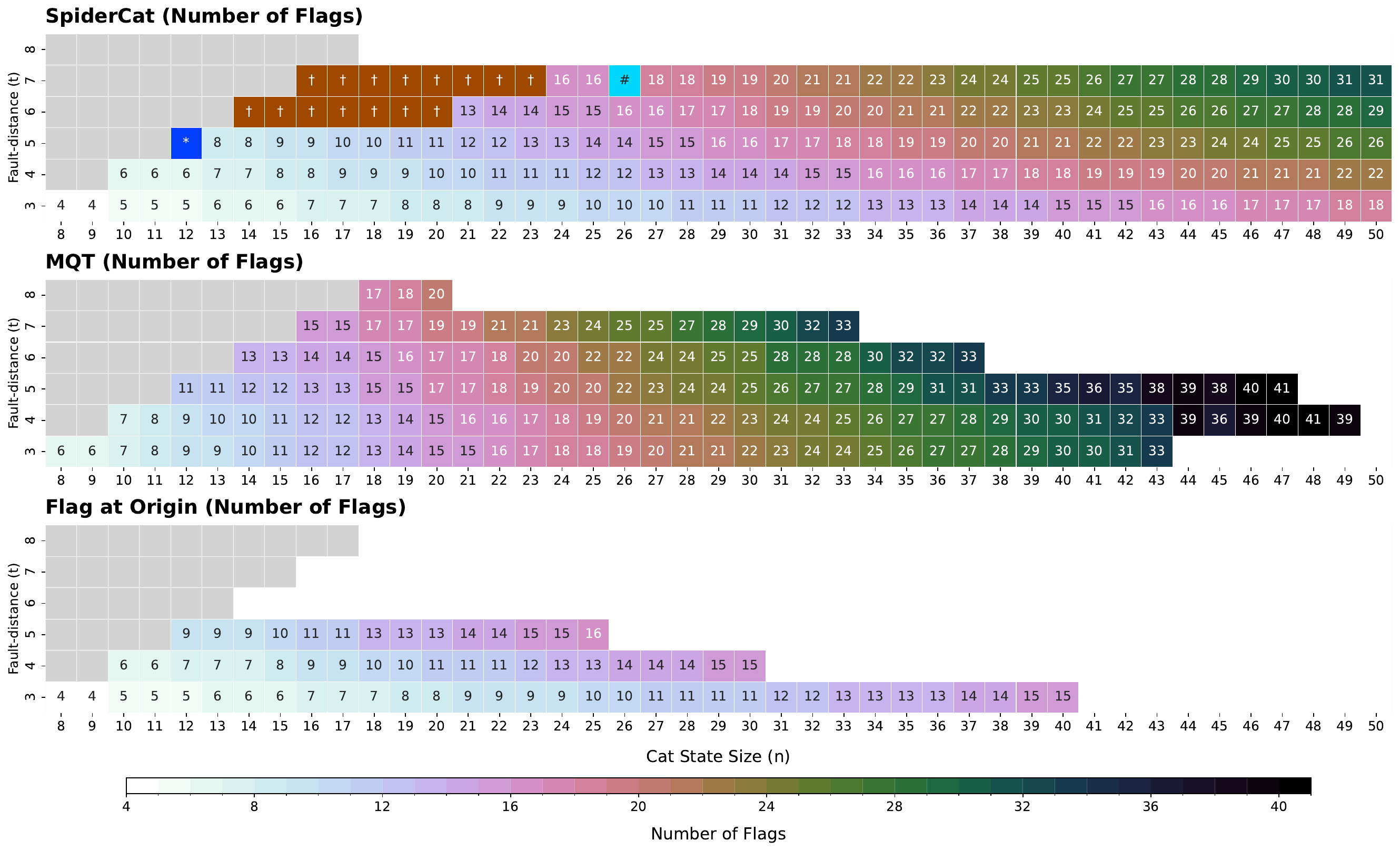}
    \caption{Comparing the number of flags for different fault-tolerant \CAT-state preparation methods. Missing entries are places where a given method has no solution. Note that SpiderCat can produce \CAT states for almost all $n \leq 100$ when $t \leq 5$.
    For $(n, t)=(12, 5)$, no solution exists with the predicted optimal value of 7.
    For $(n, t)=(26, 7)$, our computation timed out.
    For values marked $\dagger$, no graph with girth $>t$ exists that matches the predictied optimal vertex ratio, but other graphs might still yield optimal solutions.}
    \label{fig:benchmark2}
\end{sidewaysfigure}

\begin{sidewaysfigure}
    \includegraphics[width=\textwidth]{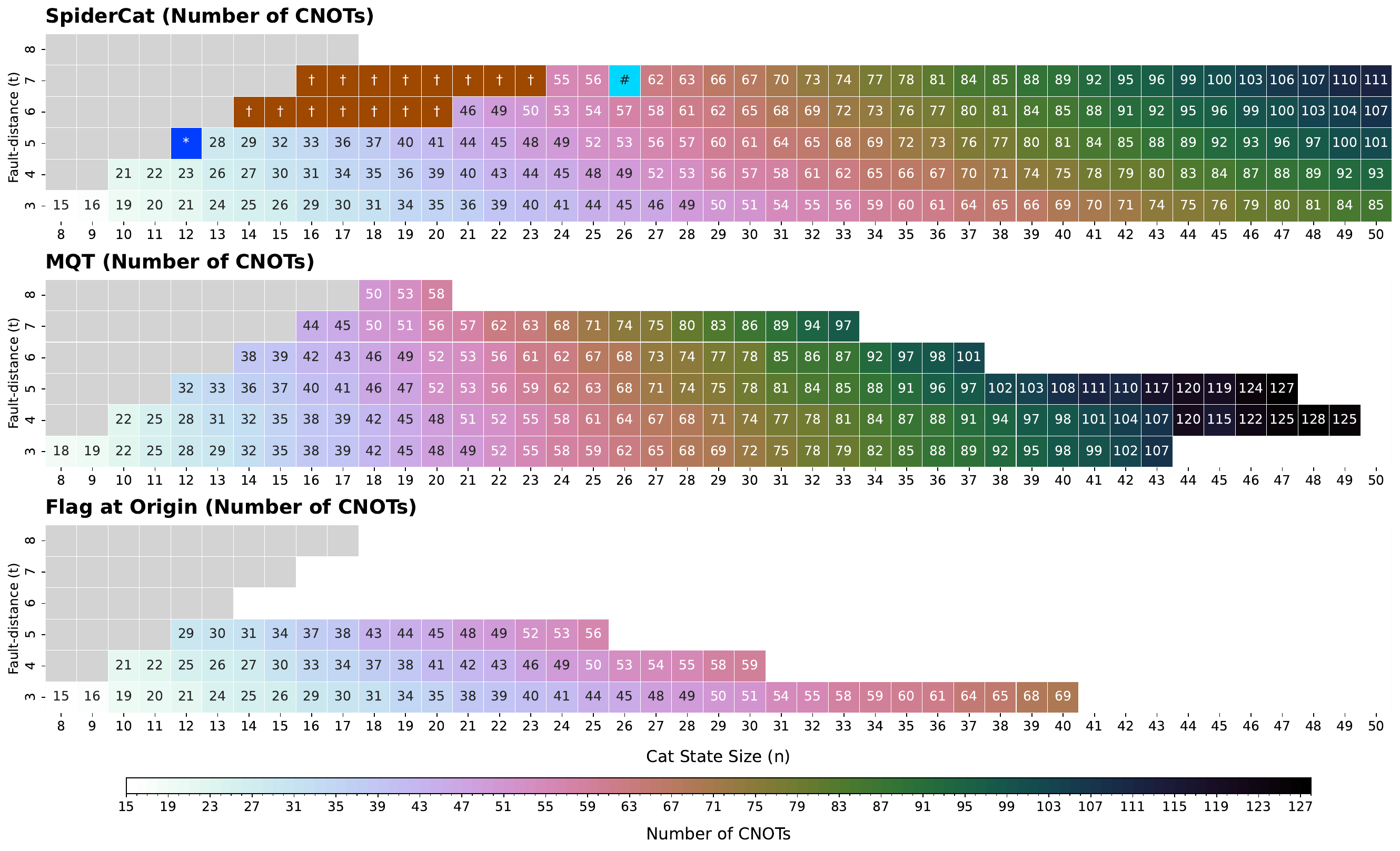}
    \caption{Comparing the number of CNOT gates for different fault-tolerant \CAT-state preparation methods. Missing entries are places where a given method has no solution. Note that SpiderCat can produce \CAT states for almost all $n \leq 100$ when $t \leq 5$. For $(n, t)=(12, 5)$, no solution exists with the predicted optimal value of 7.
    For $(n, t)=(26, 7)$, our computation timed out.
    For values marked $\dagger$, no graph with girth $>t$ exists that matches the predictied optimal vertex ratio, but other graphs might still yield optimal solutions.}
    \label{fig:benchmark3}
\end{sidewaysfigure}

\begin{figure}
    \makebox[\textwidth][c]{
  \begin{subfigure}{0.59\textwidth}
    \caption{Methods Comparison for $t=3$}
    \includegraphics[width=\textwidth]{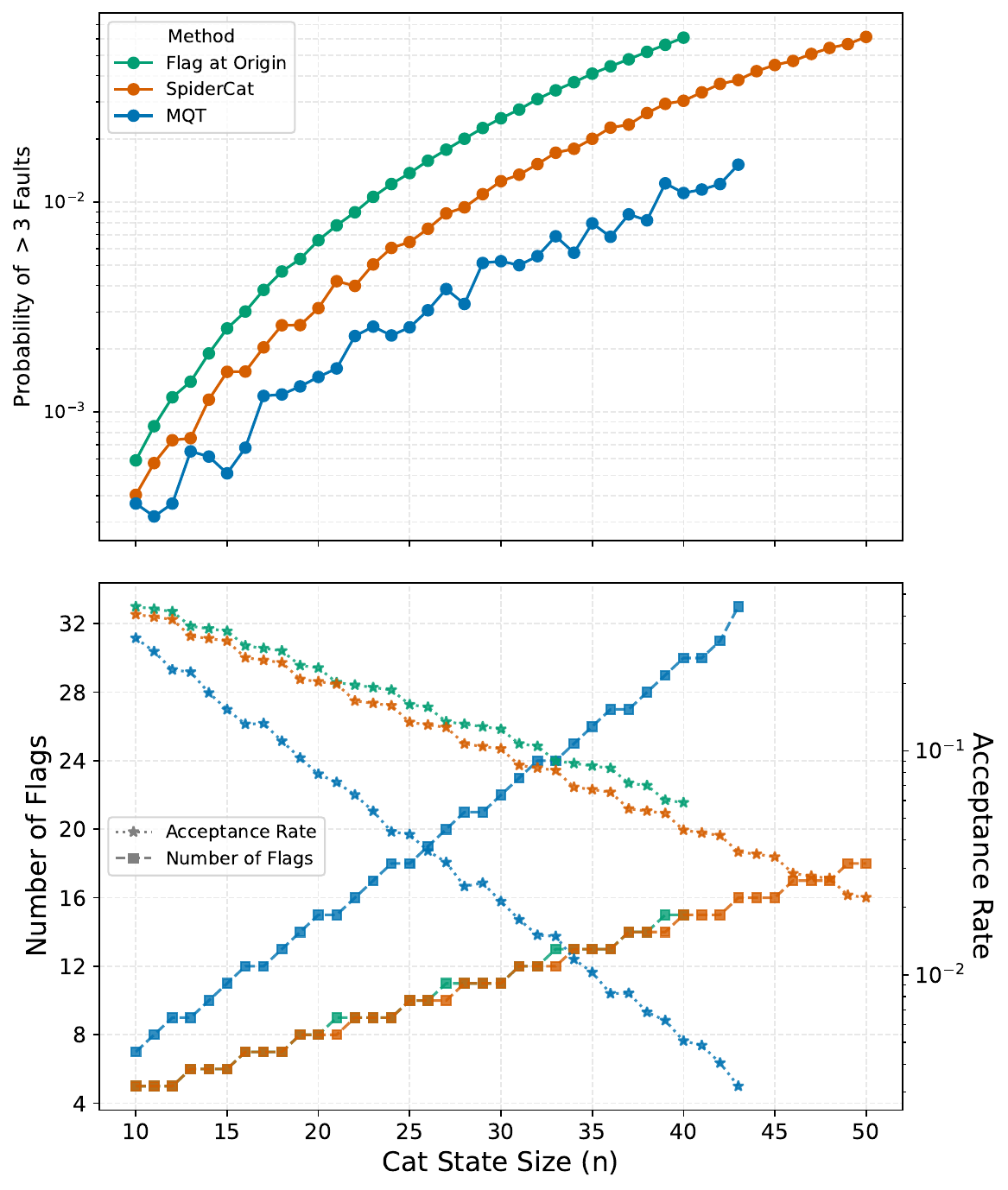}
  \end{subfigure}
  \begin{subfigure}{0.59\textwidth}
    \caption{Methods Comparison for $t=4$}
    \includegraphics[width=\textwidth]{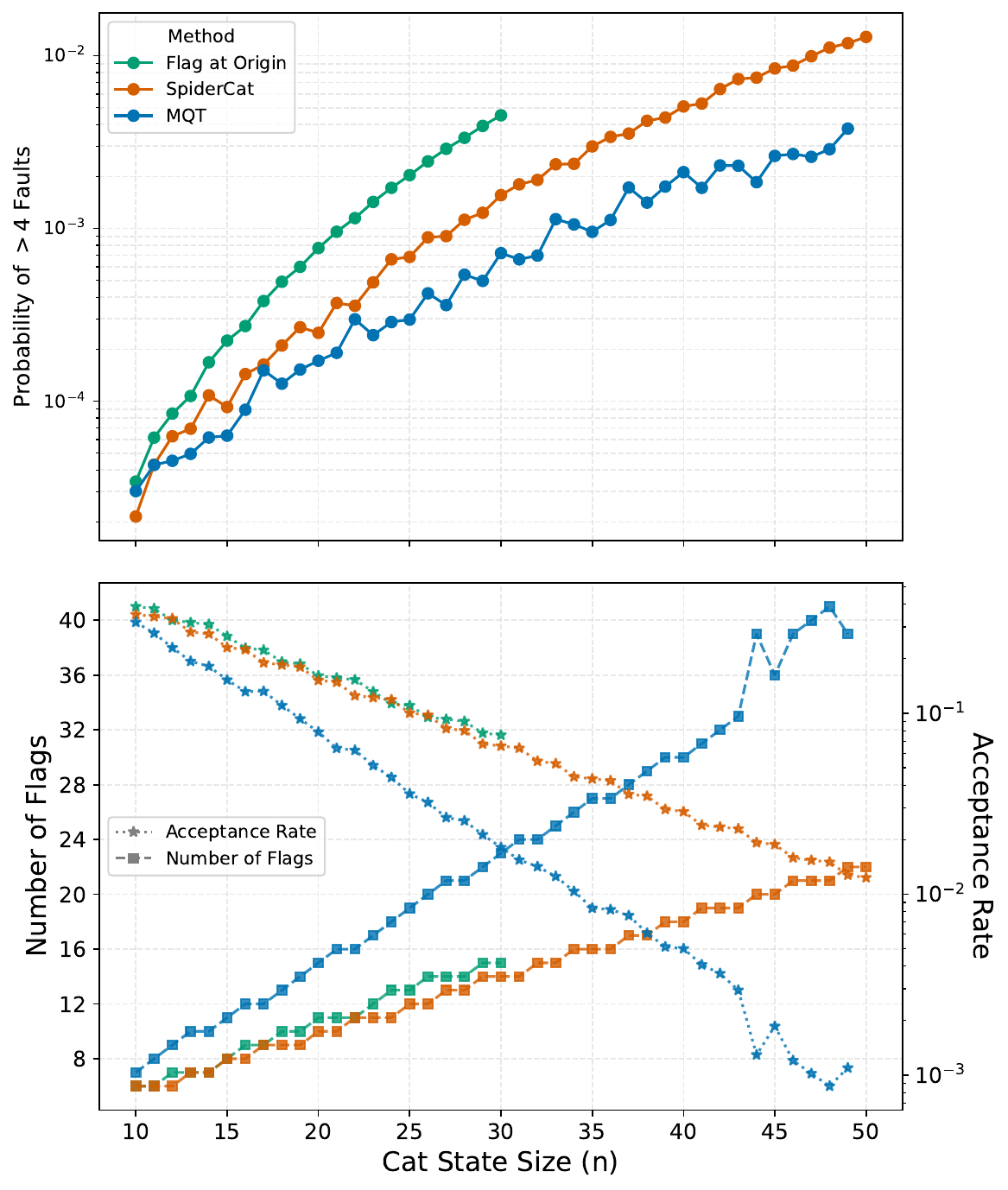}
  \end{subfigure}}
    \makebox[\textwidth][c]{
  \begin{subfigure}{0.59\textwidth}
    \caption{Methods Comparison for $t=5$}
    \includegraphics[width=\textwidth]{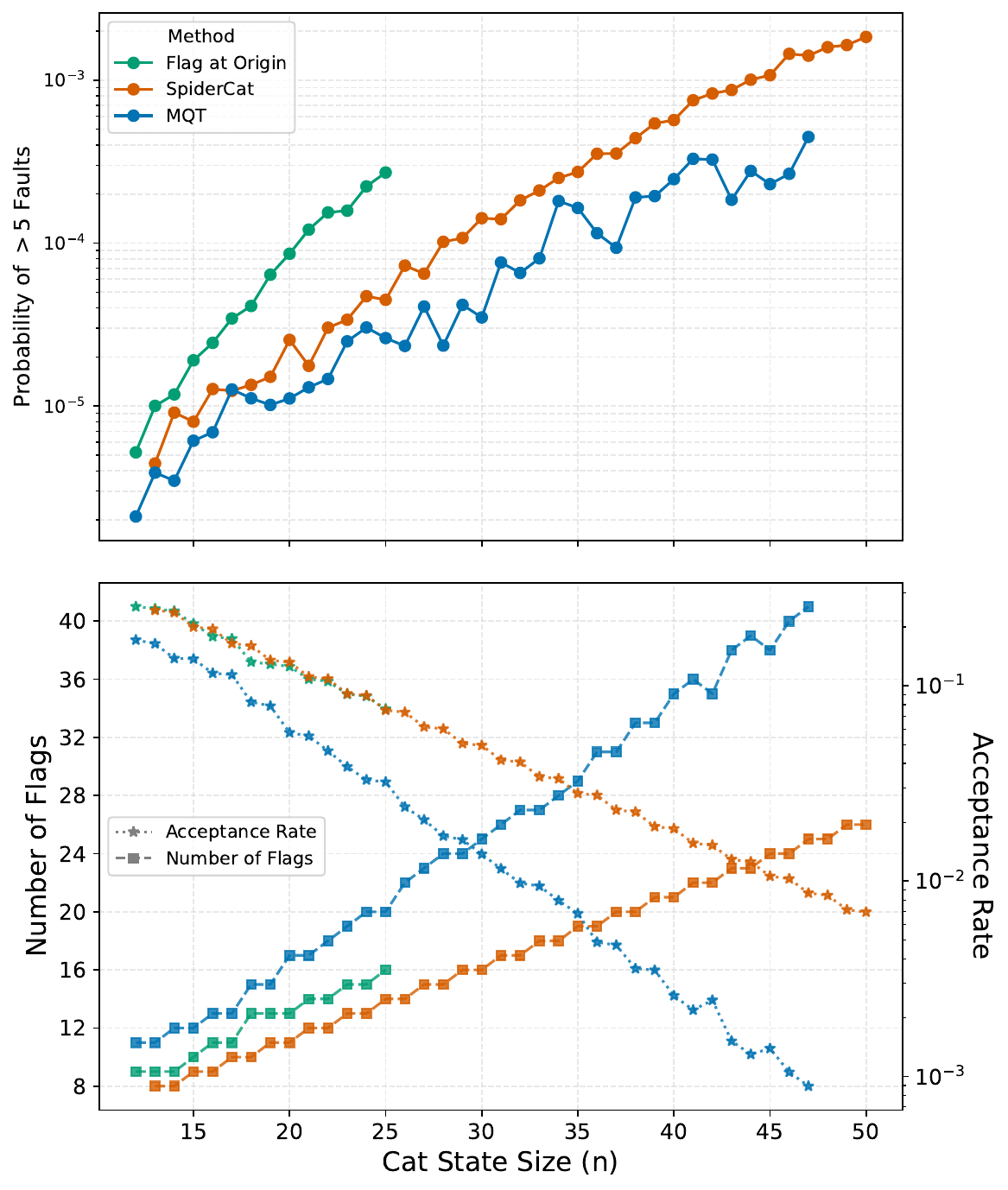}
  \end{subfigure}
  \begin{subfigure}{0.59\textwidth}
    \caption{Methods Comparison for $t=6$}
    \includegraphics[width=\textwidth]{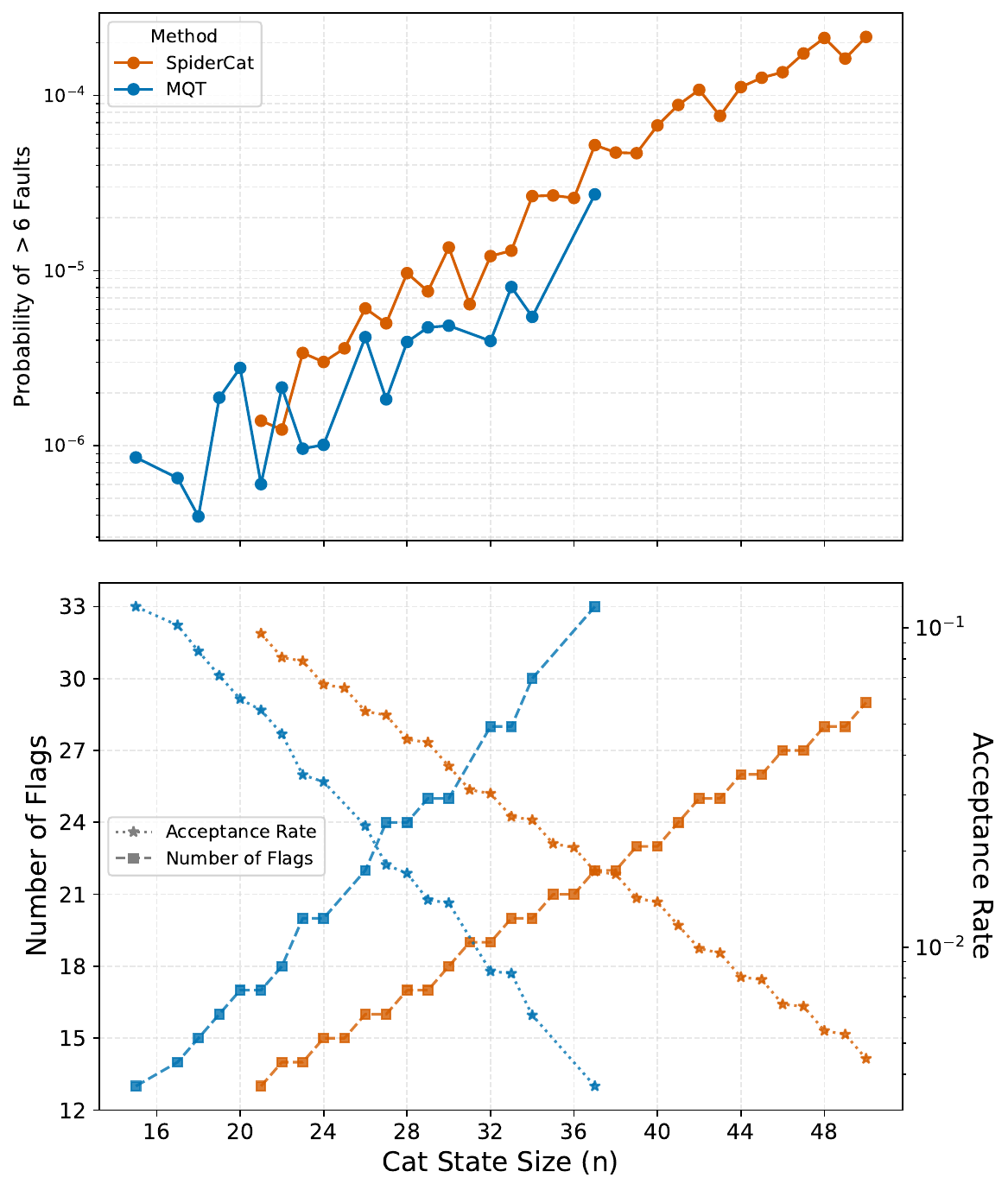}
  \end{subfigure}}
  \caption{Comparing different fault-tolerant \CAT-state preparation methods.
  For each value of $t$, the top plot shows the probability of $>t$ faults on a \CAT state prepared with SpiderCat, MQT, and Flag at Origin.
  The lower plots show the number of flags needed to prepare the state and the acceptance rate under postselection.
  Note that we observe a power law between these two variables.}
  \label{fig:benchmark-curves}
\end{figure}

\subsection{Comparing performance in simulation}
\label{subsec:performance}

To compare the performance of the examined methods, we conducted Monte Carlo simulations using Stim~\cite{Gidney2021stimfaststabilizer}.

\paragraph{Noise model}
We assume circuit-level depolarizing noise with error probabilities $\pTwo$ for two-qubit gates and $\pSPAM$ for state preparations and measurements.
For a given physical error probability $\pPhis$, we set $\pTwo \coloneqq \pPhis$ and $\pSPAM \coloneqq \frac{2}{3} \pPhis$ in accordance with the error model used in~\cite{peham2026optimizing}.

\paragraph{Results}
We set $\pPhis \coloneqq 0.05$ and take $5{\small,}000{\small,}000 * t$ samples for each circuit.
\cref{fig:benchmark-curves} shows the performance of SpiderCat, MQT, and FAO as the \CAT-state size $n$ increases for target fault-tolerance levels $t=3,4,5,6$. For each $t$, the top plot in the corresponding panel reports the simulated logical error probability (i.e.\@ the probability of $>t$ faults) as a function of $n$, while the bottom plot tracks the number of flags together with the acceptance rate (i.e.\@ the probability that verification passes and the state is kept).

Across all $t \in \{3,4,5,6\}$, \cref{fig:benchmark-curves} suggests a strong correlation between the number of flag qubits and the observed fault-tolerant performance. Increasing the flag count implements stronger verification by adding more checks that can detect high-weight faults, but it simultaneously increases the circuit size and measurement overhead, creating more fault locations and more opportunities to trigger rejection. This trade-off is reflected in the bottom plots of each panel: as the number of flags grows with $n$, the acceptance rate decreases, often by orders of magnitude. The logical error probability then reflects the net effect of these tradeoffs: adding more flags can catch more bad error patterns, but after a while it mostly just causes more runs to be rejected (lower acceptance rate) without a matching drop in the failure probability of the top plot. Among all three methods, MQT require rapidly increasing numbers of flag qubits as $n$ grows, leading to a steep drop in acceptance rate even though their $>t$-fault probabilities continue to increase with $n$. SpiderCat, by contrast, strikes a more favorable balance: it maintains lower $>t$-fault probabilities while using fewer flags and keeping the acceptance rate much higher.

Compared to SpiderCat, both MQT and FAO demonstrate weaker scalability as the \CAT-state size $n$ and the target distance $t$ increase, but for different reasons. For MQT, the bottom plots show that the number of flags grows steeply with $n$ and the acceptance rate falls by orders of magnitude, reaching very low values at large $n$ (especially for $t=4$ and $t=5$). This indicates rapidly worsening throughput, as improved verification comes primarily through heavier postselection. FAO becomes impractical even sooner: the top panels show consistently higher $>t$-fault probabilities than the other methods, and its brute-force construction does not scale to larger $t$, which is why it is not shown for $t=6$.

\section{Conclusion}
\label{sec:conclusion}
We have presented a provably optimal strategy for fault-tolerant \CAT-state preparation, with formal lower bounds reached by explicit circuit constructions.
We did this by expressing circuits as graphs and therefore shifting the problem from studying undetectable fault to cuts of graphs. 
The resulting fault-tolerant circuits are both asymptotically and practically efficient, making them suitable for near-term experimental implementation. 
Generalising this perspective, we aim to study other fault-tolerant circuit compilation tasks, such as state preparation and implementing logical Clifford operators for a given error-correcting code.




In practice, \CAT-state preparation involves considerations beyond minimising circuit depth, CNOT count, or ancilla qubit count. Depending on the available architecture, full fault tolerance may be too costly, and across different parameter regimes and physical error rates, different constructions (i.e.\@ those presented here and in prior work) may achieve the best logical error rates. In simulations, we have found that even when fixing the underlying graph and the CNOT count, the logical error rate was still quite sensitive to the exact circuit extraction algorithm used.
In particular, the constructions in this paper do not obtain the lowest possible logical error rates (the MQT method outperforms SpiderCat, though at a significantly increased CNOT and postselection cost) due to these exact details and sensitivities.
We have further discovered a family of additional constructions that we believe 13-fault-tolerantly prepares \CAT states of any size. In simulations, such constructions achieve a significantly lower logical error rate than other state-of-the-art methods. The detailed analysis and proofs, as well as finding improved methods for circuit extraction, are left for future work.

\paragraph*{Acknowledgements}
We thank Mackenzie Shaw, Aleks Kissinger, Dmitry Paramonov, and Michael Vasmer for many helpful ideas, suggestions, and discussions. ABK, SML, BR, and JvdW gratefully acknowledge the Dagstuhl Seminar 25382, Quantum Error Correction Meets ZX-Calculus, where many of the ideas presented in this work were first discussed and developed. We also acknowledge the use of automated language tools for editorial assistance. The circuit diagrams in this paper were typeset using TikZiT~\cite{tikz}. JvdW wishes to acknowledge support from an NWO Veni grant.
ABK and BP are supported by the Engineering and Physical Sciences Research Council grant number EP/Z002230/1, `(De)constructing quantum software (DeQS)''.
BR and RY acknowledge support from Simon Harrison via the Wolfson Harrison UK Research Council Quantum Foundation Scholarship.

\bibliography{cnotsyn}

\appendix

\section{More details on the ZX calculus}
\label{appendix:zx-intro}
ZX-diagrams define linear maps.
To get the linear map defined by a ZX-diagram, we apply the \emph{interpretation function}.
The interpretation function is defined compositionally: first on the basic generators (the $Z$ and $X$ spiders), and then extended to arbitrary diagrams via composition.

\begin{definition}[Z and X spiders]
\begin{align*}
\textit{$Z$-spider:} \qquad
\input{figs/prelim/z-spider.tikz}
\quad &\coloneqq \quad
\ket{0}^{\otimes n}\!\bra{0}^{\otimes m}
+ e^{k\pi i}\ket{1}^{\otimes n}\!\bra{1}^{\otimes m}, \\[8pt]
\textit{$X$-spider:} \qquad
\input{figs/prelim/x-spider.tikz}
\quad &\coloneqq \quad
\ket{+}^{\otimes n}\!\bra{+}^{\otimes m}
+ e^{k\pi i}\ket{-}^{\otimes n}\!\bra{-}^{\otimes m}.
\end{align*}
\end{definition}

In addition to spiders, ZX-diagrams may contain \emph{identities}, \emph{swaps}, \emph{cups}, and \emph{caps}, which are interpreted as follows:
\[
\tikz[tikzfig]{ \draw (0,0) -- (2,0); }
\ :=\ 
\sum_i \ket{i}\!\bra{i}
\qquad
\input{figs/prelim/swap.tikz}
\ :=\ 
\sum_{ij} \ket{ij}\!\bra{ji}
\qquad
\input{figs/prelim/cup.tikz}
\ :=\ 
\sum_i \ket{ii}
\qquad
\input{figs/prelim/cap.tikz}
\ :=\ 
\sum_i \bra{ii}.
\]

More complex diagrams are formed by composing simpler ones, either sequentially or in parallel.

\begin{definition}[Sequential and parallel composition]
Let $n,m,k,l \in \mathbb{N}$.
Let
\[
D_1 : (\mathbb{C}^2)^{\otimes n} \to (\mathbb{C}^2)^{\otimes m}, \quad
D_2 : (\mathbb{C}^2)^{\otimes m} \to (\mathbb{C}^2)^{\otimes k}, \quad
D_3 : (\mathbb{C}^2)^{\otimes k} \to (\mathbb{C}^2)^{\otimes l}
\]
be ZX-diagrams.

The \emph{sequential composition} $D_2 \circ D_1 : (\mathbb{C}^2)^{\otimes n} \to (\mathbb{C}^2)^{\otimes k}$ is defined by
\[
\interp{\input{figs/prelim/D2-circ-D1.tikz}}
\ \coloneqq \
\interp{\input{figs/prelim/D2.tikz}} \circ \interp{\input{figs/prelim/D1.tikz}}.
\]

The \emph{parallel composition}
$D_1 \otimes D_3 : (\mathbb{C}^2)^{\otimes (n+k)} \to (\mathbb{C}^2)^{\otimes (m+l)}$
is defined by
\[
\interp{\input{figs/prelim/D1-tensor-D3.tikz}}
\ \coloneqq \
\interp{\input{figs/prelim/D1.tikz}} \otimes \interp{\input{figs/prelim/D3.tikz}}.
\]
\end{definition}

To interpret larger diagrams, we can decompose them into basic components:
\[\scalebox{0.9}{\input{figs/prelim/composition-example.tikz}}\]
While every ZX-diagram can be interpreted as its underlying linear map this way, in practice, this is not usually necessary. We can translate directly between quantum circuits and ZX-diagrams, and then compose them according to the circuit structure. \cref{fig:mappings} illustrates how various quantum states and Pauli operators are represented as Pauli ZX-diagrams. \cref{fig:pauli-zx} presents a set of Pauli rewrite rules that will be used to transform ZX-diagrams corresponding to the same linear map. For further background on the Pauli ZX-calculus, see~\cite{kissinger2022phase,KissingerWetering2024Book}.

We encourage readers to consult~\cite{KissingerWetering2024Book,van2020zx} for a comprehensive overview of the complete calculus.

A fundamental property of the ZX calculus is that the interpretation is independent of how a diagram is decomposed into generators or in which order compositions are performed. Which gives to the property that \emph{only connectivity matters} for determining the linear map of a diagram.

\begin{figure}[!tbh]
  \centering
  \begin{equation*}
  		\scalebox{0.8}{\input{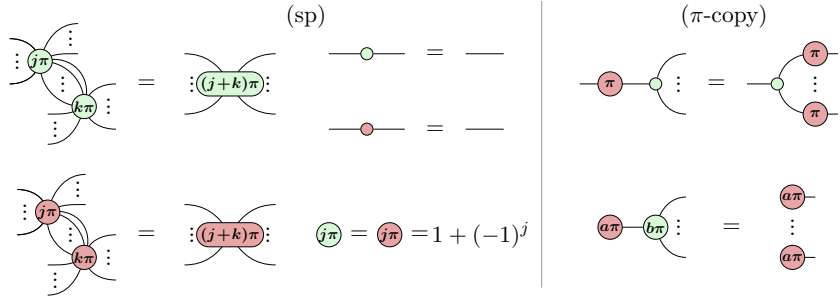}}
  \end{equation*}
  \caption{Useful rules of the Pauli ZX-calculus. For each rewrite rule, its \LHS and \RHS correspond to the same linear map up to a known scalar factor which will not be relevant in this paper. These rules remain valid up to wire bending and under interchange of $X$- and $Z$-spiders.}
  \label{fig:pauli-zx} 
\end{figure}


\section{Equivalence of edge-flip noise to circuit noise for \CAT states}\label{app:edge-flip-noise}

We associated to each Pauli fault happening on a set of edges a \emph{weight} with the assumption that higher-weight faults are less likely to occur. However, this is just an assumption of a \emph{noise model} and not one that necessarily corresponds to a realistic one.
For example, a fault may have different weight under phenomenological noise than it might have under circuit-level noise. 
As such, a noise model can be seen as a function from faults to weights. 
We can make this concrete by defining a specific noise model, which is the one we use throughout this paper: 
\begin{definition}[Edge-flip noise]
    Under edge-flip noise, the weight of a fault is the number of edges it flips, i.e.\ $\mathcal{N}(F) = |F|$.
\end{definition}

A different noise model might for instance consider a weight-2 Pauli occurring directly after a CNOT gate as a weight-1 fault, since a CNOT gate could introduce correlated errors. In general these noise models are different and would give rise to a different notion of fault-equivalence.

In particular, in~\cite{forlivesi2025flag,peham2026optimizing} they say a circuit fault-tolerantly prepares the \CAT state up to some weight $w$, if all faults of weight at most $w$, under circuit noise, are either detectable or create at most as many Pauli flips on the output qubits .

It turns out that for \CAT states in particular, the edge-flip noise model and the circuit-level noise model are equivalent.
\begin{proposition}
    \label{prop:cat-state-noise-models}
    A Pauli ZX-diagram $D$ fault-equivalently prepares the \CAT state under edge-flip noise if and only if it fault-equivalently prepares the \CAT state under circuit-level noise.
\end{proposition}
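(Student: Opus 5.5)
The plan is to compare the two noise models fault by fault. Circuit-level noise allows, at each gate location, an arbitrary Pauli on the gate's support (for a CNOT, any two-qubit Pauli after the gate) at weight 1. Edge-flip noise charges one unit per flipped edge.

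\textbf{Translation to ZX.} In the ZX picture a CNOT is a $Z$-spider joined to an $X$-spider, with four external legs in total. A two-qubit Pauli placed after the gate is a Pauli on the two output legs. Using the spider rules of \cref{fig:pauli-zx}, Paulis can be pushed through spiders. A $Z$ commutes through a $Z$-spider and can be moved to any single leg of it. An $X$ pushed into a $Z$-spider copies to all other legs, and dually for the $X$-spider. So every Pauli error occurring at a single circuit location is equivalent to a Pauli acting on a small set of edges around one spider pair.

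\textbf{Easy direction.} Every edge-flip fault of weight $w$ is also a circuit-level fault of weight at most $w$. An edge flip on a wire segment is exactly a single-qubit Pauli inserted at some point of the circuit, which is a valid circuit fault location (after a gate, preparation or idle step). Hence if $D$ is FT under circuit noise, any undetectable edge fault of weight $w$ is a circuit fault of weight at most $w$. Its boundary effect therefore has weight at most $w$, and $D$ is FT under edge-flip noise.

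\textbf{Hard direction (main obstacle).} A single circuit-level fault after a CNOT can flip two edges, so we must show its effect is reproducible by an edge fault of weight 1. Here I would use the \CAT-specific fact that the target stabiliser group is generated by $X^{\otimes n}$ and the pairs $Z_iZ_j$. The key steps are:
\begin{enumerate}
\item Decompose the correlated error $P_1\otimes P_2$ after a CNOT into its $Z$ part and its $X$ part.
\item For the $Z$ part: $Z\otimes Z$ after the gate pulls back through the CNOT to a single $Z$ on the target-side input leg, by fusing into the $Z$-spider and recopying. Other $Z$ patterns are handled similarly, each becoming a single leg of a spider.
\item For the $X$ part: $X\otimes X$ becomes a single $X$ on the control input via the $X$-spider.
\item Mixed terms $Y$-like combinations reduce to one $Z$-type and one $X$-type flip.
\end{enumerate}

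The worry is that a mixed error still has weight 2 in edge terms. I would argue this does not matter for \CAT states, because $X$-type and $Z$-type faults can be analysed independently. Only $X$-type flips can produce a nontrivial output error that is not absorbed by the stabiliser. More precisely, the $Z$ components project onto boundary $Z$'s, and any even number of them is a stabiliser, so effectively weight $\le 1$. Thus the output-relevant weight of a circuit fault equals that of its $X$-component, which is realised by a single edge flip.

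\textbf{Completing the argument.} Summing over locations, a weight-$w$ circuit fault maps to an edge fault of weight at most $w$ with the same boundary action modulo stabilisers. Detectability is preserved, because both faults yield the same diagram $D^F$. Finally, \cref{prop:undetectable-faults-boundary} lets us compare boundary faults, completing the equivalence.
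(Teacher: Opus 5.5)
Your proposal is correct and takes essentially the same approach as the paper. The easy direction uses that every edge flip is already an atomic circuit-level fault. The hard direction splits faults into $X$ and $Z$ parts, uses the $Z_iZ_j$ stabilisers of the \textsf{CAT} state to reduce the $Z$ part to at most one boundary flip, and observes that the $X$ part of any correlated CNOT fault (e.g.\ $X\otimes X$ after the gate) is a single $X$ flip on the control input. Two small things to tidy:
\begin{itemize}
\item The edge joining the two spiders of a CNOT is not a wire segment. A flip there is still a single gate fault, e.g.\ $Y$ on it equals $Z\otimes X$ after the gate.
\item In your final summary, it is the $X$-component of a weight-$w$ circuit fault, not the whole fault, that becomes an edge fault of weight at most $w$. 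The leftover boundary $Z$ is then placed on an edge already carrying an $X$, or costs $1\leq w$ if there is none.
\end{itemize}
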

\begin{proof}
    ($\Rightarrow$) To prove this, we will first observe that for we can consider any fault to be some combination of $X$ flips and $Z$ flips by decomposing $Y$ faults into their $X$ and $Z$ components. 
    More formally, let $F$ be an undetectable fault on some diagram $D$, let $F_X, F_Z$ be faults consisting of purely $X$ flips and $Z$ flips respectively, such that $F_X F_Z = F$. 
    As $F$ is undetectable, by \cref{prop:undetectable-faults-boundary}, there exist faults $F'$ that purely act on the boundary of $D$. 
    We can once again decompose $F'$ into $F_X'$ and $F_Z'$. 
    As the \CAT state stabilises any pair of $ZZ$ flips on its boundary, we can find an equivalent fault to $F_Z'$ that acts on at most one edge. 
    Thus, the only relevant factor for whether $F$ has an equivalent fault on the specification of weight at most $|F|$ is the $X$ component of $F$. 
    If there exists an equivalent $X$ fault on the boundary of weight at most $|F|$, then, we can push the remaining $Z$ fault to one of the edges acted upon by that fault. 
    Therefore, for the \CAT state, we only have to care about the $X$ component of any fault. 

    Since Pauli ZX-diagrams correspond to quantum circuits consisting of CNOTs and Pauli rotation, we can observe that all faults of weight one under circuit-level noise correspond to single edge flips.
    The only edge flip which does not have an immediate circuit analogue is a single CNOT gate fault that can create Pauli faults on all of its outputs. 
    Thus, an $XX$ fault after a CNOT is considered a fault of weight $1$. 
    However, as an $XX$ fault after a CNOT is the same as an $X$ fault on the control of the CNOT, such a fault is equivalent to a single edge flip before the CNOT. 
    Thus, all relevant $X$-type, circuit-level noise faults are equivalent to individual CNOTs and, thus, for the $X$-type faults $\mathcal{N}_{circuit-level}(F) = \mathcal{N}_{edge-flip}(N)$ and therefore, fault equivalence under edge-flip noise implies fault equivalence under circuit-level noise. 
    In particular, any circuit-level noise fault $F$ of weight $|F|$ corresponds to a edge-flip noise fault $G$ such that $|G| = |F|$. 
    But then, if $D$ is fault-equivalent to the idealised specification, $G$ has a corresponding fault $G'$ on the specification of at most the same weight. 
    As $D^F = D^G = D^{G'}$, $G$ is also equivalent to $F$ and $\operatorname{wt}(G') \leq \operatorname{wt}(G) = \operatorname{wt}(F)$.
    Therefore, the fault equivalence is satisfied.
        
    ($\Leftarrow$) This direction is immediate as circuit-level noise contains all edge-flips as atomic faults. 
    Thus, fault equivalence under circuit-level noise trivially implies fault-equivalence under edge-flip noise. 
\end{proof}

\section{Proofs}\label{app:proofs}

In this appendix several proofs will make use of \emph{Pauli webs}, which are a type of decoration on ZX-diagrams that allow one to easily reason about detectability. 
For a good introduction to Pauli webs see~\cite[Section 2.4]{rusch2025completeness}.

\subsection{Proofs of \texorpdfstring{\cref{sec:preliminaries}}{Section 2}}

\begin{proof}[Proof of \cref{prop:undetectable-faults-boundary}]
    ($\Leftarrow$) Let $F$ be an undetectable fault. By~\cite{rusch2025completeness}, we know that its space-time equivalence class is uniquely defined by the Pauli webs it anticommutes with.
    As it is detectable, it must commute with all detecting region. 
    Therefore, $F$ only anticommutes with logical, stabilising and co-stabilising Pauli webs. 
    But then, by stabiliser theory, we know that there exist destabilisers on the free edges that exactly anticommute with the same logical, stabilising and co-stabilising Pauli webs as $F$, forming a fault $F'$. 
    But then, by~\cite{rusch2025completeness}, we must have $D^F = D^{F'}$ where $F'$ only acts on the boundary.
    
    ($\Rightarrow$) If there exists a fault $F'$ that only acts on the boundary of $D$ and $D^F = D^{F'}$, then we know that $F'$ only applies a Pauli rotation to $D$. Since, by assumption, $D \not= 0$, this means that $D^{F'} \not= 0$. 
    As $D^F = D^{F'} \not= 0$, $F$ is undetectable. 
\end{proof}

\subsection{Proofs of \texorpdfstring{\cref{sec:recursive-cat}}{Section 3}}

\begin{proof}[Proof of \cref{thm:recursive-cat}]
   Write $w=t+1$. For simplicity we will assume that $n = 2^k w$ for some $k$. If that is not the case, then these numbers would be off by some small constants. Note that in the recursive construction, the first layers where the \CAT states have fewer than $w$ qubits, we fully connect the qubits with $ZZ$-measurements. Let's first consider the stages after we have gotten our $w$-FT $w$-qubit \CAT states.

   To increase the size further we need to only use $w$ $ZZ$-measurements. Working backwords from $n$, the last layer has $2w$ CNOTs, giving 2 \CAT states with each having $n/2$ qubits. Then we require $2\cdot 2w$ CNOTs for decomposing each of those. We see that for $k$ layers we need $2\cdot (2^{k-1}-1)w$ CNOTs. We need to decompose until we get $n\cdot 2^{-k} = w$. Note that $k = \log \frac{n}{w}$. Hence, the CNOT count is $2\cdot (2^{k-1}-1)w = 2\cdot (\frac12 \frac{n}{w} -1)w = n - 2w$. At that point we need $2^k = \frac{n}{w}$ \CAT states on $w$ qubits.

   Decomposing a $w$-qubit \CAT into 2 $w/2$-qubit \CAT states requires $w/2$ $ZZ$-measurements, each requiring 2 CNOTs, for a total of $w$. The next iteration requires 2 copies of $w/2$ CNOTs, also giving $w$. We require $\log w$ layers, for a total of $w\log w$ CNOTs (up to rounding due to $w$ not being divisable by $2$).
   Since we need $\frac{n}{w}$ many of them, this contributes $n\log w$ CNOTs.

   The total CNOT count is then $n\log w + n- 2w = n(\log w+1) -2w$. Replacing $w=t+1$ gives $n(\log (t+1)+1) - 2(t+1)$.

   For the depth we note that we only have to increase the depth until we get to \CAT states of size $2w$, which requires $\log 2w = \log w + 1$ layers. Each of these layers consist of $ZZ$-measurements, which have a CNOT depth of 2. However, since these CNOTs are to ancillae, that means that only half the data qubits are involved in a CNOT for the $ZZ$-measurement, so that the other half can already be used for the next layer of $ZZ$-measurements. Effectively that means that only the first and last $ZZ$ layer have a CNOT depth of 2, and all the intermediate layers can overlap so that they only contribute a CNOT depth of 1. The total CNOT depth is hence $\log w + 3 = \log (t+1) + 3 \leq \log t + 4$. If we don't use this compression technique to save on ancilla count, we get $2\log t + 2$ instead.

   Finally, for the ancilla count, note that each $ZZ$-measurement requires one ancilla, for a total of $\frac12 n(1+\log w) -w$. However, once we are done with a layer of $ZZ$-measurements, we measure that ancilla and it can be reused, so that at most $n/2$ ancillae are actually needed. If we use the CNOT layer compression described above, then two layers of $ZZ$-measurements must be kept active at the same time, doubling the ancilla count to $n$.
\end{proof}

\subsection{Proofs of \texorpdfstring{\cref{sec:circ-extract}}{Section 4}}

\begin{proof}[Proof of \cref{lem:unique-boundary}]
Suppose towards contradiction that there exists a boundary vertex connected to more than one boundary. When all three of its wires are outputs then it is its own connected component, which is impossible. Hence, it has exactly two output wires. Now consider an $X$-type fault on the edge labelled by $t$:
\[
    \scalebox{0.6}{\input{figs/flag-by-construction/decomp-ext.tikz}}
\]
This fault spreads to errors on the two neighbouring outputs.
Note that a GHZ state has only one $X$-type stabiliser generator and it acts on all qubits. Since we assume that we have at least 4 outputs, applying this $X$-stabiliser would flip the two errors to at least two other errors. So in both cases a weight-1 fault propagates to a fault of weight at least 2, contradicting fault-tolerance.
\end{proof}


\subsection{Proofs of \texorpdfstring{\cref{sec:reduction-to-z-spiders}}{Section 5.1}}
In this section, we show that any circuit implementing a \CAT state consisting of just CNOT gates, or CNOT+Hadamard gates gives rise to a `skeleton' of 3-ary $Z$-spiders which would allow us to construct a \CAT state using at least as many CNOT gates as using \cref{proc:circuit-extraction}. In \cref{subsubsec:opt-cnot}, we establish the minimality argument considering all possible CNOT circuits. In \cref{subsubsec:opt-cnot-h}, we generalise the argument to CNOT+Hadamard circuits. 

As was done in~\cite{rusch2025completeness}, we need to relax the notion of fault equivalence to an asymmetric relation called \emph{fault boundedness}. If two diagrams are mutually fault-bounded by each other, they are fault-equivalent.
\begin{definition}[$w$-fault boundedness, fault boundedness]\label{def:fault-boundedness}
 Let $D_1$ and $D_2$ be diagrams with noise models $\mathcal{N}_1$, respectively $\mathcal{N}_2$.
 We say that $D_1$ under $\mathcal{N}_1$ is \emph{$w$-fault-bounded} by $D_2$ under $\mathcal{N}_2$ for some $w \in \mathbb{N}^+$ if for any fault $F_1$ on $D_1$ with $\weightfunc_{\mathcal{N}_1}(F_1) < w$, we have either:
 \begin{itemize}
        \item $F_1$ is detectable, \textbf{or}
        \item there exists a fault $F_2$ on $D_2$ with $\weightfunc_{\mathcal{N}_2}(F_2) \leq \weightfunc_{\mathcal{N}_1}(F_1)$ such that $D_1^{F_1} = D_2^{F_2}$.
 \end{itemize}
 In this case we write $D_1 \wFaultBnd{w} D_2$, leaving the noise models implicit when they are clear from context.
 We say $D_1$ is \emph{fault-bounded} by $D_2$ if it is $w$-fault-bounded for all $w$. In this case we write $D_1 \FaultBnd D_2$.
\end{definition}
We interpret $D_1 \wFaultBnd{w} D_2$ as meaning that $D_1$ is `at least as' $w$-fault-tolerant as $D_2$, as any potentially problematic fault on $D_1$ would have a corresponding equally problematic fault on $D_2$.

\begin{lemma}
\[
    \scalebox{.8}{\input{figs/optimality/copy.tikz}}
\]
\label{lem:copy}
\end{lemma}   

\begin{proof}
Up to colour change, it suffices to show (a). Since any boundary edge flip on the \LHS corresponds to a boundary edge flip on \RHS of the same weight, fault-boundedness immediately follows.
\end{proof}    

\subsubsection{CNOT-Optimality over all CNOT Circuits}
\label{subsubsec:opt-cnot}

\begin{figure}[!htb]
    \[
        \scalebox{.8}{\input{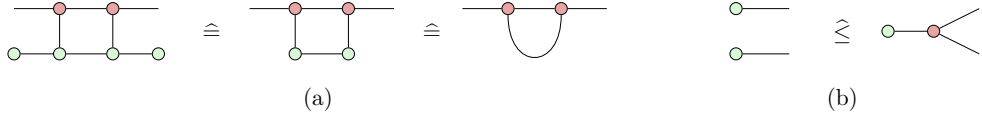}}
    \]
\caption{By $\textsc{Elim}_{\text{fe}}$ and $\textsc{Fuse-1}_{\text{fe}}$ in \cref{fig:fe-rewrites}, we can remove the 1-ary and 2-ary spiders in $(a)$ and get a spider skeleton composed of only one 3-ary spider. the \LHS of (b) is fault-bounded by the \RHS (see \cref{def:fault-boundedness}), this means the \LHS is `more' fault-tolerant. Hence, applying this rule to remove a 1-ary spider does not make the diagram less fault-tolerant.}
\label{fig:spider-skeleton}
\end{figure}
\begin{proof}[Proof of \cref{lem:CNOTs-to-spider-count}]
	Let $D$ be the associated diagram of $C$. Since $C$ $t$-fault-tolerantly implements $\ket{\CAT^n}$, $C\wFaultEq{t+1} \ket{\CAT^n}$. By \cref{prop:cat-state-noise-models}, $D\wFaultEq{t+1} \ket{\CAT^n}$. Note that reducing $D$ to $D'$ involves the application of spider fusion and identity removal rules. By~\cite{rodatz2025fault}, these are fault-equivalent rewrite rules. The diagrammatic reduction might also require the copy rule to remove 1-ary spiders. While this is not a fault-equivalent rewrite, it is true that applying it to remove spiders can only improve the fault-tolerance (see~\cref{lem:copy}). By~\cite{rodatz2025fault,rusch2025completeness}, fault equivalence and fault boundedness are both compositional and transitive. It follows that $D' \FaultBnd D \wFaultEq{t+1} \ket{\CAT^n}$. Now note that since the diagram of $\ket{\CAT^n}$ is a single spider with no internal wires, that any fault on it trivially corresponds to a fault on $D'$, so that we get the other direction of the fault-equivalence for free, so that we conclude $D' \wFaultEq{t+1} \ket{\CAT^n}$.
    
  Note that each 3-ary $Z$-spider in $D$ comes from a CNOT gate in $C$. Reducing $D$ to the spider-skeleton $D'$ can only remove 3-ary spiders by fusions or applications of the copy rule, but never introduces new 3-ary spiders. Hence, if $C$ contains $l$ CNOT gates, then $D$ contains $l$ 3-ary $Z$-spiders, and the skeleton $D'$ will contain $k\leq l$ 3-ary $Z$-spiders. We then see that if $D'$ contains $k$ 3-ary $Z$-spiders, that $C$ has at least $k$ CNOT gates, with at least one CNOT gate being `associated' to each $Z$-spider. 
    

	It remains to show that the number of CNOT gates is at least one more than the number of $Z$-spiders in $D'$. That is, $l\geq k+1$. To do this, we need to show that in the rewriting process from $D$ to $D'$, at least one 3-ary $Z$-spider gets removed.

	First, consider that in the circuit $C$, each spider occurs on a certain qubit, meaning that we can differentiate between when two spiders are connected on the same qubit line, or have their connection going between two different qubits. We say that spiders on the same qubit line are on the same \emph{path}. 



    Since $C$ implements a \CAT state, it has no free inputs and starts by preparing ancillae, and hence each path starts in some 1-ary spider, progressing along the path to either a measurement or an output wire. Suppose towards contradiction that no path in $D$ starts in a 1-ary $Z$-spider. This means they all start in a 1-ary $X$-spider. This means that all ancilla preparations in $C$ are in the $\ket{0}$ state. Hence, $C$ is a circuit that starts in the state $\ket{0\cdots 0}$, applies a CNOT circuit $U$, and then performs measurements in the $Z$- and $X$-bases. However, $\ket{0\cdots 0}$ is preserved by a CNOT circuit, so that $U\ket{0\cdots 0} = \ket{0\cdots 0}$, which means $C$ is equal to a product state, contradicting the fact that it is equal to a \CAT state. Hence, there must be at least one path in $D$ that starts in a 1-ary $Z$-spider.

    Consider a path of $D$ that starts in a 1-ary $Z$-spider. Since it is adjacent to a 3-ary spider, we proceed by case distinctions on this neighbouring spider. 
    \begin{description}
        \item[When it is an $X$-spider:]The copy rule applies, followed by the $\text{Elim}_{\text{fe}}$ and $\text{Fuse-1}_{\text{fe}}$. This removes at least one pair of 3-ary $Z$- and $X$-spiders in the skeleton, showing that the CNOT count of $C$ is at least one more than the number of 3-ary $Z$-spiders in the skeleton. This case is illustrated in \cref{fig:removal}.(a). 
        \item[When it is a $Z$-spider:]$\text{Elim}_{\text{fe}}$ and $\text{Fuse-1}_{\text{fe}}$ apply, removing at least one 3-ary $Z$-spider in the skeleton. Similar to the previous case, this shows that the CNOT count of $C$ is at least one more than the number of 3-ary $Z$-spiders in the skeleton. This case is illustrated in \cref{fig:removal}.(b).
    \end{description}  

    \begin{figure}[!htb]
    \[
        \scalebox{.8}{\input{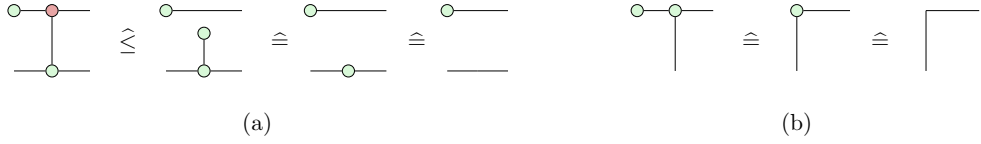}}
    \]
    \caption{At least one 3-ary $Z$-spider gets removed in the transition from $D$ to $D'$.}
    \label{fig:removal}
    \end{figure}

    Therefore, in both cases, the number of 3-ary spiders is strictly reduced when rewriting $D$ to $D'$. Hence, $l\geq k+1$. 
\end{proof}


\begin{proof}[Proof of \cref{prop:Zonly}]
First, we refer to \cref{fig:reduced-spider-skeleton} for a schematic representation of the construction of the reduced skeleton $S$.
Now, this proof consists of three steps: (1) first, we verify $S$ only contains 3-ary $Z$-spiders, (2) then we prove that $S$ implements the \CAT state and (3) finally, we show that $S$ fault-equivalently implements the \CAT state.

\textbf{$S$ only contains $Z$-spiders}
Let $D'$ be the spider-skeleton of $C$ and let $S$ be its reduced skeleton. Let $W'_X$ be the $X$-stabilising Pauli web of $D'$. Hence, $W'_X$ highlights all boundary edges of $D'$, as shown after step (1) of \cref{fig:reduced-spider-skeleton}. In $S$, we only keep the edges and spiders that were highlighted by $W'_X$. As $D'$ is phase-free, $W'_X$ may only contain highlighting in red (i.e.~Pauli $X$). All $X$-spiders in $D'$ must have an even number of edges highlighted in red ($X$), and since they are all 3-ary, this must mean that the resulting $X$-spiders in $S$ must either have degree zero or two. Thus, after removing the free-floating spiders and spiders of degree two, $S$ has no $X$-spiders. This is illustrated by \cref{fig:spiders}.(a) and (b). In addition, each $Z$-spider in $S$ comes from a $Z$-spider in $D'$, so that it has at most as many as $D'$ has.

\begin{figure}[!htb]
    \[
        \scalebox{.6}{\input{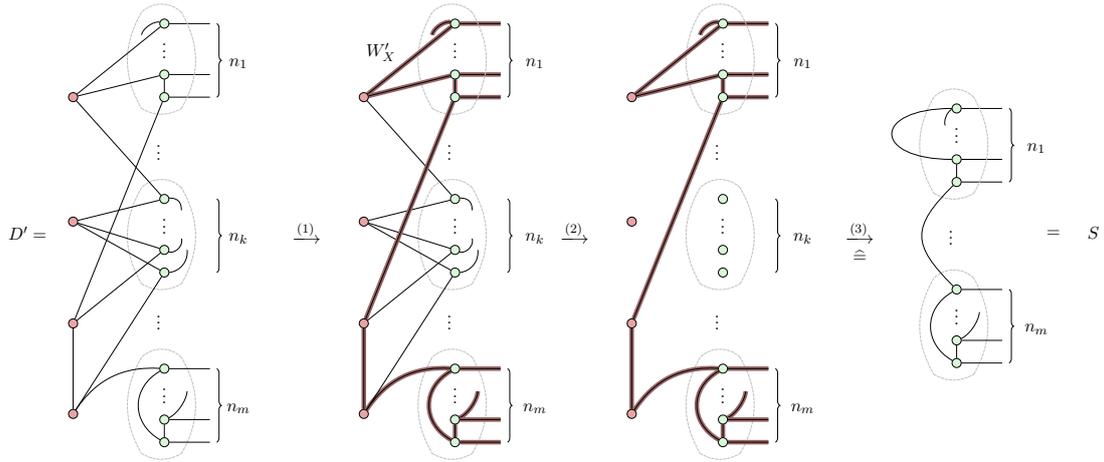}}
    \]
    \caption{Let $D'$ be a spider-skeleton of an $n$-qubit CAT state. By construction, $D'$ consists solely of 3-ary spiders. Using OCM, we rearrange the spiders in $D'$ into clusters of connected $X$- and $Z$-spiders. Each $X$-spider is incident to at most three $Z$-spiders, possibly belonging to different clusters. Each cluster of $Z$-spiders has $n_j \in \N$ output wires, with $\sum n_j = n$. In step (1) we fire the all-$X$-stabilising Pauli web $W'_X$ of $D'$. Note that some edges in $D'$ are not fired by $W'_X$. We drop these edges at step (2), yielding free-floating $X$- and $Z$-spiders. By removing these spiders and applying $\text{Elim}_{\text{fe}}$ to the diagram in step (3), we obtain a Z-graph which we call the \emph{reduced skeleton} $S$ of $D'$. This completes each step of \cref{def:reduced-spider-skeleton}.}
    \label{fig:reduced-spider-skeleton}
\end{figure}    

\begin{figure}[!htb]
    \[
        \scalebox{.65}{\input{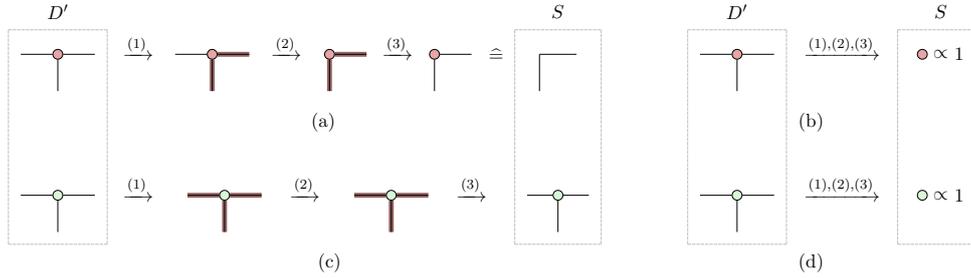}}
    \]
    \caption{Local transition of 3-ary spiders in \cref{fig:reduced-spider-skeleton}, with all cases considered. For an $X$-spider in $D'$, in cases (a) and (b), after removing edges not contained in $W'_X$ and applying $\text{Elim}_{\text{fe}}$ to the remaining 2-ary $X$-spiders, all $X$-spiders are eliminated from $D'$. Consequently, no $X$-spiders appear in $S$. For any $Z$-spider in $D'$, either all of its incident edges belong to $W'_X$ or none do. Accordingly, in cases (c) and (d), we either retain all edges of a $Z$-spider or discard them entirely. As a result, every $Z$-spider in $S$ has arity 3, and the total number of $Z$-spiders in $S$ is strictly smaller than in $D'$.}
    \label{fig:spiders}
\end{figure}

\textbf{$S$ is equal to the \CAT state}
To show that $S$ implements $\ket{\CAT^n}$, the $n$-qubit \CAT state, we will argue that it stabilises the same Pauli strings as $\ket{\CAT^n}$ by showing that it has corresponding stabilising Pauli webs.

First, we argue that $S$ is a connected Z-graph. Note that we picked $W'_X$ to be the smallest Pauli web stabilising the all-$X$ Pauli on the output. Hence, $W'_X$ does not contain any connected component not connected to an output, since in that case we could remove this part of the Pauli web to create a smaller Pauli web. Similarly, $W'_X$ does not consist of two unconnected components involving subsets of the free edges, since in that case the state would have two independent $X$-stabilisers, which is not true for the \CAT state. Hence $W'_X$ forms a single connected web, and so $S$ is also connected by construction.

By construction, $S$ stabilises all $X$, as is visualised in \cref{fig:z-stabilising}.(a). Namely, there exists an $X$-stabilising Pauli web $W_X$ in $S$ that highlights all boundary edges. Note that all spiders of $S$ are $Z$-spiders, and hence this Pauli web necessarily highlights all spiders since $S$ is connected. Hence, the condition that either all or none of their edges are highlighted in red ($X$) is satisfied, so that $W_X$ is valid Pauli web.

\begin{figure}[!htb]
    \[
        \scalebox{.5}{\input{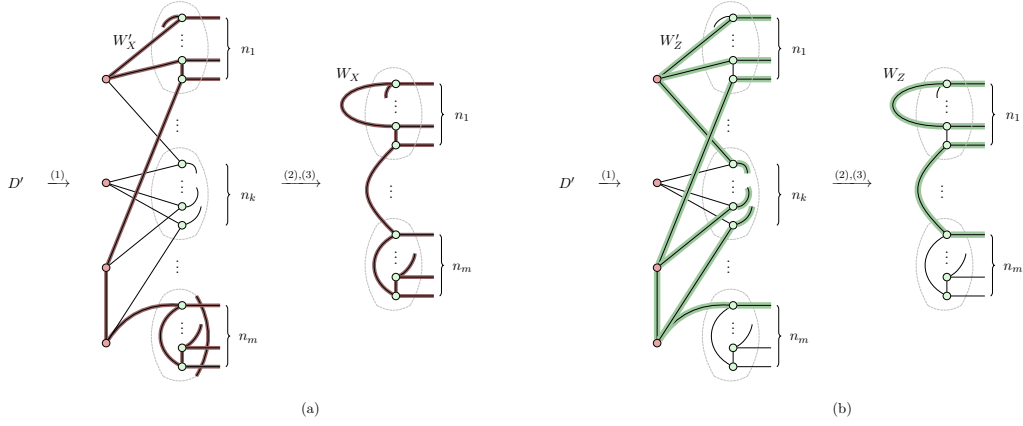}}
    \]
    \caption{Constructing the $X$- and $Z$-stabilising Pauli webs in the reduced skeleton $S$ based on the $X$- and $Z$-stabilising Pauli webs in the spider-skeleton $D'$.}
    \label{fig:z-stabilising}
\end{figure} 

Next, consider any Pauli string with evenly many Pauli $Z$'s (which is a stabiliser of $\ket{\CAT^n}$), as is shown in \cref{fig:z-stabilising}.(b). As $D'$ implements $\ket{\CAT^n}$, there exists a stabilising Pauli web $W'_Z$ on $D'$ that highlights the corresponding output edges. We will use $W'_Z$ to construct a $Z$-stabilising Pauli web $W_Z$ in $S$ that highlights the same output edges -- the edge highlighting $W_Z$ is obtained by tracking $W'_Z$ through the construction of $S$ from $D'$ in the natural way: for all edges that we remove from $D'$, drop the corresponding highlighting from $W'_Z$ in $S$. Furthermore, as we contract edges using $\text{Elim}_{\text{fe}}$, pick one of the highlightings of one of the two edges. For this to be unambiguously defined, we argue that both edges on either side of the $\text{Elim}_{\text{fe}}$ must have the same highlighting. As we only dropped edges that were not highlighted by the all-$X$ Pauli web, we either dropped none or all of the edges of $Z$-spiders. Therefore, the only spiders of degree two must be $X$-spiders. But then, in $W'_Z$, either all or none of the legs of any $X$-spider must be highlighted in green ($Z$). Hence, after one of the legs of the $X$-spider is removed, either both or none of the resulting wires are highlighted in green ($Z$). Therefore, $W_Z$ is uniquely defined. See \cref{fig:spiders3} for a visualisation of all possible cases.

\begin{figure}[!htb]
    \[
        \scalebox{.57}{\input{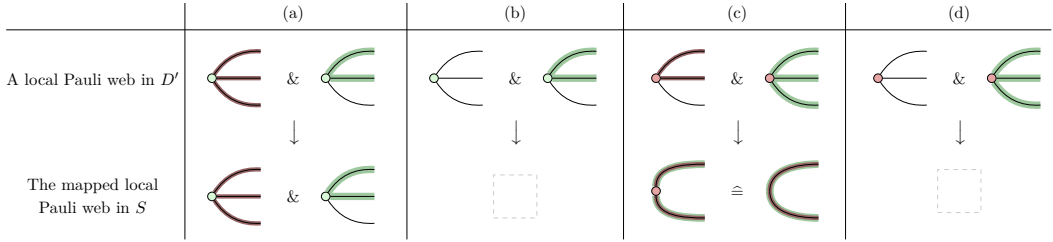}}
    \]
    \caption{Local transformations of a $Z$-stabilising Pauli web from $D'$ to $S$ induced by the transition of the $X$-stabilising Pauli web in \cref{fig:spiders,fig:reduced-spider-skeleton}. Let $D'$ be a spider-skeleton with the $X$-stabilising Pauli web $W'_X$ and a $Z$-stabilising Pauli web $W'_Z$. We consider a local neighbourhood of a degree-3 spider in $D'$ that is included in $W'_Z$. For a $Z$-spider, either all its edges or none of them are highlighted in $W'_X$; for an $X$-spider, either exactly two or none of the edges are. These possibilities correspond to cases (a),(b) and (c),(d), respectively. By \cref{def:reduced-spider-skeleton}, all incident edges in (a), and exactly two in (c), are retained in the reduced skeleton $S$, while all incident edges are removed in (b) and (d). Let $W_Z$ denote the image of $W'_Z$ in $S$. Accordingly, in $S$ the highlighted edges of the Z-spider are preserved in (a), all remaining edges of the $X$-spider are highlighted in (c), and no edges are highlighted in (b) nor (d). In each case, the local Pauli-web constraints are satisfied, so $W_Z$ remains valid.}
    \label{fig:spiders3}
\end{figure}  

Finally, we have to argue that $W_Z$ is a Pauli web. This is trivially the case, as $W'_Z$ respected the local conditions for Pauli webs so that for all edges in $S$, $W_Z$ must also respect the local highlighting conditions (the $Z$-spiders in $S$ come from $D'$ and hence must have evenly many edges highlighted). But then, we have shown that $S$ must stabilise any Pauli with evenly many $Z$'s. Therefore, $S$ indeed implements $\ket{\CAT^n}$. 


\textbf{Fault equivalence}
Finally, we argue that $S$ is $(t+1)$-fault-equivalent to $\ket{\CAT^n}$, the $n$-qubit \CAT state. Since $S$ only consists of $Z$-spiders, all $Z$-type edge flips can trivially be pushed to the boundary. Therefore, we are only concerned about $X$-type edges flips. Let $F$ be an undetectable fault on $S$ consisting of at most $t$ $X$-type faults. By~\cite{rusch2025completeness}, there exists some inconsequential fault $E$ such that $FE$ only acts on the boundary of $S$. We will now argue that $FE$ acts on at most $t$ boundary edges.

Take $F'$ on $D'$ to be the fault that acts on the same edges as $F$ on $S$. This is well-defined as each edge in $S$ comes from an edge in $D'$, or was the result of an application of $\text{Elim}_{\text{fe}}$.
As $F'$ is an all $X$-type fault, for edges that were newly obtained via $\text{Elim}_{\text{fe}}$ it does not matter which of the original edges in $D'$ we place the flip on (as these were $X$-spiders so that all edges are equivalent for $X$-type faults in $D'$). This is visualised in \cref{fig:fault-on-Dprime}. Note that $F'$ has the same weight as $F$.

\begin{figure}[!htb]
\[
    \scalebox{.6}{\input{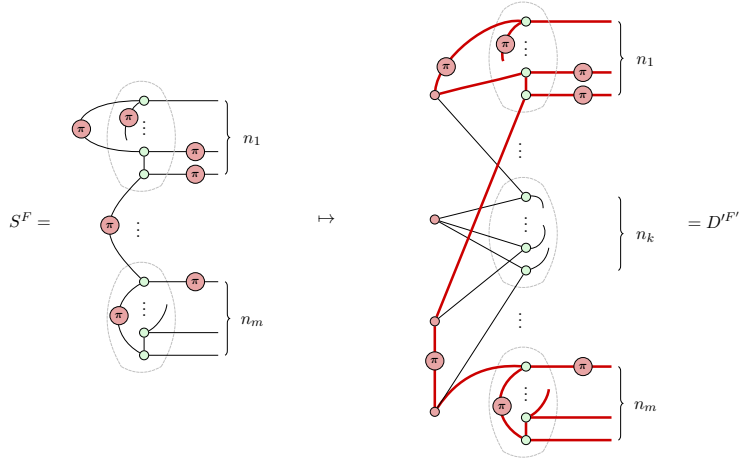}}
\]
\caption{Mapping an undetectable weight-$t$ fault $F$ on $S$ to a corresponding weight-$t$ fault $F'$ on $D'$. Highlighted edges in $D'$ indicate all edges in $S$ that are retained from \cref{def:reduced-spider-skeleton}.}
\label{fig:fault-on-Dprime}
\end{figure}


We can using the same mapping also construct a corresponding $E'$ on $D'$ from $E$. We claim that $F'E'$ also only acts on the boundary of $D'$. This is because if some $X$-faults in $S$ of $FE$ cancelled in $S$ because they surrounded a $Z$-spider, then this will also be true in $D'$, since it will be the `same' $Z$-spider. If instead two $X$-faults in $S$ cancel because they appear on the same edge, and this edge corresponded to an $X$-spider in $D'$, then they will also cancel, since there is an even number of them surrounding the $X$-spider. Hence, the internal faults of $F'E'$ also cancel in $D'$ and we are left with a boundary fault on the same output wires as in $S$. By~\cite{rusch2025completeness}, $F'$ must be an undetectable fault in $D'$, as it can be pushed to the boundary using $E'$. We note that $\operatorname{wt}(F') = \operatorname{wt}(F)\leq t$.


In \cref{lem:CNOTs-to-spider-count}, we show that $D'$ $t$-fault-tolerantly prepares $\ket{\CAT^n}$. This means $F'E'$ must act on at most $\operatorname{wt}(F')$ boundary edges (or on more than $n - \operatorname{wt}(F')$ boundary edges --- which by multiplication of the all-$X$ stabiliser is spacetime-equivalent to another fault that act on at most $\operatorname{wt}(F')$ boundary edges). Therefore, $FE$, which has equal weight to $F'E'$, also only acts on at most $\operatorname{wt}(F') = \operatorname{wt}(F)$ boundary edges (or on more than $n - \operatorname{wt}(F')$ boundary edges). This implies that every undetectable fault of weight $t$ in $S$ creates at most $t$ boundary edge flips and therefore $S$ $t$-fault-tolerantly prepares $\ket{\CAT^n}$.
\end{proof}

\begin{proof}[Proof of \cref{thm:CNOT-optimality}]
In \cref{lem:CNOTs-to-spider-count}, we show that if a CNOT circuit $C$ $t$-fault-tolerantly prepares $\ket{\CAT^n}$ and has $k$ CNOT gates, its spider-skeleton $D'$ contains at most $k - 1$ $Z$-spiders. According to \cref{def:associated-diagram,def:spider-skeleton,def:reduced-spider-skeleton,prop:Zonly}, we can efficiently find $S$, which is a Z-graph that is $t$-FE to $\ket{\CAT^n}$. Moreover, $S$ consists of at most $k - 1$ spiders.
\end{proof}  




\subsubsection{CNOT-Optimality over all CNOT+Hadamard Circuits}
\label{subsubsec:opt-cnot-h}

In this section we will generalise the ideas of the previous section which show how to reduce a generalised CNOT circuit that implements a fault-tolerant \CAT state to a Z-graph to work for a generalised CNOT+\emph{Hadamard} circuit. This will extend the proof of the CNOT optimality of our constructions based on Z-graphs to all circuits constructed using a combination of CNOT and Hadamard gates.



\begin{definition}
	A \emph{generalised \CH circuit} consists of CNOT gates, Hadamard gates, ancillae preparations in the $\ket{0}$ and $\ket{+}$ states, measurements in the $Z$- or $X$-basis, and Pauli $Z$ or $X$ corrections based on these measurement outcomes satisfying the constraints listed in \cref{def:gen-cnot}.
\end{definition}

In \cref{fig:hadamard}, (a) represents a Hadmard gate in ZX-calculus. (b) demonstrates how the highlighting of a Pauli web is flipped when it passes through a Hadamard gate. (c) and (d) introduces two fault-equivalent rewrite rules that will be used in the remainder of this section~\cite{rodatz2025fault,rusch2025completeness}.

\begin{figure}[!htb]
\[
    \scalebox{.8}{\input{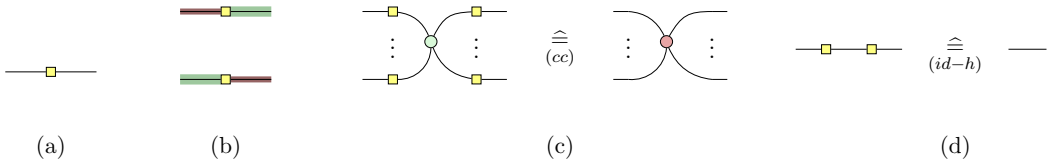}}
\]
\caption{The representation of a Hadamard in a ZX-diagram is shown in (a). A Hadamard sends Pauli $X$ to $Z$ and $Z$ to $X$ via conjugation. Correspondingly, (b) visualises its Pauli web. As shown in (c) and (d), the colour-change rewrite rule (cc) and the Hadamard cancellation rule (id-h) are fault-equivalent.}
\label{fig:hadamard}
\end{figure}

\begin{procedure}
A generalised CNOT+Hadamard circuit $C_{\CH}$ can be represented by a ZX-diagram by translating the circuit post-selected onto the all 0 measurement outcome into a diagram following the table of \cref{fig:mappings} and \cref{fig:hadamard}.(a). We refer to this as the \emph{associated diagram} $D_{\CH}$. 
\label{proc:associated-cnot-h}
\end{procedure}    

\begin{remark}
As shown in \cref{fig:component-cnot-h}, $D_{\CH}$ is composed of 3- and 1-ary spiders as well as Hadamard gates. It is equal to the linear map that $C_{\CH}$ deterministically implements. Reasoning analogously as \cref{prop:cat-state-noise-models}, we can show that $D_{\CH} \;\FaultEq\; C_{\CH}$.
\label{rmk:associated-cnot-h}
\end{remark}

\begin{figure}[!htb]
\[
    \scalebox{.8}{\input{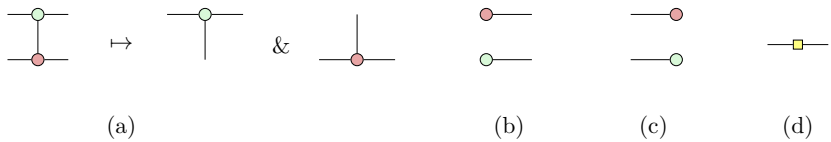}}
\]
\caption{$D_{\CH}$ contains 3-ary $Z$- and $X$-spiders (a, corresponding to CNOT gates), 1-ary spiders (b and c, corresponding to ancilla preparations and postselected measurement outcomes), and Hadamard gates (d).}
\label{fig:component-cnot-h}
\end{figure}


To show that Z-graphs can continue to represent optimal \CAT-state preparations even under the larger class of \CH circuits we have to show that the addition of Hadamard gates doesn't allow us to get a lower CNOT count. We do this by showing that we can reduce the relevant part of the circuit to be Hadamard-free, which allows us to then use the techniques from the previous section.

As in the construction of the reduced skeleton in the previous section, we will do this by considering the Pauli web $W'_X$ of $D_{\CH}$ that highlights all the free edges red ($X$), which must exist since $D_{\CH}$ implements the \CAT state. Since the diagram now contains Hadamards, this Pauli web is now not necessarily coloured fully red ($X$). Instead, each Hadamard the web `encounters' changes its colour to green ($Z$). We will use the fault-equivalent rewrites of \cref{fig:hadamard} to rewrite the circuit to make this whole Pauli web red ($X$).

\begin{procedure}\label{proc:reduce-green}
  Let $D_{\CH}$ be the associated diagram of a \CH circuit, and let $W'_X$ be the smallest Pauli web highlighting all free edges red ($X$). We produce the \emph{H-reduced circuit} $D'_{\CH}$ in the following way. Keep doing the following steps as long as at least one Hadamard is highlighted by $W'_X$.
  \begin{equation*}
    \input{figs/optimality/hadamard-pauli-reduce-1.tikz}
  \end{equation*}
  Each step reduces the number of edges or edge fragments coloured green ($Z$) and hence this terminates. Note that as long as one Hadamard has an adjacent edge coloured, that it will always have a red ($X$) side and a green ($Z$) side, and that the only way that $W'_X$ can contain green ($Z$) edges is if it encounters a Hadamard along its path since it starts fully red ($X$) on the outputs. Hence, after no Hadamard is coloured, the entire Pauli web $W'_X$ must be coloured red ($X$).
\end{procedure}
Note that the H-reduced circuit $D'_{\CH}$ can contain a combination of CNOT and CZ gates, since we are pushing the Hadamards freely. However, the CZ gates will only appear on the part of the diagram not coloured red ($X$) by the Pauli web.

The reason we do this transformation to the H-reduced circuit, is because each 3-ary Z-spider in $D'_{\CH}$ that is fully coloured red ($X$) by $W'_X$ we can now associate a CNOT gate, giving us a way to relate the CNOT cost of the original circuit to the number of Z-spiders. Note that \cref{proc:reduce-green} didn't change the number of entangling gates in the circuit.

We can now apply the rewrites of \cref{fig:spider-skeleton} again to remove the 1-ary spiders from the diagram to build the skeleton of this H-reduced circuit. After we do this, the part of $D'_{\CH}$ that is coloured red ($X$) by $W'_X$ consists solely of 3-ary Z-spiders and X-spiders and does not contain any Hadamards. We can then apply \cref{def:reduced-spider-skeleton} to produce the reduced skeleton in exactly the same way, only keeping the part of the diagram that is coloured red ($X$).
Proving that the resulting diagram is a fault-tolerant cat state follows entirely analogously as that of \cref{prop:Zonly}. This is because the complications with the Hadamards only appear outside the region coloured red ($X$), and that entire region gets dropped in the reduced skeleton, so that we don't have to consider faults there, or what happens if a Z-Pauli web gets turned red ($X$) by a Hadamard there. We then have the analogous version of \cref{thm:CNOT-optimality}, but then for \CH circuits.

\begin{theorem}\label{thm:CNOT-H-optimality}
    Let $C$ be generalised \CH circuit containing $k$ CNOT gates, which $t$-fault-tolerantly prepares the $n$-qubit \CAT state.
  Then we can efficiently find a $Z$-graph containing at most $k-1$ spiders, which also $t$-fault-tolerantly prepares the $n$-qubit \CAT state.
\end{theorem}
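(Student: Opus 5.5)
The plan is to mirror the proof of \cref{thm:CNOT-optimality}, inserting \cref{proc:reduce-green} as a preprocessing step.

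\textbf{Step 1: pass to the associated diagram.} Let $D_{\CH}$ be the associated diagram of $C$ (\cref{proc:associated-cnot-h}). By \cref{rmk:associated-cnot-h}, $D_{\CH}\FaultEq C$, so $D_{\CH}\wFaultEq{t+1}\ket{\CAT^n}$.

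\textbf{Step 2: push the Hadamards off the web.} Apply \cref{proc:reduce-green} to obtain $D'_{\CH}$. Each step uses only the colour-change and id-h rules of \cref{fig:hadamard}, which are fault-equivalent. By transitivity and compositionality of fault equivalence~\cite{rodatz2025fault,rusch2025completeness}, $D'_{\CH}\wFaultEq{t+1}\ket{\CAT^n}$. These rewrites also preserve the number of entangling gates. After them, the region $R$ highlighted by $W'_X$ is entirely red and contains no Hadamards. Every 3-ary spider in $R$ then has all its edges red (a $Z$-spider) or exactly two red (an $X$-spider). Each such spider still comes from a distinct CNOT of $C$, because a CZ would produce a pair of $Z$-spiders joined through a Hadamard, and the web carries no Hadamards.

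\textbf{Step 3: build the skeleton and count.} Remove 1-ary spiders using the rewrites of \cref{fig:spider-skeleton} and \cref{lem:copy}. These can only make the diagram fault-bounded by the previous one, so the skeleton is still $(t+1)$-fault-bounded by $\ket{\CAT^n}$, and the reverse direction holds trivially as in \cref{lem:CNOTs-to-spider-count}. For the ``$+1$'': at least one ancilla path must start in a 1-ary $Z$-spider (a $\ket{+}$ preparation, possibly after colour changes). Otherwise the state would be an image of $\ket{0\cdots0}$ under CNOT and CZ gates and hence a product state, which it is not. The case analysis of \cref{fig:removal} then removes at least one 3-ary spider that came from a CNOT. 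So if the reduced skeleton below has $k'$ $Z$-spiders, $k\ge k'+1$.

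\textbf{Step 4: reduce the skeleton.} Apply \cref{def:reduced-spider-skeleton}, keeping only the $W'_X$-highlighted part. The proof of \cref{prop:Zonly} then carries over verbatim, giving a $Z$-graph $S$ with at most $k-1$ spiders that is $t$-FE to the \CAT state:
\begin{itemize}
\item $S$ contains only 3-ary $Z$-spiders;
\item $S$ is connected, by minimality of the web;
\item $S$ stabilises all-$X$ and every even-$Z$ string, transporting $Z$-webs across the deleted region;
\item every undetectable fault of weight $\le t$ lifts to an equal-weight fault on the skeleton.
\end{itemize}
Efficiency holds because each step is a polynomial-time Pauli-web computation or local rewrite.

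\textbf{Main obstacle.} The hard part is Step 4's transport of $Z$-stabilising webs $W'_Z$ from the skeleton to $S$. Outside $R$, a $Z$-web may turn red at a Hadamard or at a CZ. We need its restriction to $R$ to still satisfy the local conditions at the boundary of $R$. My approach is to check that every edge leaving $R$ attaches to an $X$-spider in $R$ with exactly two red legs. Dropping the third leg then only changes whether the $Z$-web passes straight through that $X$-spider, which is consistent with the case analysis of \cref{fig:spiders3}. If that check fails, the fallback is to choose $W'_Z$ to be the product of $W'_X$ with a minimal web.
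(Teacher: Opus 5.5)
Your route is the paper's own: pass to $D_{\CH}$, run \cref{proc:reduce-green}, build the skeleton, and reuse the proof of \cref{prop:Zonly} on the Hadamard-free red region $R$. The paper only asserts that this carries over ``analogously''. The step that fails as you wrote it is the ``$+1$'' in Step~3.

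After \cref{proc:reduce-green}, $D'_{\CH}$ generally still contains Hadamards off the web. So the inference ``otherwise the state would be an image of $\ket{0\cdots0}$ under CNOT and CZ gates, hence a product state'' is not valid. The implication ``all preparations are $\ket{0}$ $\Rightarrow$ product state'' fails once Hadamards are present (e.g.\ $\ket{00}$, then $H$, then a CNOT). Even granting that some line starts in a 1-ary $Z$-spider, that spider may be followed by a Hadamard, and then the case analysis of \cref{fig:removal} does not apply. (A smaller slip: only the $Z$-spiders of $R$ are injectively assigned to CNOTs; $X$-spiders of $R$ can share a gate with them.)

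The missing idea is to find the $\ket{+}$ inside $R$, using time order. \cref{proc:reduce-green} preserves time order, since it only recolours spiders and moves Hadamards along edges. Also, red edges carry no Hadamard. Let $s$ be the earliest spider touched by $W'_X$.
\begin{itemize}
\item If $s$ were a gate $Z$-spider, its red incoming line leg would reach an earlier spider of $R$, a contradiction.
\item If $s$ were a gate $X$-spider, either its incoming leg is red (same contradiction) or its cross leg is red. In the latter case its partner is joined by a Hadamard-free edge, so it is a $Z$-spider of $R$ at the same time step, and its incoming leg gives the contradiction.
\item A measurement is excluded in the same way.
\end{itemize}
So $s$ is a preparation with a red leg, hence a 1-ary $Z$-spider. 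Its red outgoing edge must reach a gate spider $g$. An output would split off a $\ket{+}$ factor, and a measurement would form a scalar component, which the minimality of $W'_X$ excludes. Now either $g$'s gate has no fully red $Z$-spider and contributes nothing to $S$, or fusing or copying $s$ into $g$ deletes that $Z$-spider. Either way at most $k-1$ gates contribute.

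By contrast, the transport of $W'_Z$ that you flag as the main obstacle is routine. $Z$-spiders of $R$ are fully red, so every edge leaving $R$ does sit at an $X$-spider with two red legs. The green part of $W'_Z$ on $R$ is therefore already a valid web on $S$.
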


\subsection{Proofs of \texorpdfstring{\cref{sec:ft-cat-graph}}{Section 5.2}}
\label{subsec:ft-construction}

\begin{proof}[Proof of \cref{prop:undetectable-fault-is-cut}]
($\Rightarrow$)
Let $F$ be an undetectable fault on $D$.
If $D$ implements the \CAT state fault tolerantly, there exists an equivalent fault $F'$ that acts only on the boundary of $D$.
Since $F$ and $F'$ are equivalent, there exists a set of spider stabilisers $\mathcal S$ such that
$F \prod_{s \in \mathcal S} s = F'$~\cite{rusch2025completeness}.
Because $D$ consists only of $Z$ spiders and $F$ consists only of $X$ spiders, the stabilisers in $\mathcal S$ may be chosen to be $Z$ spiders fired in the $X$ basis.
Thus, $V_{\mathcal S} \subseteq V$ the set of spiders fired by $\mathcal S$ determines a bipartition $(V_{\mathcal S}, V \setminus V_{\mathcal S})$ of the vertex set of $G$.

Firing a $Z$ spider in the $X$ basis toggles the presence of an $X$ fault on each incident edge.
Consequently, after firing all spiders in $V_{\mathcal S}$, an edge carries an $X$ fault if and only if it is incident to exactly one vertex in $V_{\mathcal S}$.
Since $F'$ acts only on the boundary, every internal edge must be toggled an even number of times, and hence the set of faulty edges is precisely the set of edges crossing the cut $(V_{\mathcal S}, V \setminus V_{\mathcal S})$.
Therefore, $F$ corresponds to a cut of $G$.

($\Leftarrow$)
Conversely, suppose $F$ corresponds to a cut of $G$.
Then there exists a partition $V = V_1 \sqcup V_2$ such that the faulty edges are exactly those with one endpoint in $V_1$ and the other in $V_2$.
As before, firing all spiders in $V_1$ in the $X$ basis removes all faults on internal edges and pushes the fault entirely to the boundary.
Since the fault $F$ can be pushed to the boundary, it is therefore undetectable.
\end{proof}

\begin{proof}[Proof of \cref{prop:cnot-optimal}]
Let $C$ be a generalised CNOT circuit that $t$-fault-tolerantly prepares $\ket{\CAT^n}$. Let $S$ be the reduced skeleton of $C$, then $S$ is a $Z$-graph. Let $(G,M)$ be its underlying marked graph. Suppose that $G$ has $v$ vertices and $M$ contains $n$ marks. Let $r_t^n$ be the minimum vertex ratio over all $t$-robust graphs with $n$ marks.

Suppose towards contradiction that the CNOT count of $C$ is at most $n(r_t^n +1)$. By \cref{thm:CNOT-optimality}, we can efficiently find a $Z$-graph $S$ containing at most $n(r_t^n +1) - 1$ spiders, where $S$ is the reduced skeleton of $C$ and it $t$-fault-tolerantly prepares $\ket{\CAT^n}$. This implies the number of vertices in $S$ is at most $nr_t^n - 1$, which means the vertex ratio is at most $(nr_t^n - 1)/n = r_t^n - 1/n < r_t^n$. This yields a contradiction. Therefore, any generalised CNOT circuit that $t$-fault-tolerantly prepares the $n$-qubit \CAT state contains at least $n(r_t^n+1)+1$ CNOT gates. 
\end{proof}




\begin{definition}
    A \emph{unit tree} $c_t$ is the elementary local pattern that we need to consider when decomposing a phase-free $Z$-diagram up to a weight-$t$ fault under the edge-flip noise model. For simplicity, we use vertices in place of all internal spiders. Each unit tree is three-regular. When an edge has two vertices, we call it an \emph{internal edge}. Otherwise, we call it a \emph{leaf}. The number of leaves in $c_t$ is $t$.
    \begin{figure}[!htb]
        \[
            \scalebox{1}{\input{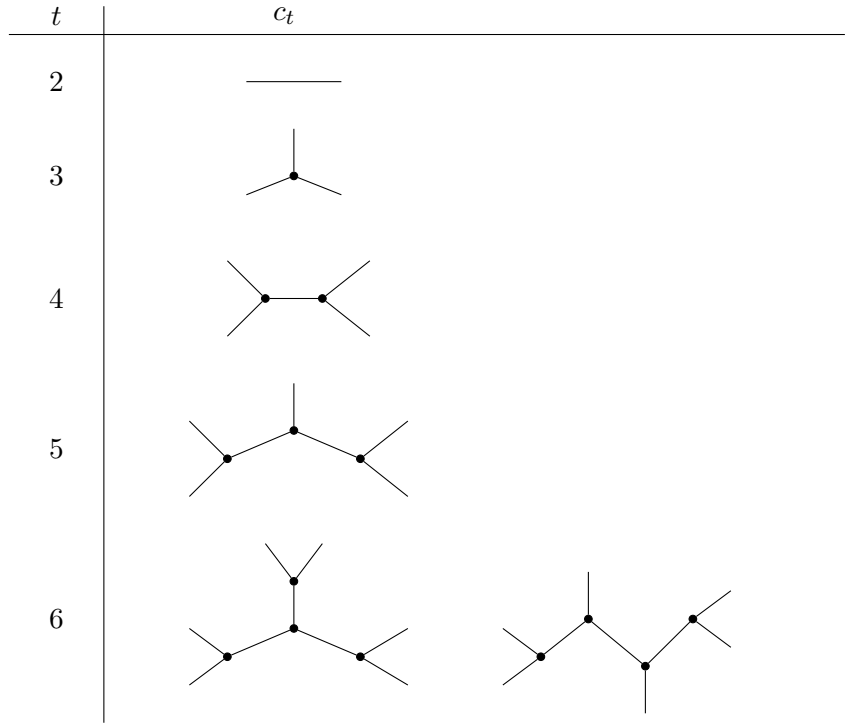}}
        \]
        \caption{The unit trees for $2\leq t \leq 6$. Each unit tree is a local patterns that contains no cycles and has $t$ leaves. In particular, it can contain at most $t$ marks.}
        \label{fig:unit-branch}
    \end{figure}    
\label{def:unit-cells}
\end{definition}  

\begin{lemma}
For $t=3$, the optimal vertex ratio is $r_3 = \frac23$.
\label{lem:vertex-ratio-r3}
\end{lemma}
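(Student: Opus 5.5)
The proof has two halves, mirroring \cref{lem:vertex-ratio-r2}: a lower bound $r_3\geq \frac23$ obtained by a local counting/charging argument on $3$-robust marked graphs, and a matching upper bound obtained from an explicit infinite family of $3$-robust marked cubic graphs with exactly $3$ marks per $2$ vertices.

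For the lower bound, let $(G,M)$ be a $3$-robust marked graph with $v$ vertices and $n$ marks, and $n$ large. The goal is $n\leq \frac32 v$ (up to an additive constant). The relevant cuts of size at most $3$ are the following.
\begin{itemize}
\item Cutting one edge twice isolates the open fragment between the cuts. That fragment must carry at most $2$ marks, so by \cref{lem:vertex-ratio-r2}-type reasoning no edge carries $3$ marks.
\item Cutting one edge at two points and a second edge at one point is also available. In particular, a fragment of an edge together with a local piece must not collect $4$ marks.
\item Cutting the three edges incident to a single vertex $x$ cuts off the unit tree $c_3$: a vertex with its three half-edges. The marks on the three half-edges (taken on the side of $x$ of each cut) must number at most $3$.
\end{itemize}
I will use the cut at each vertex as the charging rule. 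Each edge $uv$ is split at its midpoint and its marks are distributed to the two endpoints. Since the cut positions are free, the marks of an edge can be split between its endpoints arbitrarily. Combining the two cut families around $u$ and $v$ should show that the star at every vertex can be charged at most $\frac32$ marks: each edge contributes at most $\frac12$ per endpoint on average once two-mark edges are forced to have neighbours with few marks. Concretely, I will show that if an edge has $2$ marks, then the other four edges at its endpoints together carry few enough marks, via the $3$-cut that takes both cuts on that edge and one more. Summing over vertices gives $n\leq\frac32 v$, so $r_3\geq\frac23$. If a graph has a small nonlocal cut, it only removes further marks, since both sides must then be small, so only local cuts need to be analysed for the bound.

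For the upper bound, I take an infinite family of cubic graphs of girth at least $4$ with no nonlocal $3$-cuts, i.e.\ cyclically $4$-edge-connected cubic graphs. Examples are prisms $C_m\times K_2$ for $m\geq 5$, or Möbius ladders. On these I choose a marking with one mark on each edge of a perfect matching-free pattern. The mark count must be $\frac32 v$, which equals the number of edges, so every edge is marked exactly once. I then check $3$-robustness. Any cut of at most $3$ edges that is not nonlocal cuts off a tree. By cyclic $4$-edge-connectivity the cut-off side is either a fragment of a single edge (at most $1$ mark) or a single vertex with three half-edges. Placing each mark so that it lies on exactly one side gives at most $3$ marks on the small side in both cases. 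Cuts through one edge twice plus another edge give fragments with at most $1$ mark. This verification is also what \cref{prop:sufficient-construction} packages, with each edge marked at most once.

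The main obstacle is the lower-bound charging argument. The difficulty is that marks sit on edges whose cut positions are flexible, so the adversary's cuts interact. I must carefully show that edges with $2$ marks cannot cluster at a vertex, and control vertices where such edges meet. I expect a small case analysis on the multiset of mark counts (each $0,1,2$) of the three edges at a vertex, using the $3$-cut around a vertex and the $3$-cuts around an edge pair, to suffice.
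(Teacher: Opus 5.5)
Your upper bound is fine and uses the same construction as the paper. You take a cubic family of girth at least $4$ with no nonlocal $3$-cut and mark every edge exactly once. Then a local cut of weight at most $3$ can only isolate an edge fragment or a single vertex with its three half-edges.

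The gap is the lower bound, which is the real content of the lemma and which you do not prove. You defer it to a charging scheme that ``should show'' a charge of at most $\frac32$ per vertex, plus a case analysis you expect to work. The one concrete mechanism you name fails. A $3$-cut made of two cuts on a $2$-mark edge $e=uv$ plus one further cut only cuts off the middle fragment of $e$, whose $2$ marks are permitted. It never puts the marks of the four other edges at $u$ and $v$ on a small side, so it says nothing about them. Likewise, no cut of the kind in your second bullet produces a component consisting of ``a fragment of an edge together with a local piece''. Finally, the vertex cut with cut points at the midpoints, as in your half-edge bookkeeping, only bounds the marks on the three halves at $x$ by $3$. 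Summing gives $n\le 3v$, i.e.\ $r_3\geq\frac13$, a factor of two short. This is presumably why you felt extra cuts were needed.

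The missing idea is to use the freedom of cut positions fully in your third bullet: cut each of the three edges at $x$ on the far side of all of its marks. The component containing $x$ then holds every mark on the three edges at $x$; call this number $m_3(x)$. If $m_3(x)\geq 4$, then for $n$ large the other component also holds more than $3$ marks, so the cut is bad. Hence $m_3(x)\le 3$ at every vertex. Each mark sits on an edge with two endpoints, so $\sum_x m_3(x)=2n$, which gives $2n\le 3v$ and $v/n\geq\frac23$. This double count is exactly the paper's proof. In your language it is the charging rule that gives each endpoint half of every mark on its incident edges. It also handles $2$-mark edges automatically: two of them cannot share a vertex, and each forces the other two edges at either endpoint to carry at most one mark between them. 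So no case analysis is needed.
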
 
\begin{proof}
Since $t=3$, it suffices to consider three fault locations. This leads us to consider the unit tree $c_3$ in \cref{fig:unit-branch}. Firstly, we show that $r_3 \geq \frac23$. Suppose towards contradiction that $r_3 < \frac23$.
Let $v$ and $n$ be the number of vertices and target qubits respectively in a graph with vertex ratio $\frac{v}{n} < \frac{2}{3}$. Let $m_3(v)$ be the number of marks on edges neighbouring vertex $v$. We have the inequality on the average number of marks around a vertex
\begin{equation}
    \frac{\sum_{v\in V}m_3(v)}{|V|}=\frac{2n}{v}>3.
\end{equation}
Here we have used that every mark gets counted twice, once by each of its neighbouring vertices, and that $r_3=\frac{v}{n}$. By the pigeonhole principle, some vertex $v$ must have $m_3(v)\geq4$, in which case we get a problematic fault.

Next, we show that $r_3 \leq \frac23$ by giving an explicit construction of a family of graphs. In \eqref{eq:c3-construction}, the graph has no weight-3 non-local cuts.
\begin{equation}
    \scalebox{1}{\input{figs/flag-by-construction/c3-construction.tikz}}
\label{eq:c3-construction}    
\end{equation}
Thus, we only have to consider local weight-$3$ faults.

By adding one mark on each edge we have the following.
\[
\scalebox{1}{\input{figs/flag-by-construction/c3-construction2.tikz}}
\]
Since the only undetectable weight-3 faults must act around the same vertex, we see that this encloses precisely 3 marks, meaning we can push the fault to the boundary where it becomes a weight-3 fault.

This proves $r_3 = \frac23$.
\end{proof}    

For the following results, note that even though there will be fewer marks than edges, in principle it is still possible for there to be multiple marks on the same edge. This number can't be higher than 2 as then a weight 2 fault would propagate to a higher-weight fault. Hence, the number of marks per edge is either 0, 1 or 2.

\begin{lemma}
For $t=4$, the optimal vertex ratio is $r_4 = \frac56$.
\label{lem:vertex-ratio-r4}
\end{lemma}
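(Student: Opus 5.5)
The plan is to prove the two inequalities $r_4\ge\frac56$ and $r_4\le\frac56$ separately, as in \cref{lem:vertex-ratio-r2,lem:vertex-ratio-r3}. The lower bound is an averaging (pigeonhole) argument over copies of the unit tree $c_4$ of \cref{fig:unit-branch}. The upper bound needs an explicit infinite family of marked graphs in which the local constraints are all tight.

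\textbf{Lower bound.} For $t=4$ the relevant local pattern is $c_4$: an edge $e=uv$ together with the four other edges at $u$ and $v$. Cut those four leaf edges on the far side of their marks. This is a cut of weight $4$ that encloses all marks on $e$ and on its four neighbouring edges. Since $n>8$, the complement contains more than $4$ marks. So $t$-robustness (\cref{def:good-cut}) forces, for every edge $e$,
\[
m_4(e):=m(e)+\sum_{e'\sim e} m(e')\le 4 .
\]
Next I sum $m_4(e)$ over all $\frac{3v}{2}$ edges. In a 3-regular graph each edge is adjacent to exactly four others, so each mark is counted $1+4=5$ times and $\sum_e m_4(e)=5n$. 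If $v/n<\frac56$, the average of $m_4$ is $\frac{5n}{3v/2}=\frac{10n}{3v}>4$. Some edge then has $m_4(e)\ge5$, which gives a bad weight-4 cut. Hence $r_4\ge\frac56$.

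\textbf{Main obstacle.} The counting must survive short cycles and parallel edges, where the four ``leaf'' edges of $c_4$ are not distinct. There the cut has fewer distinct edges, so the multiplicity count $5n$ overcounts distinct enclosed marks. I would handle these configurations separately using the smaller cuts they create. A double edge, or a triangle, is cut off by at most $2$, respectively $3$, edges. The weight-2 and weight-3 conditions (at most two marks per edge, at most three marks around a vertex) then bound the marks there. I would use this to show that such regions can only have a worse local mark density than the $c_4$ bound allows. Concretely, I would assign the marks of each degenerate region to its incident edges and check that the charge per edge stays $\le 4/5\cdot$(its share). If this becomes messy, I would first reduce to girth $\ge5$ by a local rewrite argument.

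\textbf{Upper bound.} Equality in the averaging requires every edge to have $m_4(e)=4$. With $0/1$ marks, I would take the unmarked edges $U$ to be an \emph{efficient (perfect) edge dominating set}: an induced matching such that every edge lies in, or is adjacent to, exactly one edge of $U$. Then each closed edge-neighbourhood contains exactly one unmarked edge, so $m_4\equiv4$. Also $|U|=\frac{3v}{10}$, giving $n=\frac{6v}{5}$ and vertex ratio $\frac56$.

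It remains to check all cuts of weight $\le4$, in three steps.
\begin{itemize}
\item Take the underlying graph to be cyclically $5$-edge-connected with girth $\ge5$. Then by \cref{def:small-big-cut} every cut of size $\le4$ is local.
\item The trees cut off by $\le4$ edges are an edge fragment ($\le 2$ marks), a single vertex, or a single edge ($c_4$). For a single vertex, its three edges carry at most $3$ marks, and only $2$ if the vertex meets $U$. For a single edge, $m_4=4$.
\item So all such cuts are good.
\end{itemize}
For the infinite family, I would start from a small base graph with a perfect edge dominating set and the required connectivity; the Petersen graph is my first candidate. I would then take its finite covers, since an efficient edge dominating set lifts to one in any covering graph. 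I would choose covers, for example random lifts or the generalized-Petersen-type families, that retain cyclic $5$-edge-connectivity. Verifying this last connectivity property for the chosen infinite family is the second point where I expect real work.
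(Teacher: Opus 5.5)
Your proposal is correct in outline and is essentially the paper's proof. The lower bound is the same averaging of $m_4(e)$ over the $\frac32 v$ edges, with each mark counted $5$ times, and the paper's upper-bound family ($k$ glued copies of a $10$-vertex, $12$-mark fragment, giving the Petersen graph for $k=1$ and the dodecahedron for $k=2$) amounts to your construction: cyclic $k$-fold covers of the Petersen graph whose $3$ unmarked edges per copy necessarily form an efficient edge dominating set. The paper also only asserts that these graphs have no nonlocal cut of weight at most $4$, so the connectivity check you flag is not done there either (it does hold, e.g.\ for the covers $\mathrm{GP}(5k,2)$), and your degeneracy worry is harmless: in a simple graph the four leaf edges are always distinct (triangles only make their far endpoints coincide), and for a double edge the $2$-cut around it already forces the multiplicity-counted $m_4(e)\le 4$.
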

\begin{proof}
We first show that $r_4\geq\frac56$.
Every edge neighbours 4 other edges.
For each edge, we will count the number of marks on it or on neighbouring edges.
Let $m_4(e)$ be the number of marks on an edge $e$ or on any of the four edges neighbouring edge $e$. For the purposes of contradiction, assume that $r_4<\frac56$ and let $n$, $v$, and $e$ be the number of marks, vertices, and edges in a graph with a vertex ratio $\frac vn<\frac56$. We have the inequality on the average number of marks around an edge
\begin{equation}
    \frac{\sum_{e\in E}m_4(e)}{|E|}=\frac{5n}{e}=\frac{5n}{\frac32v}>4.
\end{equation}
Here we have used that every mark gets counted 5 times, once by the edge it is on and once by each of its edge's neighbouring edges. By the pigeonhole principle, some vertex $e$ must have $m_4(e)\geq5$, in which case we get a problematic fault.
Since a cut of weight 4 on the four edges around edge $e$ will push out to 5 marks, this would not be fault-tolerant, and thus our assumption that $r_4<\frac56$ must be wrong.

We now show that $r_4\leq\frac56$.
\begin{figure}
    \centering
    \scalebox{1}{\input{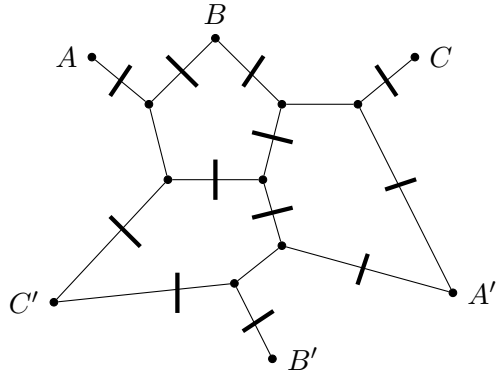}}
    \caption{Shown is a fragment of a graph with 10 vertices, 15 edges, and 12 marks. The vertex ratio is the optimal amount for $t=4$ of $\frac56$. The vertices $A$, $C$, and $B'$ are only $\frac13$ of a vertex each and $B$, $A'$, and $C'$ are $\frac23$ of a vertex each. Any $k$ copies of this graph fragment can be put together into a larger graph where $A$, $B$, and $C$ in one copy should be connected to $A'$, $B'$, and $C'$ in the next. When $k=1$ this produces the Petersen graph and when $k=2$ this produces the dodecahedron graph.}
    \label{fig:r4-optimal}
\end{figure}

In Figure~\ref{fig:r4-optimal} we see an example of a graph fragment with the optimal vertex ratio of $\frac56$.
Copies of this graph fragment can be identified together to create a 3-regular graph with $10k$ vertices for any positive integer $k$.
For any tree of perimeter 4, as seen in Figure~\ref{fig:unit-branch}, the graph fragment contains at most 4 marks, even across the boundary.
This is guaranteed by having at most 1 mark per edge and no two adjacent vertices surrounded by 3 marks.
We also note that no number of copies of this graph can be cut by a nonlocal cut of weight 4 or less.

Thus we have $r_4=\frac56$.
\end{proof}

\begin{lemma}
For $t=5$, the optimal vertex ratio is $r_5 = 1$.
\label{lem:vertex-ratio-r5}
\end{lemma}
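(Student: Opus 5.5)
The plan is to mirror \cref{lem:vertex-ratio-r3,lem:vertex-ratio-r4}: a double-counting lower bound over the unit tree $c_5$ of \cref{fig:unit-branch}, and an explicit infinite family with a regular marking for the upper bound.

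\textbf{Lower bound $r_5\ge 1$.} The unit tree $c_5$ is a path $u\!-\!w\!-\!x$ of three vertices. It has $2$ internal edges and $5$ leaf edges. Cutting the five leaf edges just beyond their marks is a weight-$5$ cut whose tree side contains every mark on these $7$ edges. So $t$-robustness forces at most $5$ marks on the $7$ edges of every such path. Smaller unit trees give further constraints:
\begin{itemize}
\item an edge together with its $4$ neighbouring edges carries at most $4$ marks ($c_4$);
\item a vertex star carries at most $3$ marks ($c_3$);
\item any edge carries at most $2$ marks ($c_2$).
\end{itemize}

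\textbf{Why naive averaging falls short.} Count pairs (mark, path containing the marked edge), assuming girth at least $6$ so nothing is double counted. A mark on an edge $e=uv$ lies in the $3$ paths centred at $u$, the $3$ centred at $v$, and $2$ paths centred at each of the $4$ other neighbours of $u$ and $v$. That is $14$ paths in total, while there are $3V$ paths. Averaging gives $14n/(3V)\le 5$, i.e.\ only $V/n\ge 14/15$. The path constraint alone therefore does not suffice.

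\textbf{Closing the gap.} I would use a discharging / LP-duality argument that combines the path constraints with the edge constraints ($\le 4$ marks on $5$ edges) and vertex constraints ($\le 3$). The aim is a nonnegative combination of these local inequalities in which every mark receives total weight exactly equal to the weight per vertex. That yields $n\le V$. Concretely, I would try the edge-neighbourhood inequality summed over all edges, weight $\alpha$, plus the path inequality, weight $\beta$. Each mark is then counted $5\alpha+14\beta$ times, and the right-hand side is $\tfrac32\cdot 4\alpha V+3\cdot 5\beta V$. Choosing $\alpha,\beta$ to make the ratio equal $1$ is impossible with positive weights alone; both averages are below $1$. So I expect to need a genuinely structural step instead: show that a vertex with all three incident edges marked, or an edge with a double mark, forces nearby deficits, and redistribute charge. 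Separately handle girth below $6$ using \cref{rem:no-cycles} (girth $\ge 6$ is forced when there are no nonlocal $5$-cuts), since short cycles change the counts. \emph{This step is the main obstacle}, and what I would try first is a charge of $1$ per vertex moved onto marks along unmarked edges.

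\textbf{Upper bound $r_5\le 1$.} Take a cubic graph with a perfect matching $P$, girth at least $6$ and no nonlocal cut of size $\le 5$, for example a suitable cyclically $6$-edge-connected family. Mark each edge outside $P$ exactly once. Then there are $2V/2\cdot\ldots$, precisely $|E|-|P|=\tfrac32V-\tfrac12V=V$ marks, so the ratio is $1$. The local checks are:
\begin{itemize}
\item A star contains exactly one matching edge, hence $2$ marks.
\item An edge neighbourhood contains at least one matching edge among its $5$ edges, hence at most $4$ marks.
\item A $3$-vertex path has its three vertices matched by at least two distinct matching edges among its $7$ edges, since at most one internal edge can be in $P$. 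It therefore has at most $5$ marks.
\end{itemize}
Since every small cut is local by the connectivity assumption, and these trees are exactly the local cuts of size $\le 5$, the marked graph is $5$-robust. Exhibiting an infinite family with these connectivity properties (e.g.\ via the hill-climbing/SAT verification, or a known cage-like family) completes the proof.
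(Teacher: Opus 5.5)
\textbf{The gap is the lower bound.} Your proposal does not prove $r_5\ge 1$. It stops at the averaging bound $V/n\ge\tfrac{14}{15}$ and leaves the ``closing the gap'' step unexecuted. Better weights will not fix this. Take a cubic graph with no nonlocal cut of size at most $5$ (for instance a large toroidal honeycomb) and put $\tfrac57$ of a mark on every edge. Every unit-tree constraint then holds:
a single edge carries $\tfrac57\le 2$;
a star carries $\tfrac{15}{7}\le 3$;
an edge neighbourhood carries $\tfrac{25}{7}\le 4$;
a three-vertex path carries exactly $5$;
unions of disjoint pieces within the budget of $5$ cuts also fit.
Yet this marking has $\tfrac{15}{14}$ marks per vertex. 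By LP duality, any argument that only combines the local inequalities linearly is capped at $\tfrac{14}{15}$, which is what you reached. The proof has to use that mark counts are integers, and that is exactly the step you have not supplied.

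\textbf{How the paper uses integrality.} Let $m(v)\in\{0,1,2,3\}$ be the number of marks on the edges at $v$; it is at most $3$ by the star cut. Since $\sum_v m(v)=2n$, it suffices to show $\sum_v (m(v)-2)\le 0$, i.e.\ to pay for each vertex with $m=3$ using nearby vertices with $m\le 1$. Take $m(A)=3$, a neighbour $B$, and the other neighbours $C,D$ of $B$, and write $|e|$ for the marks on $e$.
The cut around the edge $AB$ gives $|BC|+|BD|\le 1$.
If exactly one of $BC,BD$ is marked, the $5$-cut around the path through $A$, $B$ and the vertex beyond the unmarked one shows that vertex has $m\le 1$.
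If neither is marked, then either $m(B)=|AB|\le 1$, or $|AB|=2$ and the $5$-cut around $C$--$B$--$D$ gives $m(C)+m(D)\le 3$.
The path cuts through $A$ also give $m(C),m(D)\le 2$, so no two vertices with $m=3$ lie at distance $2$.
Now let $A$ send charge $\tfrac13$ along each incident edge to the vertex so found, sending to $B$ only when $BC,BD$ are unmarked and $|AB|\le 1$. One checks that a vertex with $m(v)\le 1$ receives at most $2-m(v)$. It can receive twice across the same edge only from two adjacent vertices with $m=3$, and that situation forces $m(v)=0$. This yields $n\le V$.

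\textbf{Upper bound.} Your upper bound is fine: the three local checks are correct, and Petersen's theorem supplies the perfect matching in any bridgeless cubic graph. You still need to name an explicit infinite family rather than defer to a search. The paper's torus tiling of two-vertex, two-mark patches is one you can cite.
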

\begin{proof}
We first show that $r_5\geq1$.
We will argue by examining a generic fragment of a 3-regular graph.
Note that since we have the condition that our graph will not admit any cuts of weight 5 or less that cut off any cycles, we know the length of the shortest cycle in the graph is at least 6.

We will frequently refer to the number of marks `around' a vertex, by which we mean marks on the three edges incident to a given vertex. Any mark is around two vertices.
Note that no vertex can have more than 3 marks around it as otherwise a weight 3 fault would cut off more than 3 marks, and thus push out to more than 3 free edges.

\begin{figure}
    \centering
    \scalebox{1}{\input{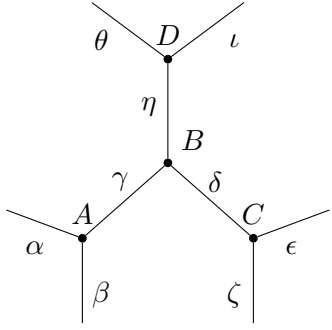}}
    \caption{A generic fragment of a 3-regular graph. Vertices are labeled $A-D$ and edges are labeled with Greek letters.}
    \label{fig:r5-tree}
\end{figure}

In this argument, we make heavy reference to Figure~\ref{fig:r5-tree} as we will need to consider several cases on this diagram.
We will prove that for every vertex with 3 marks around it, we can show that at least one vertex with no more than 1 mark around it can exist.
This will show that the average number of marks around a vertex is no greater than 2, and since each mark is counted twice, this means the number of marks does not exceed the number of vertices, which would show that the vertex ratio is $\geq1$.

In Figure~\ref{fig:r5-tree}, suppose that vertex $A$ has 3 marks around it.

We first show that neither $C$ nor $D$ can have 3 marks around it.
We start with the latter claim. If $C$ (resp. $D$) has 3 marks around it, we can cut $\{\alpha,\beta,\epsilon,\zeta,\eta\}$ (resp. $\{\alpha,\beta,\delta,\theta,\iota\}$), meaning a weight-5 fault propagates to a weight-6 fault on the boundary, a contradiction.

Now we will show that this implies that at least one of $B$, $C$, or $D$ has at most 1 mark around it.
Having done this, we will `group' $A$ with this 0-mark or 1-mark vertex along edge $\gamma$, and group that vertex with $A$ along its edge pointing towards $A$, which is $\gamma$, $\delta$, or $\eta$ for $B$, $C$, and $D$, respectively.

We first consider a specific case, which will be important later.
If $\gamma$ has 0 marks and $B$ has 1 mark around it, we know that it must be on $\delta$ or $\eta$. If it is on $\eta$, then $C$ must have at most one mark around it, as otherwise cutting $\{\alpha,\beta,\epsilon,\zeta,\eta\}$ would cut off at least 6 marks, which is impossible. In this case, group $C$ with $A$.

If $\gamma$ has 1 mark and it is the only mark around $B$, we group $B$ with $A$.
Otherwise, $B$ has at least 2 marks and we know that one of $\delta$ and $\eta$ contains 0 marks, as otherwise cutting $\{\alpha,\beta,\delta,\eta\}$ cuts off at least 5 marks.
Suppose $\delta$ has 0 marks and $\eta$ has more than 0.
Then, either $C$ has at most 1 mark around it or cutting $\{\alpha,\beta,\epsilon,\zeta,\eta\}$ cuts off at least 6 marks, which is impossible.
The case where $\eta$ has 0 marks and $\delta$ has at least one is analogous, except that we end up grouping $D$ with $A$.
If both $\delta$ and $\eta$ have 0 marks, then either one of $C$ and $D$ has 1 or fewer marks and can be grouped with $A$ or cutting $\{\gamma,\epsilon,\zeta,\theta,\iota\}$ cuts off at least 6 marks, which is impossible.

Note that we have found at least one vertex with at most 1 mark and grouped it with $A$ while exploring the graph along the edge $\gamma$ from $A$.
We now show that every vertex with at most 1 mark can be `grouped' to at most one vertex with 3 marks along each of its edges.
As we have already shown that $C$ and $D$ cannot have 3 marks around them, we know that no two vertices with 3 marks can be at a distance of 2 from each other.
We have only left to show that a vertex with at most one mark cannot be grouped along the same edge with two vertices with 3 marks, one at distance 1 and one at distance 2.
As a concrete example, this would require showing that the case where $A$ and $B$ have 3 marks around them and have both been grouped with $C$, where $C$ has 1 mark around it, is impossible.
Having 3 marks around both $A$ and $B$ requires exactly 2 marks on $\gamma$.
The only case where we grouped a vertex with 3 marks around it with an adjacent vertex required exactly one mark on their connecting edge.
Thus, if $B$ was grouped with $C$, then $\delta$ would contain a mark.
The procedure above would result in $D$ being grouped with $A$, not $C$.

This shows that along each edge coming out of a vertex with 3 marks around it, a unique edge of a vertex with 1 mark around it can be identified.
Thus, the total number of vertices with 1 mark around them cannot be smaller than the total number of vertices with 3 marks. This shows that the average number of marks around a vertex will be at most 2, which completes the proof that $r_5\geq1$.

We now show that $r_5\leq1$.
\begin{figure}
    \centering
    \scalebox{1}{\input{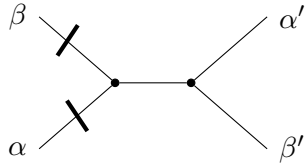}}
    \caption{Shown is a fragment of a graph with 2 vertices, 4 half-edges and 1 full edge, and 2 marks. The vertex ratio is the optimal amount for $t=5$ of $1$. The graph can be tiled in a square grid with many choices for how to wrap the boundary conditions. The half-edge $\alpha$ in one tile should be glued to $\alpha'$ in the next, and analogously for $\beta$ and $\beta'$.}
    \label{fig:r5-patch}
\end{figure}

In Figure~\ref{fig:r5-patch} we see a patch of a graph with 2 vertices, 4 half-edges and 1 full edge, and 2 marks.
This means a graph made of such patches will have a vertex ratio of $1$.
Furthermore, we can tile these patches into a square grid to get an infinite family of graphs of this vertex ratio.
Such a grid made of these patches will produce a square tiling, which we can then place on a torus to fix boundary conditions.
This will not admit any nonlocal cuts of weight 5 or less and all size 5 unit trees on this graph have at most as many marks as leaves.

This proves $r_5=1$.
\end{proof}

We now consider the case of $t=\infty$.
Using some tools from the theory of graph expansion, we can find a decomposition of a \CAT into linearly many degree-3 spiders.
As we will see below, this construction relies on the constructions of a family of expander graphs known as Ramanujan graphs.

\begin{definition}
\label{def:ramanujan}
    Let $G=(V,E)$ be a connected $d$-regular simple graph where $\lambda_i$ are the eigenvalues of its adjacency matrix sorted as $d=\lambda_1\geq\lambda_2\geq\dots\geq\lambda_{|V|}\geq -d$. The graph $G$ is called a \textit{Ramanujan graph} if $\max(\left|\lambda_2\right|,\left|\lambda_3\right|,\dots,\left|\lambda_{|V|}\right|)\leq2\sqrt{d-1}$.
\end{definition}

\begin{definition}
\label{def:cheeger}
    Let $G$ be a graph with vertices $V$ and edges $E$. We call the \textit{Cheeger constant} $h(G)$ of the graph $G$ as the smallest ratio of the number of edges separating a group of vertices. Formally,
    \begin{equation}
        h(G)=\min\limits_{\substack{A\subseteq V\\|A|\leq\frac12|V|}}\frac{|\partial A|}{|A|}
    \end{equation} where $\partial A$ denotes the number of edges connecting a vertex in $A$ to a vertex not in $A$.
\end{definition}

\begin{theorem}[Theorem 2.4 in~\cite{hoory2006expander}]
\label{thm:cheeger}
    Let $G=(V,E)$ be a connected $d$-regular simple graph where $\lambda_i$ are the eigenvalues of its adjacency matrix sorted as $d=\lambda_1\geq\lambda_2\geq\dots\geq\lambda_{|V|}\geq -d$. Then
    \begin{equation}
        \frac{d-\lambda_2}{2}\leq h(G)\leq\sqrt{2d(d-\lambda_2)}.
    \end{equation}
\end{theorem}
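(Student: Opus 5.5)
We would prove the two inequalities of \cref{thm:cheeger} separately, working throughout with the Laplacian $L = dI - A_G$, where $A_G$ is the adjacency matrix. $L$ has eigenvalues $0=\mu_1\le\mu_2\le\dots$ with $\mu_2 = d-\lambda_2$ and constant eigenvector $\mathbf 1$. The basic identity is $x^\top L x = \sum_{\{u,v\}\in E}(x_u-x_v)^2$. Combined with the Courant--Fischer characterisation, it gives $x^\top L x\ge \mu_2\|x\|^2$ for every $x\perp\mathbf 1$.

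\textbf{Lower bound.} Fix $A\subseteq V$ with $|A|\le\frac12|V|$, and write $N=|V|$ and $\bar A=V\setminus A$. We plug in the test vector $x = |\bar A|\mathbf 1_A - |A|\mathbf 1_{\bar A}$, which is orthogonal to $\mathbf 1$. Each cut edge contributes $(|\bar A|+|A|)^2=N^2$ to $x^\top L x$ and every other edge contributes $0$, so $x^\top Lx = N^2|\partial A|$. A direct computation gives $\|x\|^2 = |A||\bar A|N$. Hence
\[
|\partial A|\ \ge\ \mu_2\frac{|A||\bar A|}{N}\ \ge\ \frac{\mu_2}{2}|A|,
\]
using $|\bar A|\ge N/2$. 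Minimising over $A$ gives $h(G)\ge (d-\lambda_2)/2$.

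\textbf{Upper bound (the main work).} We would follow the standard sweep argument.
\begin{enumerate}
\item Let $f$ be an eigenvector for $\mu_2$. Replacing $f$ by $-f$ if necessary, we may assume at most $N/2$ vertices have $f_v>0$.
\item Set $g=\max(f,0)$. For each $v$ with $g_v>0$ we have $(Lf)_v=\mu_2 f_v$, and for $u$ in the support of $g$ we have $(Lg)_u\le (Lf)_u$. Hence $g^\top L g\le \mu_2\|g\|^2$, so the Rayleigh quotient of $g$ is at most $\mu_2$.
\item Order the support of $g$ by decreasing value and consider the level sets $S_s=\{v: g_v^2> s\}$. Each satisfies $|S_s|\le N/2$, so $|\partial S_s|\ge h|S_s|$.
\item By the co-area formula, $\sum_{\{u,v\}\in E}|g_u^2-g_v^2| = \int_0^\infty |\partial S_s|\,ds \ge h\int_0^\infty|S_s|\,ds = h\|g\|^2$.
\item For the upper side, Cauchy--Schwarz gives
\[
\sum_{\{u,v\}\in E}|g_u-g_v||g_u+g_v| \le \sqrt{g^\top L g}\,\sqrt{\sum_{\{u,v\}\in E}(g_u+g_v)^2}\le \sqrt{\mu_2}\,\|g\|\cdot\sqrt{2d}\,\|g\|,
\]
where the last step uses $(a+b)^2\le 2a^2+2b^2$ and $d$-regularity.
\end{enumerate}
Combining steps 4 and 5 gives $h\le\sqrt{2d\mu_2}=\sqrt{2d(d-\lambda_2)}$.

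The main obstacle is step 2: the truncation must not increase the Rayleigh quotient. This needs the entrywise comparison on the support of $g$, together with the observation that edges between $\operatorname{supp} g$ and its complement only help. Steps 3 and 4 also need care, since the level sets must all have size at most $N/2$ for the definition of $h(G)$ to apply. Connectedness ensures $\mu_2>0$ and $f\ne 0$ is non-constant, so $g\neq 0$ after choosing the sign.
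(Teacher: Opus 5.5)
Your proof is correct. The lower bound follows from the test vector $|\bar A|\mathbf 1_A-|A|\mathbf 1_{\bar A}$; the upper bound follows from truncating the $\mu_2$-eigenvector to its positive part, with the sign chosen so every level set has size at most $|V|/2$, then applying the co-area identity and Cauchy--Schwarz. The paper gives no proof of its own and simply cites Hoory--Linial--Wigderson, and your argument is essentially the standard proof found there.
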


\begin{corollary}
\label{cor:ramanujan_cheeger}
    If $G$ is a $d$-regular Ramanujan graph, we have
    \begin{equation}
        \frac{d}{2}-\sqrt{d-1}\leq h(G).
    \end{equation}
\end{corollary}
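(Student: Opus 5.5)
The corollary follows directly from \cref{thm:cheeger}, so the plan is to specialise its lower bound to Ramanujan graphs. \cref{thm:cheeger} gives $h(G)\geq \frac{d-\lambda_2}{2}$ for any connected $d$-regular simple graph. The only task is therefore to bound $\lambda_2$ from above using \cref{def:ramanujan}.

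For a $d$-regular Ramanujan graph, \cref{def:ramanujan} states that $\max_{i\geq 2}|\lambda_i|\leq 2\sqrt{d-1}$. In particular, $\lambda_2\leq|\lambda_2|\leq 2\sqrt{d-1}$. We substitute this into the left inequality of \cref{thm:cheeger}:
\[
h(G)\;\geq\;\frac{d-\lambda_2}{2}\;\geq\;\frac{d-2\sqrt{d-1}}{2}\;=\;\frac d2-\sqrt{d-1}.
\]

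We also need to check that \cref{thm:cheeger} applies. \cref{def:ramanujan} already requires $G$ to be connected, $d$-regular and simple, which are exactly the hypotheses of \cref{thm:cheeger}. No step here is a genuine obstacle. The only care needed is that the Ramanujan condition is stated on absolute values, so we must use the trivial inequality $\lambda_2\leq|\lambda_2|$ rather than assume $\lambda_2\geq 0$.
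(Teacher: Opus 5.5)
Your proposal is correct and follows the paper's argument. The paper also obtains the corollary by combining the lower bound $h(G)\geq\frac{d-\lambda_2}{2}$ from \cref{thm:cheeger} with $\lambda_2\leq|\lambda_2|\leq 2\sqrt{d-1}$ from \cref{def:ramanujan}.
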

\begin{proof}
    This is an immediate consequence of Definition~\ref{def:ramanujan} and Theorem~\ref{thm:cheeger}
\end{proof}

By using Corollary~\ref{cor:ramanujan_cheeger}, we can consider faults on a $d$-regular Ramanujan graph $G$.
We know that any undetectable fault of weight $w$ will cut the graph and the smaller resulting component will contain at most $\frac{w}{h(G)}$ vertices.
This means that if we add at most $h(G)$ free edges to each vertex of $G$ and add no other marks on the edges of $G$, then no undetectable fault can be pushed out to a higher-weight fault.

\begin{theorem}[Morgenstern~\cite{morgenstern1994existence}]
\label{thm:ramanujan_construction}
There exist explicit constructions of regular Ramanujan graphs. Such graphs have $q^{6k}-q^{2k}$ vertices each of degree $q+1$ where $k$ is a positive integer and $q$ is a prime power.
\end{theorem}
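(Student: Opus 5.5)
This is Morgenstern's theorem, which the paper cites rather than reproves. The plan below reconstructs the argument along the lines of Lubotzky--Phillips--Sarnak, transported to function fields, and takes the deep automorphic input as a black box.

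\textbf{Construction.} Fix a prime power $q$ and work over $\mathbb{F}_q[x]$.
\begin{enumerate}
\item Choose a definite quaternion algebra $H$ over $\mathbb{F}_q(x)$ that is split at a chosen degree-one place $\infty'$. Take an order $\mathcal{O}\subseteq H$ together with a norm form $N$.
\item Fix an irreducible polynomial $g\in\mathbb{F}_q[x]$ of even degree $2k$, so that $\mathbb{F}_q[x]/(g)\cong\mathbb{F}_{q^{2k}}$.
\item Count the solutions of $N(\xi)=\text{const}\cdot x$ in $\mathcal{O}$, up to units and scalars. There should be exactly $q+1$ classes; I would check this directly from the explicit norm form, using the $q+1$ points of $\mathbb{P}^1(\mathbb{F}_q)$.
\item Reduce these $q+1$ elements modulo $g$. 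They land in $PGL_2(\mathbb{F}_{q^{2k}})$ (or in $PSL_2$, according to a quadratic-residue condition). The resulting symmetric set generates a $(q+1)$-regular Cayley graph.
\end{enumerate}
Since $|PGL_2(\mathbb{F}_Q)|=Q^3-Q$, setting $Q=q^{2k}$ gives exactly $q^{6k}-q^{2k}$ vertices. This matches the statement.

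\textbf{Structural facts.}
\begin{enumerate}
\item The group generated by the $\xi$'s acts simply transitively on the Bruhat--Tits tree of $PGL_2(\mathbb{F}_q((1/x)))$, which is $(q+1)$-regular. I would prove this from unique factorisation of norm-$x^m$ elements in $\mathcal{O}$, the analogue of Jacobi's four-squares argument in LPS.
\item The Cayley graph is then the quotient of this tree by the congruence subgroup $\Gamma(g)$.
\item Connectivity follows from strong approximation: the image of $\Gamma$ modulo $g$ is all of $PSL_2$ or $PGL_2$. For the $PGL_2$ case, non-bipartiteness versus bipartiteness is decided by whether $x$ is a square modulo $g$. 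I would choose the parity of $\deg g$ to land in the intended case.
\end{enumerate}

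\textbf{Spectral bound (main obstacle).} The non-trivial adjacency eigenvalues of the quotient graph are the Hecke eigenvalues $T_x$ on automorphic forms for $H^\times$ of level $g$. By Jacquet--Langlands, these correspond to cuspidal automorphic representations of $GL_2$ over $\mathbb{F}_q(x)$. Drinfeld's proof of the Ramanujan--Petersson conjecture for $GL_2$ over function fields then gives $|\lambda|\le 2\sqrt q=2\sqrt{d-1}$ for every eigenvalue other than $\pm(q+1)$, which is exactly \cref{def:ramanujan}.

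I would not attempt to reprove Drinfeld's theorem. The care goes into two checks:
\begin{itemize}
\item that the Jacquet--Langlands transfer applies at level $g$;
\item that the trivial (one-dimensional) representations account for exactly the eigenvalues $\pm(q+1)$ and nothing else.
\end{itemize}
The rest is bookkeeping that mirrors LPS verbatim. For the purposes of \cref{cor:constant-CNOT-overhead}, we simply specialise to $q=9$, which gives degree $d=10$.
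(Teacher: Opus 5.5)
The paper gives no proof of this theorem; it only cites Morgenstern. Your outline is Morgenstern's route: a definite quaternion algebra over $\mathbb{F}_q(x)$, a congruence quotient of a $(q+1)$-regular tree realised as a Cayley graph on $\mathrm{PGL}_2$ or $\mathrm{PSL}_2$ of $\mathbb{F}_{q^{2k}}$, and Drinfeld's Ramanujan--Petersson theorem transported via Jacquet--Langlands. The gap is in how you match this to the statement as written.

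For odd $q$ you get $q^{6k}-q^{2k}=|\mathrm{PGL}_2(\mathbb{F}_{q^{2k}})|$ vertices only when the generators lie outside $\mathrm{PSL}_2$, i.e.\ when $x$ is a non-square modulo $g$. Otherwise the graph lives on $\mathrm{PSL}_2(\mathbb{F}_{q^{2k}})$ and has $(q^{6k}-q^{2k})/2$ vertices. Your fix, ``choose the parity of $\deg g$'', is not available, because you already fixed $\deg g=2k$. Also, PSL-versus-PGL and non-bipartite-versus-bipartite are one dichotomy, not two. The correct knob is the choice of $g$ itself. Here $x \bmod g$ is a root $\alpha$ of $g$, and $\alpha$ is a square in $\mathbb{F}_{q^{2k}}$ iff its norm, which is $g(0)$ for monic $g$ of even degree, is a square in $\mathbb{F}_q$. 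So you can take, e.g., the minimal polynomial of a generator of $\mathbb{F}_{q^{2k}}^\times$.

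The more serious problem is that this $\mathrm{PGL}_2$ case is exactly the bipartite one. Every generator lies in the non-trivial coset of $\mathrm{PSL}_2$, so the function equal to $+1$ on $\mathrm{PSL}_2$ and $-1$ on its coset is an eigenvector with eigenvalue $-(q+1)$. Drinfeld gives $|\lambda|\le 2\sqrt q$ for all eigenvalues other than $\pm(q+1)$, which is the standard notion of Ramanujan. However, \cref{def:ramanujan} also bounds $|\lambda_{|V|}|$ and therefore excludes bipartite graphs. Your assertion that the bound is exactly \cref{def:ramanujan} is the step that fails.

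Consequently, for odd $q$ your construction yields either the stated vertex count or \cref{def:ramanujan}, never both. This covers $q=9$ in \cref{thm:vertex-ratio-infinity} and $q=7$ in the proof of \cref{cor:constant-CNOT-overhead}, the values the paper actually uses. For even $q$ there is no conflict: $\mathrm{PGL}_2(\mathbb{F}_{q^{2k}})=\mathrm{PSL}_2(\mathbb{F}_{q^{2k}})$ is simple of order $q^{6k}-q^{2k}$, so the Cayley graph is non-bipartite. In that case, though, your norm-form and quadratic-residue language must be replaced by Morgenstern's separate characteristic-2 presentation.

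For odd $q$ you have two honest options:
\begin{enumerate}
\item Prove the statement for the standard definition, and note that its only use, \cref{cor:ramanujan_cheeger} via \cref{thm:cheeger}, needs just $\lambda_2\le 2\sqrt{d-1}$, which the bipartite graph satisfies.
\item Accept the non-bipartite $\mathrm{PSL}_2$ graph with half as many vertices.
\end{enumerate}

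Separately, the tree must be that of $\mathrm{PGL}_2(\mathbb{F}_q((x)))$, the completion at the split place where $\mathrm{const}\cdot x$ is a uniformiser. Over $\mathbb{F}_q((1/x))$ a definite $H$ remains a division algebra, so $\mathrm{PGL}_2(\mathbb{F}_q((1/x)))$ gives no tree action for your group.
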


\begin{theorem}
\label{thm:linear_cat_graph}
    There exist fault-equivalent decompositions of \CAT states into three-legged spiders where the \CAT state has $\lfloor\frac{q+1}2-\sqrt{q}\rfloor(q^{6k}-q^{2k})$ qubits, where $k$ is a positive integer and $q$ is a prime power.
    The number of three-legged spiders used in this decomposition is equal to $(q^{6k}-q^{2k})\times c$, where $c$ is the number of three-legged spiders that it takes to fault-equivalently decompose a \CAT state with $\lfloor\frac{3q+3}2-\sqrt{q}\rfloor$ qubits.
\end{theorem}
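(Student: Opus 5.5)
Take a Ramanujan graph $G=(V,E)$ from \cref{thm:ramanujan_construction}, with $d=q+1$ and $|V|=q^{6k}-q^{2k}$. By \cref{cor:ramanujan_cheeger}, $h(G)\geq \frac{q+1}{2}-\sqrt{q}$. Set $m=\lfloor\frac{q+1}2-\sqrt{q}\rfloor\leq h(G)$.

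First build an intermediate ZX-diagram $D_0$. Replace each vertex $v$ by a $Z$-spider of arity $d+m$: $d$ legs follow the edges of $G$ and $m$ legs are free outputs. This gives $m|V|$ outputs in total. Since $G$ is connected, fusing all spiders shows $D_0$ equals the \CAT state on $m|V|$ qubits. Note $d+m = q+1+\lfloor\frac{q+1}2-\sqrt q\rfloor=\lfloor\frac{3q+3}{2}-\sqrt q\rfloor$, which matches the size of the local \CAT states in the statement.

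Next show $D_0$ is fault-equivalent to the \CAT state, repeating the proof of \cref{prop:undetectable-fault-is-cut} with high-arity $Z$-spiders in place of 3-ary ones.
\begin{itemize}
\item $Z$-type faults can be pushed to a single output, so only $X$-faults matter.
\item An undetectable $X$-fault supported on internal edges is, up to free-edge faults, exactly the boundary $\partial A$ of some vertex set $A$. Faults on output legs can be ignored: they map to themselves with the same weight.
\item Take $|A|\leq|V|/2$ and write $w=|\partial A|$. Firing the spiders of $A$ pushes the fault onto the $m|A|$ outputs of $A$.
\item By the Cheeger bound, $m|A|\leq h(G)|A|\leq |\partial A| = w$, so the pushed fault has weight at most $w$.
\item A mixed fault of internal weight $w_1$ and output weight $w_2$ is handled the same way, giving total pushed weight at most $w_1+w_2$.
\end{itemize}
The converse direction of fault-equivalence is trivial, since any fault on the single \CAT spider appears on the outputs of $D_0$. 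Hence $D_0\FaultEq\ket{\CAT^{m|V|}}$.

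Then replace each vertex spider, itself a $(d+m)$-qubit \CAT state, by a fault-equivalent decomposition into $c$ three-legged spiders. This uses the recursive construction of \cref{sec:recursive-cat} or any fully fault-equivalent one. Fault equivalence is compositional (cf.~\cite{rodatz2025fault}), so the composite diagram stays fault-equivalent to the \CAT state and has $c|V|$ three-legged spiders. The remaining wires are plain identities and need no change.

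The main obstacle is the Cheeger step. We need a fault on internal edges to count as exactly a cut in this high-arity setting, and we must use the smaller side $A$; taking the complement is free via the all-$X$ stabiliser. A second point needs care: the local decompositions must be fault-equivalent for \emph{all} weights, not just $t$-FE. I would first check that the recursive \FEwcat construction with $w$ equal to the number of qubits gives full fault-equivalence.
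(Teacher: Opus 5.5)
Your proposal is correct and takes essentially the same route as the paper. You attach $m=\lfloor\frac{q+1}2-\sqrt q\rfloor$ outputs to each vertex of the $(q+1)$-regular Ramanujan graph, reduce to $X$-faults forming a cut $\partial A$ with $|A|\le |V|/2$, bound the pushed-out weight by $m|A|\le h(G)|A|\le|\partial A|$, and then replace each $\lfloor\frac{3q+3}2-\sqrt q\rfloor$-legged spider by its $c$-spider fault-equivalent decomposition using compositionality. Your extra care about mixed faults, choosing the smaller side, and needing fully (not merely $t$-)fault-equivalent local decompositions is more explicit than the paper, and the last point is already built into the definition of $c$.
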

\begin{proof}
    By Theorem~\ref{thm:ramanujan_construction}, we first construct any $(q+1)$-regular Ramanujan graph with $q^{6k}-q^{2k}$ vertices.
    Next, we replace each vertex with a $Z$-spider and we add $\lfloor\frac{q+1}2-\sqrt{q}\rfloor$ free edges, meaning now each vertex is of degree $\lfloor\frac{3q+3}2-\sqrt{q}\rfloor$.
    We note that any atomic fault on the edges of this graph will either be on a free edge, in which case we can find an analogous fault on a free edge of the \CAT state, or it will be on an internal edge.
    Any $Z$ component of a fault can be pushed out to at most one free edge, which might also contain an $X$ if the fault is not a pure $Z$ fault.
    Thus, we merely consider internal $X$ faults.
    
    It suffices to consider $X$ faults on a set of edges that cut the graph.
    If the $X$ faults have total weight $w$, then we know by Definition~\ref{def:cheeger} that the smaller side of the cut will contain at most $\frac{w}{h(G)}$ vertices.
    Since each vertex is connected to $\lfloor\frac{q+1}2-\sqrt{q}\rfloor$ free edges and by Corollary~\ref{cor:ramanujan_cheeger}, $\frac{q+1}2-\sqrt{q}\leq h(G)$, we have that the smaller side of the cut will contain at most $w$ free edges, meaning the fault will be pushed out to fault of weight at most $w$ around a \CAT state.

    Having shown that this construction is fault-equivalent to a \CAT state, we now fault-equivalently rewrite the $(\lfloor\frac{3q+3}2-\sqrt{q}\rfloor)$-legged $Z$ spider at each vertex into $c$ three-legged spiders, completing the proof.
\end{proof}

\begin{theorem}
For $t=\infty$, we have $r_\infty\coloneqq \liminf\limits_{n\to\infty}\sup_tr_t^n \leq 12$.
\label{thm:vertex-ratio-infinity}
\end{theorem}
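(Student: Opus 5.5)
Apply \cref{thm:linear_cat_graph} with the parameters sketched after \cref{thm:vertex-ratio}. Take $q=9$, a prime power, so Morgenstern's graphs from \cref{thm:ramanujan_construction} are $10$-regular Ramanujan graphs on $V_k=9^{6k}-9^{2k}$ vertices. By \cref{cor:ramanujan_cheeger}, their Cheeger constant satisfies $h(G)\geq 5-3=2$, so $\lfloor\frac{q+1}2-\sqrt q\rfloor=2$.

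Attaching $2$ free edges to every vertex therefore gives a $Z$-spider diagram with $n_k=2V_k$ outputs, each vertex of degree $12$. The argument in the proof of \cref{thm:linear_cat_graph} shows this diagram is fault-equivalent to the $n_k$-qubit \CAT state for \emph{every} fault weight $w$. An undetectable $X$-fault of weight $w$ is a cut, and the smaller side has at most $w/h(G)\leq w/2$ vertices, hence at most $w$ free edges. $Z$-components push out to a single output. Because the bound does not depend on $t$, the resulting family is $t$-robust simultaneously for all $t$.

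Next, replace each $12$-legged spider by a fault-equivalent network of $3$-legged spiders. A $12$-legged spider is a $12$-qubit \CAT state. By the observation after \cref{thm:vertex-ratio}, it can be written fault-equivalently with $c=24$ three-legged spiders; concretely, a $12$-output $Z$-graph that is fault-equivalent at every weight up to $12$ (faults of weight $\geq 6$ are trivial by complementing with the all-$X$ stabiliser). I expect the main obstacle to be exhibiting or justifying this gadget with exactly $24$ vertices. I would first try an explicit construction in the style of the recursive \FEwcat decomposition or a small $t=6$ robust marked graph, and check that $t=6$ suffices: a cut of weight $w$ on a $12$-output state is harmless once $w\geq 6$, since one side then has at most $6\leq w$ outputs. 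Fault-equivalent rewrites compose, so substituting the gadget at every vertex preserves fault equivalence for all weights. The final diagram is a $Z$-graph with $24V_k$ three-legged spiders and $n_k=2V_k$ qubits, and its underlying marked graph is $t$-robust for all $t$. Hence $\sup_t r_t^{n_k}\leq 24V_k/(2V_k)=12$.

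Finally, pass to the $\liminf$. Since $n_k\to\infty$ along this subsequence, $\liminf_{n\to\infty}\sup_t r_t^n\leq\lim_k\sup_t r_t^{n_k}\leq 12$, which is the claim. No interpolation to general $n$ is needed, because only a $\liminf$ is asserted.
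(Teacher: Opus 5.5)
Your route is the paper's: $q=9$ Morgenstern graphs, $h(G)\geq 2$ via \cref{cor:ramanujan_cheeger}, two outputs per vertex, a fault-equivalent $12$-qubit \CAT gadget substituted at every vertex, and restriction to the subsequence $n_k=2V_k$. The gap is in the gadget and in the vertex count. You ask for a $12$-output $Z$-graph built from $c=24$ three-legged spiders, and you then count $24V_k$ spiders in the glued diagram. Each output sits on its own boundary spider (\cref{lem:unique-boundary}), so such a gadget has only $12$ internal spiders. Its underlying marked graph would therefore be a $5$-robust cubic graph with $12$ vertices and $12$ marks (weights $\geq 6$ are free, as you observe). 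That is exactly the case $(n,t)=(12,5)$ at the predicted optimum of $7$ flags, which the paper reports as unattainable by exhaustive search. So the step you flag as the main obstacle cannot be carried out as stated. The decomposition the paper uses has $14$ internal and $12$ boundary spiders, i.e.\ $c=26$ in the notation of \cref{thm:linear_cat_graph}. One explicit choice is the Heawood graph with $12$ of its $21$ edges marked so that every vertex keeps an unmarked edge.

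The number $12$ still comes out only because a second miscount cancels the first. After gluing, $10$ of the $12$ boundary spiders of each gadget are joined along Ramanujan edges to boundary spiders of neighbouring gadgets, so they become internal vertices. Only $2$ keep an output and become marks. The marked graph therefore has $(c-2)V_k$ vertices and $2V_k$ marks, not $cV_k$ vertices. With $c=26$ this gives $(14+10)V_k=24V_k$ vertices and ratio $12$, which is the paper's count. Your formula $24V_k/(2V_k)$ instead treats the $2V_k$ output spiders as vertices. With the correct gadget and this count, the rest of your argument goes through unchanged: the cut bound, compositionality of fault equivalence, and the $\liminf$ along $n_k$.
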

\begin{proof}
    By Theorem~\ref{thm:linear_cat_graph}, we see that a graph with the right number of vertices will have $c$ three-legged spiders for every $\lfloor\frac{q+1}2-\sqrt{q}\rfloor$ free edges, where $c$ is the number of three-legged spiders that it takes to fault-equivalently decompose a \CAT state with $\lfloor\frac{3q+3}2-\sqrt{q}\rfloor$ qubits. Of these, $\lfloor\frac{q+1}2-\sqrt{q}\rfloor$ vertices will be connected to free edges and the remaining $c-\lfloor\frac{q+1}2-\sqrt{q}\rfloor$ will form a 3-regular graph.

    It suffices to choose an appropriate value for $q$, where $q$ is a prime power.
    If we choose $q=9$, we get $\lfloor\frac{q+1}2-\sqrt{q}\rfloor=2$, so every vertex of degree 10 gets an additional 2 free edges, turning it into a 12-legged spider.
    A 12-qubit \CAT state can be fault-tolerantly decomposed into a graph with 14 internal vertices and 12 boundary vertices.
    Of these 26 vertices, 2 correspond to free edges added to the vertices in the Ramanujan graph, leaving us with a rewrite containing 24 internal three-legged spiders at the location of each vertex in the Ramanujan graph. This proves that for particular values of $n$, where $n=2(9^{6k}-9^{2k})$ and $k$ is a positive integer, we can fault-equivalently rewrite an $n$-qubit \CAT state with a vertex ratio of at most $\frac{24}2=12$.
\end{proof}

\begin{proof}[Proof of \cref{cor:constant-CNOT-overhead}]
  Using the above construction, we note that degree-$(q+1)$ Ramanujan graphs exist with $2(q^{6k}-q^{2k})$ vertices where $q$ is a prime power and $k$ is a natural number. We also observe the fact that the ratio between these values for some $k$ and $k+1$ will never be larger than $9^6+9^2=531522$.
  This implies that if we wish to construct an $n$-qubit \CAT state for any $n$, we can `round up' $n$ to the nearest value of the form $2(9^{6k}-9^{2k})$, which will be an increase of at most a constant factor.
  By our construction, the number of degree-10 vertices in the Ramanujan graph is proportional to the number of three-legged spiders required to express it.
  To get the best possible factor, we can use an 8-regular Ramanujan graph, where the rounding up factor will be bounded by $7^6+7^2=117698$, where one output is added to each vertex, rewriting it into a 9-qubit \CAT state expressed using 14 vertices and 1 mark, for a worst-case vertex ratio bound of 1647772, a constant!
  This proves any \CAT state can be constructed fault-tolerantly in $O(n)$ resources such as gates, flags, and ancillae.

  However, since nearly all sufficiently large graphs are expanders, the true value of $r_\infty^n$ is likely to be quite close to $r_\infty$ for nearly all $n$, so any large constants merely provide provable worst-case values. 
\end{proof}

\begin{proof}[Proof of \cref{thm:depth-4}]
Our strategy will be to turn every internal and boundary vertex into a 3-qubit \CAT state.
Notice that this means we are unpacking the marks into $n$ degree-3 vertices.
We will then be able to `glue' together output edges of these vertices using Bell basis measurements.
It takes 2 CNOTs to make a 3-qubit cat state and we need 1 more CNOT per measurement.
Any output edge does not require any `gluing'.
However, instead of gluing vertices together, we can initialize a pair as a Bell state, provided that the qubits have not been used so far.
In other words, we want to choose as large a matching as possible on our graph.
Suppose that the largest number of vertices we can match is $A$ vertices out of $V+n$ total, where $V=r_t n$.
Those vertices will have at most $2A$ remaining free edge ends.
It must be the case that all of the internal edges of the remaining vertices connect to our $A$ matched vertices, since if any did not, we would be able to add it to our matching.
The remaining $V+n-A$ vertices have $3V+3n-3A$ ends, of which at most $n$ are outputs, leaving $3V+2n-3A$ ends which must connect to our $2A$ free ends.
This gives us the bound $2A\geq3V+2n-3A$ so $A\geq\frac{3V+2n}5$.

Our graph has $3V+3n$ edge ends of which $3V+2n$ are internal, so it has $\frac{3V}2+n$ internal edges that must be linked up.
Normally, we would require $2V+2n$ CNOTS to make the 3-qubit \CAT states and another $\frac{3V}2+n$ to link the edges for $\frac{7V}2+3n$ total.
We would also require $3V+3n$ qubits total, of which $3V+2n$ would be ancillae.
However, every matched pair of our $A$ vertices saves us 2 CNOTs and 2 ancilla qubits, which yields a grand total of $\frac{7V}2+3n-\frac{3V+2n}{5}=\frac{29V+26n}{10}=\left(\frac{29r_t+26}{10}\right)n$ CNOTs and $3V+2n-\frac{3V+2n}{5}=\frac{12V+8n}{5}=\left(\frac{12r_t+8}5\right)n$ ancilla qubits.

Notice that we only need to go to these extremes to achieve the minimal depth of 3.
There is a very substantial tradeoff between depth and ancillae and CNOTs.
\end{proof}

\section{Conjecture on finding vertex ratios for larger t}\label{app:vertex-ratio-conjecture}

As we increase $t$, we can consider various optimal strategies for constructing a compact graph with many marks. If we consider subgraphs that contain no marks and for which every outgoing edge contains a mark, we find that making such subgraphs trees is optimal for the vertex ratio.
For the purposes of this section, we will imagine attaching to these trees half of each of their outgoing edges, and putting half a mark on each such half-edge.
This will allow us to glue trees by connecting two half-edges and letting their two half-marks merge into a single mark.
Let $k$-trees be such trees with $k$ half-edges and $k$ half-marks. Since our graph is always 3-regular, the tree itself will always contain $k-2$ vertices and one fewer full edges between them.
If we were to connect an $A$-tree to a $B$-tree, we find that cutting just outside all of the marks of our trees will cut at most $A+B-2$ edges including $A+B-1$ marks, which means this only is acceptable when $t\leq A+B-3$ or $t+3\leq A+B$.
In practice, we find that allowing only values of $A$ and $B$ that are as close to $\frac{t+3}2$ as possible yields graphs with the best vertex ratio.
Informally, this is because if we had a small value of $A$ and a large value of $B$, then when we connected an $A$-tree to a $B$-tree, we would need to connect trees larger than $A$ to the leaves of the $B$-tree close to where the first connection was made.

This evidence suggests that the best vertex ratio is obtained by connecting $A$-trees and $B$-trees in a $B:A$ ratio, while setting $A=\left\lfloor\frac{t+3}2\right\rfloor$ and $B=\left\lceil\frac{t+3}2\right\rceil$.
This yields a conjectured optimal vertex ratio of
\begin{equation}
\frac{2(AB-B-A)}{AB}=2-\frac{2A+2B}{AB}=2-\frac{2(t+3)}{\left\lfloor\frac{t+3}2\right\rfloor\left\lceil\frac{t+3}2\right\rceil}.
\end{equation}
This quantity is equal to $2-\frac{8}{t+3}$ when $t$ is odd and $2-\frac{8(t+3)}{(t+2)(t+4)}$ when $t$ is even.
From this formula, we also conjecture that $r_+\coloneqq\sup_t r_t=2$, which is not the same quantity as $r_\infty$. The quantity $r_+$ is a vertex ratio upper bound for arbitrary large but finite values of $t$, whereas $r_\infty$ is such an upper bound for infinite $t$, effectively limiting the size of any graph that can be constructed for a given $t$.

We see that the example constructions of optimal graph families shown above exactly match this pattern.
When constructing such patterns for larger $t$, it becomes more difficult to construct $t$-robust graphs and not accidentally admit any nonlocal cuts.
We conjecture that one solution to the tiling of such $A$- and $B$-trees involves using a tiling of a hyperbolic surface with $A$-gons and $B$-gons, and assigning the gluing of the trees in accordance with the gluing of the respective edges.
As long as the resulting graph is $t$-robust and uses the correct ratio of the two tree types, we conjecture it will have an optimal vertex ratio.

\section{Details on constructing optimal \CAT states in practice}\label{app:cat-state-gen}

We give here more details on the procedures described in \cref{sec:cat-state-generation} regarding how to construct optimal marked graphs in practice.

\subsection{Finding nonlocal cuts}\label{subsec:nonlocal-cut-SAT}

Let $G = (V, E)$ be a 3-regular graph.
We formulate the search for a nonlocal $t$-cut as a SAT problem.
To do so, we introduce the following sets of boolean variables:
\begin{description}
    \item[Partition variables:] $x_u$ for all $u \in V$: True if vertex $u$ is in partition $A$, and False if $u$ is in partition $B$.
    \item[Witness variables:] $a_u$ and $b_u$ for all $u \in V$: True if $u$ is part of the witness subset of partition $A$ or partition $B$, respectively.
    \item[Cut edge variables:] $d_{uv}$ for all $(u, v) \in E$: True if the edge crosses the cut.
\end{description}

\subsubsection{Cut Size Constraints}
To measure the size of the cut, we identify edges whose endpoints belong to different partitions.
For each edge $(u, v) \in E$, we enforce the implication $(x_u \neq x_v) \implies d_{uv}$ using:
\[
  (\neg x_u \lor x_v \lor d_{uv}) \land (x_u \lor \neg x_v \lor d_{uv})
\]
We then bound the total number of cut edges by applying a cardinality constraint over all $d_{uv}$ variables:
\[\sum_{(u, v) \in E} d_{uv} \leq t\]
\textit{(Note: In practice, this is translated into CNF using standard cardinality encoding methods provided by PySAT~\cite{imms-sat18, itk-sat24}).}

\subsubsection{Witness Subset Constraints for Partition A}
We require partition $A$ to contain a non-empty subset of vertices where each vertex has at least two neighbors in the same subset.
First, we ensure the subset is non-empty:
\[\bigvee_{u \in V} a_u\]
Next, we ensure that any vertex in the witness subset actually belongs to partition $A$:
\[\neg a_u \lor x_u \quad \forall u \in V\]
Finally, we enforce the degree constraint.
Because $G$ is 3-regular, every vertex $u$ has exactly three neighbors, say $n_1, n_2, n_3$.
If $u$ is in the witness set ($a_u$ is True), at most one of its neighbors can be excluded from the set.
This is equivalent to stating that any pair of $u$'s neighbors must contain at least one vertex also in the witness set.
We add the following clauses for every $u \in V$:
\begin{gather*}
    \neg a_u \lor a_{n_1} \lor a_{n_2}\\
    \neg a_u \lor a_{n_1} \lor a_{n_3}\\
    \neg a_u \lor a_{n_2} \lor a_{n_3}\\
\end{gather*}

We apply mirror constraints to ensure partition $B$ also contains a valid, non-empty witness subset.

\subsubsection{Symmetry Breaking}
To halve the search space, we fix the partition assignment of an arbitrary starting vertex $v_0 \in V$ to partition $A$ with a unit clause:
\[x_{v_0}\]

\subsection{Finding markings}\label{subsec:markings-SAT}

For checking the necessary conditions for a set of markings $M$ to be valid, we consider the structures imposed by unit trees (\cref{def:unit-cells}) for a given $t$, and in particular, those given for small values of $t$ in \cref{fig:unit-branch}.
We define a binary variable \(x_e \in \{0, 1\}\) for each edge \(e \in E\), where \(x_e = 1\) implies \(e \in M\).
The objective is to maximise the total weight:
\[
    \text{Maximise}\ \ \ \sum_{e \in E} x_e.
\]
This serves as the set of \textit{soft clauses} in our Weighted Conjunctive Normal Form (WCNF) instance.
The \textit{hard clauses} are derived from the local constraints for unit trees.

\paragraph*{Easy constraints for Fixed $t$.}

For specific small values of $t$, there is a straightforward encoding of the problem as standard CNF clauses.
Let \(G = (V, E)\) be the graph on which we are looking for markings, and \(L(G) = (V_L, E_L)\) the line graph of \(G\).
\begin{description}
    \item[For \(T = 3\):] All edges must be marked
    \item[For \(T = 4\):] Every marked edge must have an unmarked neighbor, which is encoded as the following CNF:
    \[
        \bigwedge_{e \in V_L\!} \left( \neg x_e \lor \bigvee_{f \in N(e)} \neg x_f \right)
    \]
    \item[For \(T = 5\):] Every vertex must have an unmarked neighbor, which is encoded as the following CNF:
    \[
        \bigwedge_{v \in V} \ \bigvee_{\substack{e \in E,\\ v \in e}} \neg x_e
    \]
    \item[For \(T = 7\):] Every edge must have an unmarked neighbor, which is encoded as the following CNF:
    \[
        \bigwedge_{e \in V_L\!} \ \bigvee_{f \in N(e)} \neg x_f
    \]
    Note that this constraint forces the unmarked edges to form a total dominating set~\cite{west2001introduction} for the line graph.
    Deciding if the number of vertices in the smallest dominating set is less then a fixed \(K\) is a classical NP-complete decision problem~\cite{garey1979computers}.
\end{description}

\paragraph*{General constraints for arbitrary $t$}
While there exist nice CNF encodings for given values of $t$, it is not possible to express such marking constraints for arbitrary values of $t$.
To determine the general constraint for a given $t$, we enumerates all subtrees \(S \subset V\) up to size \(|S| \le t-2\).
For each subtree, let \(E(S)^+\) denote the set of edges incident to or contained within \(S\).
We add a cardinality constraint to the SAT solver:
\[
    \sum_{e \in E(S)^+} x_e \le t.
\]
This ensures that no small local cluster contains more than $t$ marks.

\begin{algorithm}[h]
\caption{Heuristic construction of graphs without nonlocal T-cut}
\label{alg:construction}
\begin{algorithmic}[1]
\Require Number of vertices \(n\), Target $t$, Max iterations \(K\)
\Ensure A 3-regular graph \(G\) with no nonlocal $t$-cuts (or \textsc{Failure})

\State \textbf{Initialise:} \(G \gets\) Random 3-regular graph of size \(n\)
\State \textbf{Hyperparameters:} \(\lambda_{\text{thresh}} \gets \frac{10}{3} \cdot \frac{t}{n}\)

\For{\(k = 1\) to \(K\)}
    \State \(g \gets \text{Girth}(G)\)
    \State \(\lambda_2 \gets \text{AlgebraicConnectivity}(G)\)

    \If{\(g > t\) \textbf{and} \(\lambda_2 \ge \lambda_{\text{thresh}}\)}
        \If{\(\textsc{HasNonlocalCut}(G, t) = \text{False}\)}
            \State \Return \(G\)
        \EndIf
    \EndIf

    \State \(G' \gets \text{DoubleEdgeSwap}(G)\)
    \State \(g' \gets \text{Girth}(G')\)
    
    \If{\(g \le t\)} \Comment{\textbf{Phase 1: Girth Optimisation}}
        \If{\(g' \ge g\)}
            \State \(G \gets G'\)
        \EndIf
    \Else \Comment{\textbf{Phase 2: Spectral Optimisation}}
        \If{\(g' > t\)}
            \State \(\lambda_2' \gets \text{AlgebraicConnectivity}(G')\)
            \If{\(\lambda_2' > \lambda_2\)}
                \State \(G \gets G'\)
            \EndIf
        \EndIf
    \EndIf
\EndFor

\State \Return \textsc{Failure}
\end{algorithmic}
\end{algorithm}

\end{document}

%% file: externalize-disable.tex
\tikzexternaldisable  

%% file: figs/prelim/cup.tikz
\begin{tikzpicture}
	\begin{pgfonlayer}{nodelayer}
		\node [style=none] (12) at (0, -0.75) {};
		\node [style=none] (14) at (0, 0.75) {};
		\node [style=none] (15) at (0, 0.75) {};
		\node [style=none] (16) at (0, -0.75) {};
	\end{pgfonlayer}
	\begin{pgfonlayer}{edgelayer}
		\draw [bend left=90, looseness=1.75] (16.center) to (15.center);
	\end{pgfonlayer}
\end{tikzpicture}

%% file: figs/prelim/cap.tikz
\begin{tikzpicture}
	\begin{pgfonlayer}{nodelayer}
		\node [style=none] (12) at (0, -0.75) {};
		\node [style=none] (14) at (0, 0.75) {};
		\node [style=none] (15) at (0, 0.75) {};
		\node [style=none] (16) at (0, -0.75) {};
	\end{pgfonlayer}
	\begin{pgfonlayer}{edgelayer}
		\draw [bend right=90, looseness=1.75] (16.center) to (15.center);
	\end{pgfonlayer}
\end{tikzpicture}